\title{Query Learning of Derived \texorpdfstring{$\omega$}{omega}-Tree Languages in Polynomial Time}
\author[D. Angluin]{Dana Angluin\rsuper{a}}
\author[T. Antonopoulos]{Timos Antonopoulos\rsuper{a}}
\address{\lsuper{a}Yale University, New Haven, CT, USA}
\email{\{dana.angluin,timos.antonopoulos\}@yale.edu}
\author[D. Fisman]{Dana Fisman\rsuper{b}}
\address{\lsuper{b}Ben-Gurion University, Be'er Sheva, Israel}
\email{dana@cs.bgu.ac.il}
\newcommand{\concept}[1]{\emph{#1}}
\newcommand{\commentout}[1]{{}}
\newcommand{\class}[1]{\ensuremath{\mathbb{#1}}}
\newcommand{\dollar}{{\$}}
\newcommand{\ldollar}{\ensuremath{L_{\$}}}
\newcommand{\lstar}{\ensuremath{{L^*}}}
\DeclareMathOperator{\dfw}{\class{DFW}}
\DeclareMathOperator{\nfw}{\class{NFW}}
\DeclareMathOperator{\dbw}{\class{DBW}}
\DeclareMathOperator{\dcw}{\class{DCW}}
\DeclareMathOperator{\dpw}{\class{DPW}}
\DeclareMathOperator{\dwbw}{\class{DwBW}}
\DeclareMathOperator{\dwcw}{\class{DwCW}}
\DeclareMathOperator{\dwpw}{\class{DwPW}}
\DeclareMathOperator{\dbwdcw}{\dbw\cap\dcw}
\DeclareMathOperator{\nbw}{\class{NBW}}
\DeclareMathOperator{\npw}{\class{NPW}}
\DeclareMathOperator{\dbt}{\class{DBT}}
\DeclareMathOperator{\nbt}{\class{NBT}}
\DeclareMathOperator{\npt}{\class{NPT}}
\DeclareMathOperator{\C}{\class{C}}
\newcommand{\mtd}{M^{T,d}}
\newcommand{\spfunc}[1]{\ensuremath{\textsl{#1}}}
\DeclareMathOperator{\infset}{\spfunc{Inf}}
\DeclareMathOperator{\Trees}{\spfunc{Trees}}
\DeclareMathOperator{\paths}{\spfunc{paths}}
\DeclareMathOperator{\transitions}{\spfunc{transitions}}
\DeclareMathOperator{\tree}{\spfunc{tree}}
\DeclareMathOperator{\acceptor}{\spfunc{acceptor}}
\newcommand{\procname}[1]{\ensuremath{\textsl{#1}}}
\DeclareMathOperator{\Acc}{\procname{Accepted?}}
\DeclareMathOperator{\Nextsymbol}{\procname{Nextsymbol}}
\DeclareMathOperator{\Nextword}{\procname{Nextword}}
\DeclareMathOperator{\Findperiod}{\procname{Findperiod}}
\DeclareMathOperator{\Findprefix}{\procname{Findprefix}}
\DeclareMathOperator{\Findctrex}{\procname{Findctrex}}
\newcommand{\query}[1]{\textsc{#1}}
\DeclareMathOperator{\MQ}{\query{mq}}
\DeclareMathOperator{\EQ}{\query{eq}}
\DeclareMathOperator{\RSQ}{\query{rsq}}
\DeclareMathOperator{\USQ}{\query{usq}}
\newcommand{\alg}[1]{{\mbox{${\mathbf{#1}}$}}}
\DeclareMathOperator{\A}{\alg{A}}
\DeclareMathOperator{\R}{\alg{R}}
\newcommand{\ATrees}{\A_{{\Trees}}}
\newcommand{\lang}[1]{{\mathbb{\Lbrack}}{#1}{\mathbb{\Rbrack}}}
\begin{document}

\maketitle

\begin{abstract}
We present the first polynomial time algorithm to learn
nontrivial classes of languages of infinite trees.
Specifically, our algorithm uses membership and equivalence queries
to learn classes of $\omega$-tree languages
derived from weak regular $\omega$-word languages
in polynomial time.
The method is a general polynomial time reduction
of learning a class of derived $\omega$-tree languages to
learning the underlying class of $\omega$-word languages,
for any class of $\omega$-word languages recognized by a deterministic B\"{u}chi acceptor. %in $\dbw$.
Our reduction, combined with the polynomial time
learning algorithm of Maler and Pnueli~\cite{Maler1995}
for the class of weak regular $\omega$-word languages
yields the main result.
We also show that subset queries that return counterexamples
can be implemented in polynomial time using subset queries
that return no counterexamples for deterministic or
non-deterministic finite word
acceptors, and deterministic or non-deterministic B\"{u}chi
$\omega$-word acceptors.

A previous claim of an algorithm to learn regular $\omega$-trees
due to Jayasrirani, Begam and Thomas~\cite{Jayasrirani:2008}
is unfortunately incorrect, as shown in~\cite{Angluin:2016}.
\end{abstract}

\section{Introduction}%
\label{section-introduction}

Query learning is a framework in which a learning algorithm attempts to identify a target concept using specified types of queries to an oracle (or teacher) about the target concept~\cite{Angluin:1988}. For example, if the target concept is a regular language $L$, a membership query asks whether a string $x$ is a member of $L$, and is answered either ``yes'' or ``no''. An equivalence query asks whether a hypothesis language $L'$ (represented, for example, by a deterministic finite acceptor) is equal to $L$. In the case of an equivalence query, the answer may be ``yes'', in which case the learning algorithm has succeeded in exact identification of the target concept, or it may be ``no'', accompanied by a counterexample, that is, a string $x$ in $L$ but not in $L'$ (or vice versa). The counterexample is a witness that $L'$ is not equal to $L$.

When $L'$ is not equal to $L$, there is generally a choice (often an infinity) of possible counterexamples, and we require that the learning algorithm works well regardless of which counterexample is chosen by the teacher. To account for this in terms of quantifying the running time of the learning algorithm, we include a parameter that is the maximum length of any counterexample returned by the teacher at any point in the learning process. In this setting, the \lstar\ algorithm of Angluin~\cite{Angluin87} learns any regular language $L$ using membership and equivalence queries in time polynomial in the size of the smallest deterministic finite acceptor for $L$ and the length of the longest counterexample chosen by the teacher. As shown in~\cite{Angluin:1990}, there can be no such polynomial time algorithm using just membership queries or just equivalence queries.

The assumption that equivalence queries are available may seem unrealistic.  How is a person or a program to judge the equivalence of the target concept to some large, complex, technical specification of a hypothesis? If the hypothesis and the target concept are both deterministic finite acceptors, there is a polynomial time algorithm to test equivalence and return a counterexample in case the answer is negative. Alternatively, if there is a polynomial time algorithm for exact learnability of a class $\C$ of concepts using membership and equivalence queries, then it may be transformed into a polynomial time algorithm that learns approximations of concepts from $\C$ using membership queries and randomly drawn labeled examples~\cite{Angluin87,Angluin:1988}.  In this transformation, there is an unknown probability distribution on examples, and an approximation bound $\epsilon > 0$ and a confidence bound $\delta > 0$ are given, and the algorithm draws a corpus of labeled examples of cardinality polynomial in the size of the target concept, $1/\epsilon$ and $\log(1/\delta)$.
To answer an equivalence query, the hypothesis is checked against the labeled examples in the corpus.
If the hypothesis agrees with the labels of all the examples in the corpus, the equivalence query is answered
``yes'', and otherwise, any exception supplies a counterexample to return as the answer of the equivalence
query.  The final hypothesis output by the transformed algorithm will, with probability at least $1-\delta$, have a probability of at most $\epsilon$ of disagreeing with the target on examples drawn from the unknown probability distribution.

Since the publication of $L^*$, there have been a number of substantial improvements and extensions of the algorithm, as well as novel and unanticipated applications in the analysis, verification and synthesis of programs, protocols and hardware, following the work of Peled et al.\ that identified the applicability of
$L^*$ in the area of formal methods~\cite{PeledVY02}. In a recent CACM review article, Vaandrager~\cite{Vaandrager:2017} explains Model Learning, which takes a black box approach to learning a finite state model of a given hardware or software system using membership queries (implemented by giving the system a sequence of inputs and observing the sequence of outputs) and equivalence queries (implemented using a set of test sequences in which the outputs of the hypothesis are compared with the outputs of the given system.) The learned models may then be analyzed to find discrepancies between a specification and its implementation, or between different implementations. He cites applications in telecommunications~\cite{HagererHNS02,ShahbazG14}, the automotive industry~\cite{FengLMNSW13}, online conference systems~\cite{WindmullerNSHB13}, as well as analyzing botnet protocols~\cite{ChocSS10}, smart card readers~\cite{ChaluparPPR14}, bank card protocols~\cite{AartsRP13}, network protocols~\cite{RuiterP15} and legacy software~\cite{MargariaNRS04,SchutsHV16}.

Another application of finite state machine learning algorithms is in the assume-guarantee approach to verifying systems by dividing them into modules that can be verified individually. Cobleigh, Giannakopoulou and P\u{a}sare\u{a}nu~\cite{Cobleigh:2003} first proposed using a learning algorithm to learn a correct and sufficient contextual assumption for the component being verified, and there has since been a great deal of research progress in this area~\cite{NamA06}.

If we consider {\it reactive systems}, that is, systems that maintain an ongoing interaction with their environment (e.g., operating systems, communication protocols, or robotic swarms), the restriction to models specified by finite automata processing finite sequences of inputs is too limiting. Instead, one trajectory of the behavior of a reactive system may be modeled using an infinite word ($\omega$-word), each symbol of which specifies the current state of the system and the environment at a given time. The system itself may be modeled by an $\omega$-automaton, that is, a finite state automaton that processes $\omega$-words. The desired behavior of such a system may be specified by a linear temporal logic formula, that defines the set of $\omega$-words that constitute ``good'' behaviors of the system.

Researchers have thus sought query learning algorithms for $\omega$-automata that could be used in the settings of model learning or assume-guarantee verification for reactive systems. However, learning $\omega$-automata seems to be a much more challenging problem than learning automata on finite words, in part because the Myhill-Nerode characterization for regular languages (stating that there is a unique minimum deterministic acceptor that can be constructed using the right congruence classes of the language) does not hold in general for regular $\omega$-languages. The Myhill-Nerode characterization is the basis of the \lstar\ algorithm and its successors.

There is no known polynomial time algorithm using membership and equivalence queries to learn even the whole class $\dbw$ of languages recognized by deterministic B\"{u}chi acceptors, which is a strict subclass of the class of all regular $\omega$-languages. Maler and Pnueli~\cite{Maler1995} have given a polynomial time algorithm using membership and equivalence queries to learn the {weak regular $\omega$-languages}. This class, denoted $\dwpw$, is the set of languages accepted by deterministic weak parity automata, and is a non-trivial subclass of  $\dbw$. The class $\dwpw$ does have a Myhill-Nerode property, but this alone does not suffice for extending \lstar\ to learn this class, since the observed data might suggest conflicting ways to mark accepting states in an automaton agreeing with the observed data. Maler and Pnueli's algorithm manages to overcome this problem by finding a set of membership queries to ask to resolve the conflict.

In the context of assume-guarantee verification, Farzan et al.~\cite{FarzanCCTW08} proposed a direct application of \lstar\ to learn the full class of regular $\omega$-languages.  Their approach is based on the result of Calbrix, Nivat and Podelski~\cite{CalbrixNP93} showing that a regular $\omega$-language $L$ can be characterized by the regular language \ldollar\ of finite strings $u \dollar v$ representing the set of ultimately periodic words ${u(v)}^{\omega}$ in $L$.  This establishes that a regular $\omega$-language $L$ is learnable using membership and equivalence queries in time polynomial in the size of the minimal deterministic finite acceptor for \ldollar. However, the size of this representation may be exponentially larger than its $\omega$-automaton representation. More recently, Angluin and Fisman~\cite{AngluinF16} have given a learning algorithm using membership and equivalence queries for general regular $\omega$-languages represented by families of deterministic finite acceptors, which improves on the \ldollar\
representation, however the running time is not bounded by a polynomial
in the representation.  Clearly, much more research is needed in the area of query learning of $\omega$-automata.

Despite the difficulties in learning $\omega$-automata, which are used in the analysis of linear temporal logic, in this paper we consider the theoretical question of learning $\omega$-tree automata, which are used in the analysis of branching temporal logic~\cite{EmersonS88,KupfermanVW00}.
As a potential motivation for studying learning of $\omega$-tree languages, we
consider a setting in which two players play an infinite game in which
the opponent chooses one of two actions ($1$ and $2$) and the player responds with
a symbol chosen from a finite alphabet $\Sigma$.
We can represent the strategy of the player as a binary $\omega$-tree in which each node is the player's state, the two edges leaving the node are the possible choices of the opponent (action $1$ or $2$), each edge is labeled with the response action (from $\Sigma$) of the player, and each leads to a (potentially new) state for the player.
In this interpretation, a set of $\omega$-trees represents a property of strategies, and the task of the learner is to learn an initially unknown property of strategies by using membership queries (``Does this strategy have the unknown property?'') and equivalence queries (``Is this property the same as the unknown property of strategies?'') answered either ``yes'' or with a counterexample, that is, a strategy that distinguishes the two properties.

Because of the difficulty of the problem, we restrict our attention to $\omega$-tree languages such that all of their paths satisfy a certain temporal logic formula, or equivalently, a property of $\omega$-words. Given an $\omega$-word language $L$, we use $\Trees_d(L)$ to denote the set of all $d$-ary $\omega$-trees $t$ all of whose paths are in $L$. The $\omega$-tree language $\Trees_d(L)$ is often referred to as the \emph{derived} language of $L$. In this context, it is natural to ask whether learning a derived $\omega$-tree language $\Trees_d(L)$ can be reduced to learning the $\omega$-word language $L$.

We answer this question affirmatively for the case that $L$ can be recognized by a deterministic B\"{u}chi word automaton and learned using membership and equivalence queries. Applying this reduction to the result of Maler and Pnueli on polynomially learning languages in $\dwpw$ we obtain a polynomial learning algorithm for derived languages in $\Trees_d(\dwpw)$ using membership and equivalence queries. Moreover, any progress on polynomially learning an extended subclass $\C$ of $\dbw$ using membership and equivalence queries can be automatically lifted to learning $\Trees_d(\C)$.

%%%%%%%%%%%%%%%%%%%%%%%%%%%%%%%%%%%%%%%%%%%%%%%%%

The framework of the reduction is depicted in Fig.~\ref{fig:reduction-framework}. An  algorithm $\ATrees$ for learning $\Trees_d(L)$ uses a learning algorithm $\A$ for $L$ to complete its task. The membership and equivalence queries ($\MQ$ and $\EQ$, henceforth) of algorithm  $\ATrees$ are answered by respective oracles $\MQ$ and $\EQ$ for $\Trees_d(L)$. Since $\A$ asks membership and equivalence queries about $L$ rather than $\Trees_d(L)$, the learner $\ATrees$ needs to find a way to answer these queries. If $\A$ asks a membership query about an $\omega$-word, $\ATrees$ can ask a membership query about an $\omega$-tree all of whose paths are identical to the given $\omega$-word. Since the tree is accepted by $\Trees_d(L)$ iff the given word is accepted by $L$ it can pass the answer as is to $\A$. If $\A$ asks an equivalence query, using an acceptor $M$ for an $\omega$-language, then $\ATrees$ can ask an equivalence query using an acceptor $M^T$ that accepts an $\omega$-tree if all its paths are accepted by $M$. If this query is answered positively then $\ATrees$ can output the tree acceptor $M^T$ and halt. The challenge starts when this query is answered negatively.

When the $\EQ$ is answered negatively, a counterexample tree $t$ is given. There are two cases to consider. Either this tree is in $\Trees_d(L)$ but is rejected by the hypothesis acceptor $M^T$, in which case $t$ is referred to as a \emph{positive counterexample}; or this tree is not in $\Trees_d(L)$ but is accepted by the hypothesis acceptor $M^T$, in which case $t$ is referred to as a \emph{negative counterexample}.
If $t$ is a positive counterexample, since $M^T$ rejects $t$ there must be a path in $t$ which is rejected by $M$. It is not too dificult to extract that path. The real challenge is dealing with a negative counterexample. This part is grayed out in the figure. In this case the tree $t$ is accepted by $M^T$ yet it is not in $\Trees_d(L)$. Thus, all the paths of the tree are accepted by $M$, yet at least one path is not accepted by $L$. Since $L$ is not given, it is not clear how we can extract such a path. Since we know that not all paths of $t$ are contained in $L$, a use of an unrestricted subset query could help us. Unrestricted subset queries ($\USQ$) are queries on the inclusion of a current hypothesis in the unknown language that are answered by ``yes'' or ``no''  with an accompanying counterexample in the case the answer is negative.

\begin{figure}
%\begin{wrapfigure}{r}{0.5\textwidth}
    \centering
	\scalebox{0.30}{
		\includegraphics{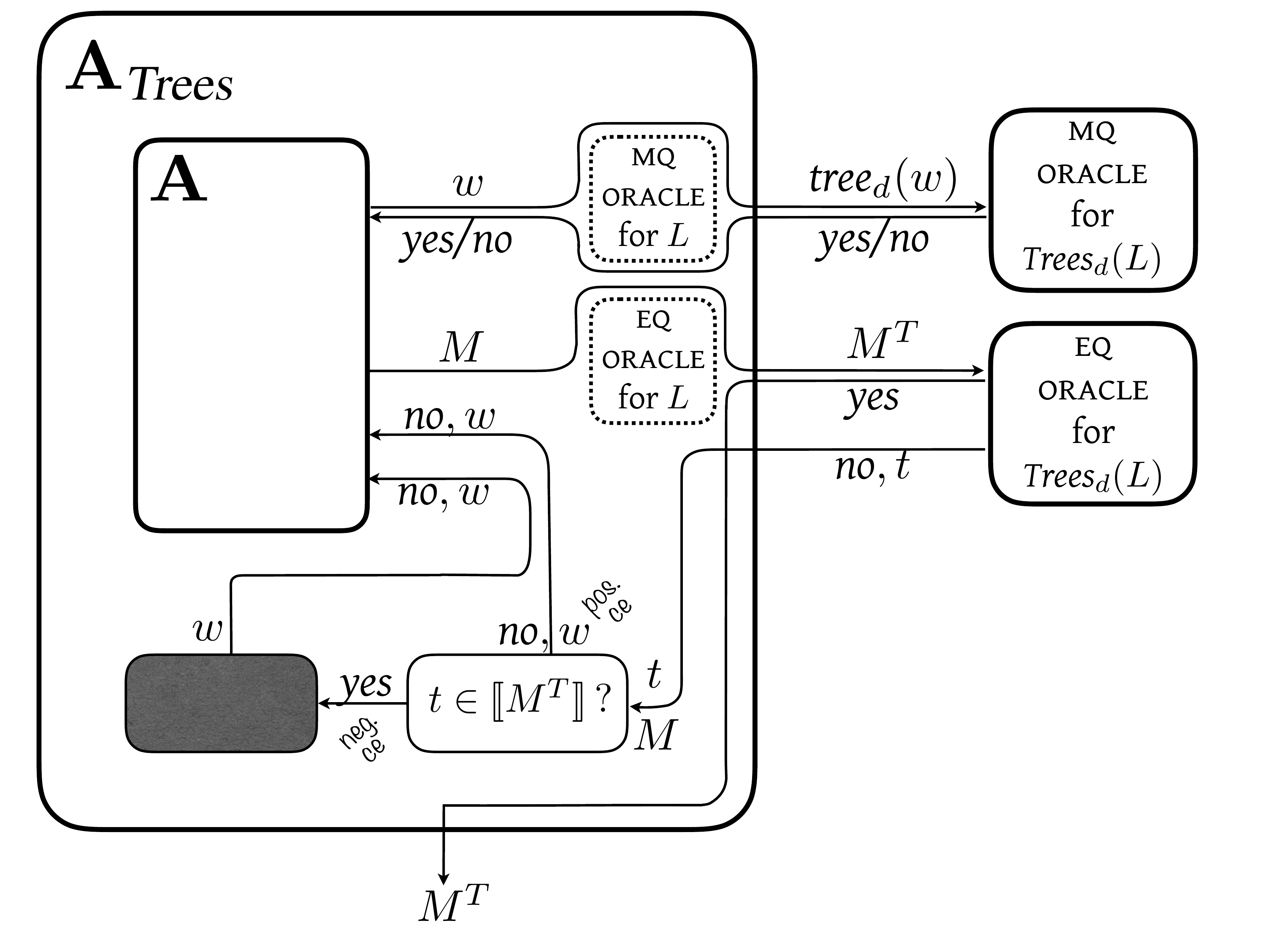}
	}
	\caption{The reduction framework}\label{fig:reduction-framework}
%\end{wrapfigure}
\end{figure}

Since we don't have access to $\USQ$s we investigate whether we can obtain such queries given the queries we have. We show that unrestricted subset queries can be simulated by restricted subset queries.  Restricted subset queries ($\RSQ$) on $\omega$-words are subset queries that are \emph{not} accompanied by counterexamples. This essentially means that there is a way to construct a desired counterexample without it being given.  To discharge the use of restricted subset queries (as the learner is not provided such queries either) we investigate the relation between subsets of $\omega$-words and $\omega$-trees. Finally, we show that the desired subset queries on $\omega$-words can be given to the $\omega$-tree learning algorithm by means of $\omega$-tree membership queries. From these we can construct a series of procedures to implement the grayed area.

The subsequent sections contain definitions of $\omega$-words, $\omega$-trees and automata processing them, derived $\omega$-tree languages, the problem of learning classes of $\omega$-word and $\omega$-tree languages, preliminary results, the algorithm for the main reduction, and some discussion. We also include an Appendix with a few examples illustrating some of the procedures involved in our framework.
% Some proof details are deferred to the Appendix.

\section{Definitions}%
\label{section-definitions}

\subsection{Words and trees}
(For more details
see Gr\"{a}del, Thomas and Wilke~\cite{Gradel2002},
Perrin and Pin~\cite{PP2004}, and
L\"{o}ding~\cite{Loding11}.)
Let $\Sigma$ be a fixed finite alphabet of symbols.
The set of all finite words over $\Sigma$ is denoted $\Sigma^*$.
The empty word is denoted $\varepsilon$, and the length of a finite
word $x$ is denoted $|x|$.
$\Sigma^+$ is the set of all nonempty finite words over $\Sigma$,
and for a nonnegative integer $k$, $\Sigma^k$ is the set of
all finite words over $\Sigma$ of length equal to $k$.
A finite word language is a subset of $\Sigma^*$.

An $\omega$-word over $\Sigma$ is an infinite sequence
$w = \sigma_1 \sigma_2 \sigma_3 \cdots$
where each $\sigma_i \in \Sigma$.
The set of all $\omega$-words over $\Sigma$ is denoted $\Sigma^{\omega}$.
An $\omega$-word language is a subset of $\Sigma^{\omega}$.
The $\omega$-regular expressions are analogous to
finite regular expressions, with the added operation $S^{\omega}$,
where $S$ is a set of finite words, and the restriction that
concatenation combines a set of finite words as the left argument with a
set of finite words or $\omega$-words as the right argument.
The set $S^{\omega}$ is the set of all $\omega$-words $s_1 s_2 \cdots$ such
that for each $i$, $s_i \in S$ and $s_i \neq \varepsilon$.
For example, ${(a+b)}^* {(a)}^{\omega}$ is the set of all $\omega$-words
over $\{a,b\}$ that contain finitely many occurrences of $b$.

If $S \subseteq \Sigma^*$, $n$ is a nonnegative integer and
$u \in \Sigma^*$, we define the
\concept{length and prefix restricted} version of $S$ by
 $S[n,u] = S \cap \Sigma^n \cap (u \cdot \Sigma^*)$.
This is the set of all words in $S$ of length $n$ that begin with the
prefix $u$.
We also define the \concept{length restricted} version of $S$
by $S[n] = S[n,\varepsilon]$, that is, the set of all
words in $S$ of length $n$.

Let $d$ be a positive integer.
We consider $T_d$, the unlabeled complete $d$-ary $\omega$-tree
whose directions are specified by $D = \{1,\ldots, d\}$.
The \concept{nodes} of $T_d$ are the elements of $D^*$.
The \concept{root} of $T_d$ is the node $\varepsilon$,
and the \concept{children} of node $v$ are $v \cdot i$ for $i \in D$.
An \concept{infinite path} $\pi$ in $T_d$
is a sequence $x_0, x_1, x_2, \ldots$ of nodes
of $T_d$ such that $x_0$ is the root
and for all nonnegative integers $n$,
$x_{n+1}$ is a child of $x_n$.
An infinite path in $T_d$ corresponds to an $\omega$-word over $D$
giving the sequence of directions traversed by the path starting
at the root.

A labeled $d$-ary $\omega$-tree (or just \concept{$\omega$-tree})
is given by a mapping
$t: D^+ \rightarrow \Sigma$
that assigns a symbol in $\Sigma$ to each non-root node of $T_d$.
We may think of $t$ as assigning the symbol $t(v \cdot i)$ to the
tree edge from node $v$ to its child node $v \cdot i$.
The set of all labeled $d$-ary
$\omega$-trees is denoted $T_d^{\Sigma}$.
An $\omega$-tree language is a subset of $T_d^{\Sigma}$.
If $\pi = x_0, x_1, x_2, \ldots$ is an infinite path of $T_d$,
then we define $t(\pi)$ to be the $\omega$-word
$t(x_1), t(x_2), \ldots$ consisting of the sequence of
labels of the non-root nodes of $\pi$ in $t$.
(Recall that $t$ does not label the root node.)

\subsection{Automata on words}%
\label{subsection-Automata-on-words}

A \concept{finite state word automaton}
is given by a tuple $M = (Q,q_0,\delta)$,
where $Q$ is a finite set of states, $q_0 \in Q$ is the initial
state, and $\delta: Q \times \Sigma \rightarrow 2^Q$ is the
(nondeterministic) transition function.
The automaton is \concept{deterministic} if $\delta(q,\sigma)$
contains at most one state for every $(q,\sigma) \in Q \times \Sigma$,
and \concept{complete} if $\delta(q,\sigma)$ contains at least
one state for every $(q,\sigma) \in Q \times \Sigma$.
For a complete deterministic automaton we extend $\delta$ to map $Q \times \Sigma^*$ to $Q$
in the usual way.

Let $x = \sigma_1 \sigma_2 \cdots \sigma_k$ be a finite word,
where each $\sigma_n \in \Sigma$.
A \concept{run} of $M$ on $x$ is a sequence of $k+1$ states
$r_0, r_1, \ldots, r_k$ such that $r_0 = q_0$ is the initial state
and $r_n \in \delta(r_{n-1},\sigma_n)$ for integers $1 \le n \le k$.
Let $w = \sigma_1 \sigma_2 \cdots$ be an $\omega$-word, where
each $\sigma_n \in \Sigma$.
A \concept{run} of $M$ on $w$ is an infinite sequence of
states $r_0, r_1, r_2, \ldots$ such that $r_0 = q_0$ is
the initial state and $r_n \in \delta(r_{n-1},\sigma_n)$ for
all positive integers $n$.

A \concept{nondeterministic finite acceptor} is given by
$M = (Q,q_0,\delta,F)$,
where $(Q,q_0,\delta)$ is a
finite state word automaton,
and the new component $F \subseteq Q$ is the set of
accepting states.
$M$ is a \concept{deterministic finite acceptor} if $\delta$
is deterministic.
Let $M$ be a nondeterministic finite acceptor and
$x \in \Sigma^*$ a finite word of length $n$.
$M$ \concept{accepts} $x$
iff there is a run $r_0, r_1, \ldots, r_n$ of $M$ on $x$
such that $r_n \in F$.
The language \concept{recognized} by $M$ is the set of all
finite words accepted by $M$, denoted by $\lang{M}$.
The class of all finite word languages recognized by
deterministic finite acceptors is denoted by $\dfw$,
and by nondeterministic finite acceptors, $\nfw$.
These representations are equally expressive, that is,
$\nfw = \dfw$.

Turning to finite state word automata processing $\omega$-words,
a variety of different acceptance criteria have
been considered.
Such an acceptor is given by a tuple $M = (Q,q_0,\delta,\alpha)$,
where $(Q,q_0,\delta)$ is a finite state word automaton
and $\alpha$ specifies a mapping from $2^Q$ to $\{0,1\}$
which gives the criterion of acceptance for an
$\omega$-word $w$.

Given an $\omega$-word $w$ and a run $r = r_0, r_1, \ldots$ of
$M$ on $w$, we consider the set
$\infset(r)$ of all states $q \in Q$ such that
$r_n = q$ for infinitely many indices $n$.
The acceptor $M$ \concept{accepts} the $\omega$-word $w$
iff there exists a run $r$ of $M$ on $w$ such
that $\alpha(\infset(r)) = 1$.
That is, $M$ accepts $w$ iff there exists
a run of $M$ on $w$ such that
the set of states visited infinitely often in the
run satisfies the acceptance criterion $\alpha$.
The language \concept{recognized} by $M$ is the set
of all $\omega$-words accepted by $M$, denoted $\lang{M}$.

%A Muller acceptor has the most general acceptance criterion:
%$\alpha$ is specified by giving a set $\cal F$ of subsets
%of $Q$ and defining $\alpha(S) = 1$ iff $S \in {\cal F}$.
For a B\"{u}chi acceptor, the acceptance criterion $\alpha$ is specified by giving
a set $F \subseteq Q$ of accepting states and defining
$\alpha(S) = 1$ iff $S \cap F \neq \emptyset$.
In words, a B\"{u}chi acceptor $M$ accepts $w$ if and only
if there exists a run $r$ of $M$ on $w$ such that at least
one accepting state is visited infinitely often in $r$.
For a co-B\"{u}chi acceptor, the acceptance criterion $\alpha$ is specified by giving
a set $F \subseteq Q$ of rejecting states and defining
$\alpha(S) = 1$ iff $S \cap F = \emptyset$.
For a parity acceptor, $\alpha$ is specified by giving
a function $c$ mapping $Q$ to an interval of integers $[i,j]$,
(called \concept{colors} or \concept{priorities})
and defining $\alpha(S) = 1$
iff the minimum integer in $c(S)$ is even.

%Each of the acceptors has also a \concept{weak} variant. A weak acceptor with acceptance criterion $\alpha$ accepts the $\omega$-word $w$
%iff there exists a run $r$ of $M$ on $w$ such that $\alpha(\occset(r)) = 1$ where $\occset$ is the set of states that \emph{occur} during the run $r$.

A \emph{parity} automaton is said to be \concept{weak} if no two strongly connected states have distinct colors, i.e., if looking at the partition of its states to maximal strongly connected components (MSCCs) all states of an MSCC have the same color. Clearly every weak parity automaton can be colored with only two colors, one even and one odd, in which case the colors are often referred to as \emph{accepting} or \emph{rejecting}. It follows that a weak parity automaton can be regarded as either a B\"uchi or a coB\"uchi automaton.
If in addition no rejecting MSCC is reachable from an accepting MSCC, the acceptor is said to be \emph{weak B\"uchi}. Likewise, a weak parity acceptor where  no accepting MSCC is reachable from a rejecting MSCC,  is said to be \emph{weak coB\"uchi} acceptor.

The classes of languages of $\omega$-words recognized by
these kinds of acceptors will be denoted by three/four-letter
acronyms, with \class{N} or \class{D} (for nondeterministic or deterministic),
\class{B}, \class{C},  \class{P}, \class{wB}, \class{wC} or  \class{wP} (for B\"{u}chi, co-B\"{u}chi,  parity or
their respective weak variants)
and then \class{W} (for $\omega$-words).
Thus $\dwbw$ is the class of $\omega$-word languages recognized by deterministic weak B\"{u}chi word acceptors.
% and $\nmw$ is the class of $\omega$-word languages recognized by nondeterministic Muller word acceptors

\begin{wrapfigure}{4}{0.3\textwidth}
	\centering
	\vspace{-8mm}
	\begin{minipage}[t]{3cm}\centering

		\begin{center}

			\vspace{-1mm}
			\scalebox{0.7}{%
				\begin{tikzpicture}[->,>=stealth',shorten >=1pt,auto,node distance=1.5cm,semithick,initial text=]

				\node[label] (Reactivity)    {$\dpw$};
				\node[label] (Recurrence)    [below left of=Reactivity]{$\dbw$};
				\node[label] (Persistence)    [below right of=Reactivity]{$\dcw$};
				\node[label] (Obligation)    [below of=Reactivity, node distance=2.15cm]{$\dwpw$};
				\node[label] (Guarantee)    [below left of=Obligation]{$\dwbw$};
				\node[label] (Safety)    [below right of=Obligation]{$\dwcw$};

				\path (Safety) edge (Obligation);
				\path (Guarantee) edge (Obligation);

				\path (Recurrence) edge (Reactivity);
				\path (Persistence) edge (Reactivity);

				\path (Obligation) edge (Recurrence);
				\path (Obligation) edge (Persistence);

				\end{tikzpicture}
                }
		\end{center}
	\end{minipage}
	\caption{Expressiveness hierarchy of $\omega$-acceptors~\cite{Wagner75,MP89}.}\label{fig-a-acc-hierearchy}
\end{wrapfigure}
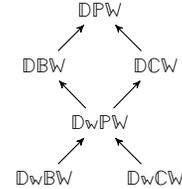

Concerning the expressive power of various types of acceptors,
previous research has established the following results.
The weak variants are strictly less expressive than the non-weak variants.
Deterministic parity automata are more expressive than deterministic B\"uchi and coB\"uchi automata and the same
is true for their weak variants.
These results are summarized in Fig.~\ref{fig-a-acc-hierearchy}. In addition, $\nbw=\dpw=\npw$ and $\dwpw=\dcw \cap \dbw$.
The class of  \emph{regular $\omega$-languages} is the class $\dpw$, and the class of \emph{weak regular $\omega$-languages} is the  class $\dwpw$.

%The class of regular $\omega$-word languages is
%$ \npw = \dpw = \nbw \supseteq \dbw$.
%Thus, nondeterminism does not increase the expressive power of
 %parity $\omega$-word acceptors.
%This is not true of B\"{u}chi or co-B\"{u}chi $\omega$-word acceptors:
%$\dbw$ and $\dcw$ are incomparable, and both are proper subclasses
%of $\nbw$.
%Also of interest is the class $\dbwdcw$, consisting of all
%$\omega$-word languages that are recognized both by
%a deterministic B\"{u}chi acceptor
%and by
%a deterministic co-B\"{u}chi acceptor; this is a proper
%subclass of both $\dbw$ and $\dcw$.

\subsection{Automata on trees}%
\label{subsection-Automata-on-trees}

Acceptors on $d$-ary $\omega$-trees are equipped with analogous
accepting conditions.
Such an acceptor is given by a tuple
$M = (Q,q_0,\delta,\alpha)$, where
$Q$ is a finite set of states, $q_0 \in Q$ is the initial
state, the transition function $\delta$ is a map from $Q$ and $d$-tuples
of symbols to sets of $d$-tuples of states, that
is, $\delta: Q \times \Sigma^d \rightarrow 2^{Q^d}$,
and the acceptance criterion
$\alpha$ specifies a function from $2^Q$ to $\{0,1\}$.
We may think of the acceptor as running top down from the
root of the tree, at each node nondeterministically choosing
a permissible $d$-tuple of states for the $d$ children
of the node depending on the state assigned to the node
and the $d$-tuple of symbols on its outgoing edges. In other words, for each node, with a state $q$ assigned to it, and $d$ outgoing edges with symbols $\sigma_1,\ldots,\sigma_d$, the acceptor will assign states $q_1,\ldots, q_d$ to the children of the nodes, only if $(q_1,\ldots,q_d)\in\delta(q, (\sigma_1,\ldots,\sigma_d))$.

We define a \concept{run} of $M$ on the $\omega$-tree $t$
as a mapping $r$ from the nodes of $T_d$ to $Q$ such that
$r(\varepsilon) = q_0$
and for every node $x$, we have
$(r(x \cdot 1), \ldots, r(x \cdot d)) \in
\delta(r(x), (t(x \cdot 1), \ldots, t(x \cdot d)))$.
That is, the root is assigned the initial state and for
every node,
the ordered $d$-tuple of states assigned to its children
is permitted by the transition function.
The acceptor $M$ \concept{accepts} the $\omega$-tree $t$
iff there exists a run $r$ of $M$ on $t$ such
that for every infinite path $\pi$, we have
$\alpha(\infset(r(\pi))) = 1$.
That is, there must be at least one run in which, for every
infinite path, the set of states that occur infinitely often
on the path satisfies the acceptance criterion $\alpha$.
The $\omega$-tree language \concept{recognized} by $M$ is the
set of all $\omega$-trees accepted by $M$, denoted $\lang{M}$.

The specification of the acceptance criterion $\alpha$
is as for $\omega$-word acceptors, yielding  B\"{u}chi,
co-B\"{u}chi and parity $\omega$-tree acceptors.
If the transition function specifies at most one permissible
$d$-tuple of states for every element of $Q \times \Sigma^d$, then
the acceptor is deterministic.
The corresponding classes of $\omega$-tree languages are
also denoted by three-letter acronyms, where the last letter
is \class{T} for $\omega$-trees.
% Thus, $\nmt$ is the class of $\omega$-tree languages recognized by nondeterministic Muller tree acceptors.
For $\omega$-trees, the class of all regular $\omega$-tree languages
is $\npt$ and
% Rabin~\cite{Rabin1970} proved that
$\nbt$ is a proper
subclass of $\npt$.
For any automaton or acceptor $M$,
we denote the number of its states by $|M|$.

% in contrast to the situation for $\omega$-word languages.

\section{Derived \texorpdfstring{$\omega$}{omega}-tree languages}%
\label{section-Derived-omega-tree-languages}

Given an $\omega$-tree $t$ we define
the $\omega$-word language $\paths(t)$ consisting of the
$\omega$-words labeling its infinite paths.
That is, we define
\[\paths(t) = \{t(\pi) \mid \pi \textrm{ is an infinite path in } T_d\}.\]
If $L$ is an $\omega$-word language and $d$ is a positive
integer, we define a corresponding
language of $d$-ary $\omega$-trees \concept{derived from} $L$ as follows:
\[\Trees_d(L) = \{t \in T_d^{\Sigma} \mid \paths(t) \subseteq L\}.\]
That is, $\Trees_d(L)$ consists of all $d$-ary $\omega$-trees such
that every infinite path in the tree is labeled by an element of $L$.
If $\class{C}$ is any class of $\omega$-word languages,
$\Trees_d(\class{C})$ denotes the class of all $\omega$-tree
languages $\Trees_d(L)$ such that $L \in \class{C}$.

\subsection{Derived tree languages}%
\label{subsection-derived-tree-languages}

Not every regular $d$-ary $\omega$-tree language
can be derived in this way from an $\omega$-word language.
As an example, consider the language $L_a$ of all binary
$\omega$-trees $t$ over $\Sigma = \{a,b\}$
such that there is at least one node labeled with $a$.
An NBT acceptor can recognize $L_a$ by guessing
and checking a path that leads to an $a$.
However, if $L_a = \Trees_2(L)$ for some $\omega$-word language $L$,
then because there are $\omega$-trees in $L_a$ that have
infinite paths labeled exclusively with $b$,
we must have $b^{\omega} \in L$, so
the binary $\omega$-tree labeled exclusively with $b$ would also be
in $\Trees_2(L)$, a contradiction.

Given an $\omega$-word acceptor $M = (Q,q_0,\delta,\alpha)$,
we may construct a related
$\omega$-tree acceptor $\mtd = (Q,q_0,\delta^{T,d},\alpha)$
as follows.
For all $q \in Q$ and
all $(\sigma_1, \ldots, \sigma_d)  \in \Sigma^d$, define
\[\delta^{T,d}(q,(\sigma_1, \ldots, \sigma_d)) =
\{(q_1, \ldots, q_d) \mid \forall i \in D, q_i \in \delta(q,\sigma_i)\}.\]
That is, the acceptor $\mtd$ may continue the computation at
a child of a node with any state permitted by $M$,
independently chosen.
It is tempting to think that $\lang{\mtd} = \Trees_d(\lang{M})$, but this
may not be true when $M$ is not deterministic.
\begin{lem}%
	\label{lemma-word-to-tree}
	Given an $\omega$-word acceptor $M$, we have that $\lang{\mtd} \subseteq \Trees_d(\lang{M})$
	with equality if $M$ is deterministic.
\end{lem}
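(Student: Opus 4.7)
The plan is to handle the two inclusions separately, with only the second relying on determinism.

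For the inclusion $\lang{\mtd} \subseteq \Trees_d(\lang{M})$, I would start with any $t \in \lang{\mtd}$ and an accepting run $r : D^* \to Q$ of $\mtd$ on $t$. Then for any infinite path $\pi = x_0, x_1, x_2, \ldots$ of $T_d$, I claim the restriction $r(x_0), r(x_1), r(x_2), \ldots$ is an accepting run of $M$ on the $\omega$-word $t(\pi)$. The initial state condition is immediate since $r(x_0) = r(\varepsilon) = q_0$. For the transition condition, observe that $x_{n+1} = x_n \cdot i$ for some $i \in D$, and the definition of $\delta^{T,d}$ guarantees that whenever $(r(x_n \cdot 1), \ldots, r(x_n \cdot d)) \in \delta^{T,d}(r(x_n), (t(x_n \cdot 1), \ldots, t(x_n \cdot d)))$, each component $r(x_n \cdot i)$ lies in $\delta(r(x_n), t(x_n \cdot i))$. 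The acceptance criterion on this run is exactly $\alpha(\infset(r(\pi))) = 1$, which holds because $r$ is an accepting run of $\mtd$. Hence $t(\pi) \in \lang{M}$, and since $\pi$ was arbitrary, $t \in \Trees_d(\lang{M})$.

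For the reverse inclusion under the assumption that $M$ is deterministic, take $t \in \Trees_d(\lang{M})$. I would construct a candidate run $r$ of $\mtd$ on $t$ by induction on node depth: set $r(\varepsilon) = q_0$, and given $r(v)$, for each $i \in D$ pick $r(v \cdot i)$ to be the unique state in $\delta(r(v), t(v \cdot i))$. This unique state exists because the infinite path extending $v \cdot i$ has its labeling in $\lang{M}$, so some (hence by determinism the unique) run of $M$ on it is defined at every step; in particular the transition at step $|v|+1$ is non-empty. By construction, $(r(v \cdot 1), \ldots, r(v \cdot d))$ lies in $\delta^{T,d}(r(v), (t(v \cdot 1), \ldots, t(v \cdot d)))$, so $r$ is a legitimate run of $\mtd$. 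Moreover, for any infinite path $\pi$, the sequence $r(\pi)$ coincides with the unique run of $M$ on $t(\pi)$, which is accepting by hypothesis, so $\alpha(\infset(r(\pi))) = 1$. Thus $\mtd$ accepts $t$.

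The only subtle point, and the reason the reverse inclusion fails in general, is the independent choices allowed by $\delta^{T,d}$: when $M$ is nondeterministic, each path of $t$ may have an accepting run of $M$ on it, but these per-path choices need not be consistent at the shared prefixes required of a tree run. Determinism removes that obstacle by making the choice at each node canonical, and also furnishes the completeness needed to extend $r$ everywhere. I do not expect any other step to cause difficulty; everything else is bookkeeping using the definitions of $\delta^{T,d}$ and of acceptance on trees versus words.
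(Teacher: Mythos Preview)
Your proposal is correct and follows essentially the same approach as the paper: restrict an accepting tree run to each path for the forward inclusion, and use the unique run determined by $M$ for the reverse inclusion under determinism. You are in fact slightly more careful than the paper, which simply asserts that $\mtd$ has a unique run on $t$; your argument that $\delta(r(v),t(v\cdot i))$ is nonempty because some path through $v\cdot i$ lies in $\lang{M}$ is the right justification when $M$ is not assumed complete.
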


\begin{proof}
	Consider the $\omega$-word acceptor $M = (Q,q_0,\delta,\alpha)$.
	If $t \in \lang{\mtd}$ then there is a run $r$ of $\mtd$ on $t$ satisfying
	the acceptance criterion $\alpha$ on every infinite path.
	Thus, $t(\pi) \in \lang{M}$ for every infinite path $\pi$ and
	$t \in \Trees_d(\lang{M})$.

	Suppose $t \in \Trees_d(\lang{M})$ and $M$ is deterministic.
	Then $\mtd$ is also deterministic, and there is a unique
	run $r$ of $\mtd$ on $t$.
	For every infinite path $\pi$,
	$r(\pi)$ is also the unique run of $M$ on the $\omega$-word
	$t(\pi)$, which satisfies $\alpha$ because $t \in \Trees_d(\lang{M})$.
	Thus $t \in \lang{\mtd}$.
\end{proof}

Boker et al.~\cite{BKKS13} give the following example
to show that the containment asserted in Lemma~\ref{lemma-word-to-tree}
may be proper if $M$ is not deterministic.
The $\omega$-language $L$ specified by ${(a+b)}^*b^{\omega}$ can
be recognized by the nondeterministic B\"{u}chi acceptor $M$ with
two states, $q_0$ and $q_1$,
transition function
$\delta(q_0,a) = \{q_0\}$, $\delta(q_0,b) = \{q_0, q_1\}$,
$\delta(q_1,b) = \{q_1\}$, and accepting state set $\{q_1\}$.
Let $d = 2$, specifying binary trees with directions $\{1,2\}$.
Then $M^{T,2}$ is a nondeterministic $\omega$-tree acceptor, but
the following example shows $\lang{M^{T,2}} \subsetneq \Trees_2(L)$.
Consider the binary $\omega$-tree $t$ that labels every node in $1^*2$ with
$a$ and every other non-root node with $b$.
Clearly $t \in \Trees_2(L)$ because every infinite path in $t$
has at most one $a$, but no run of $M^{T,2}$ can satisfy the acceptance
criterion on the path $1^{\omega}$. Suppose $r$ were an accepting run of $M^{T,2}$ on $t$. Then for some $n\geq 0$, $r(1^n)$ would have to be equal to $q_1$. But then such a mapping $r$ would not be a valid run because $1^n2$ is labeled by $a$ and $\delta^{T,2}(q_1,(b,a)) = \emptyset$ because $\delta(q_1,a) = \emptyset$.
% However, $M^{T,2}$ does not accept $t$.  There is no run of $M^{T,2}$ that assigns $q_1$ to any node $1^n$ of the path $1^{\omega}$, because otherwise there is no state that could be assigned to the child node $1^n2$.  Thus, no run of $M^{T,2}$ on $t$ satisfies the acceptance criterion on the path $1^{\omega}$, and $M^{T,2}$ does not accept $t$.

\subsection{Good for trees}%
\label{subsection-Good-for-trees}

This phenomenon motivates the following definition.
An $\omega$-word acceptor $M$ is \concept{good for trees}
iff for any positive integer $d$,
$\lang{\mtd} = \Trees_d(\lang{M})$.
Nondeterministic $\omega$-word acceptors that are good for trees
are equivalent in expressive power to deterministic $\omega$-word
acceptors, as
stated by the following result of Boker et al.

\begin{thmC}[\cite{BKKS13}]%
	\label{theorem-bkks}
	Let $L$ be a regular $\omega$-word language and $d \ge 2$.
	If $\Trees_d(L)$ is recognized
	by a nondeterministic $\omega$-tree acceptor
	with acceptance criterion $\alpha$, then $L$ can be
	recognized by a deterministic $\omega$-word acceptor
	with acceptance criterion $\alpha$.
\end{thmC}

This theorem generalizes prior results of
Kupferman, Safra and Vardi
for B\"{u}chi acceptors~\cite{KSV06}
and
Niwi\'{n}ski and Walukiewicz
for parity acceptors~\cite{NW1998}.
One consequence of Theorem~\ref{theorem-bkks} is that when $d \ge 2$, nondeterministic $\omega$-word acceptors that are good for trees are not more expressive than the corresponding deterministic $\omega$-word acceptors.
% To see this, if $N$ is a nondeterministic $\omega$-word acceptor with acceptance criterion $\alpha$ that is good for trees, then $N^{T,d}$ is a nondeterministic $\omega$-tree acceptor with acceptance criterion $\alpha$ recognizing $\Trees_d(\lang{N})$, so there is a deterministic $\omega$-word acceptor $M$ with acceptance criterion $\alpha$ such that $\lang{M} = \lang{N}$.
%
Also, for $d \ge 2$, nondeterminism does not increase expressive power
over determinism when recognizing $\omega$-tree languages of the form
$\Trees_d(L)$.
To see this, if $N$ is a nondeterministic $\omega$-tree
acceptor with acceptance criterion $\alpha$ recognizing
$\Trees_d(L)$ then there is a deterministic $\omega$-word acceptor $M$
with acceptance criterion $\alpha$ such that $\lang{M} = L$,
and $\mtd$ is a deterministic $\omega$-tree acceptor with acceptance
criterion $\alpha$ that also recognizes $\Trees_d(L)$.

% For languages of the form $\Trees_d(L)$, nondeterminism of tree acceptors does not increase expressive power over determinism when $d \ge 2$.
However, it is possible that nondeterminism permits
acceptors with smaller numbers of states.
Kuperberg and Skrzypczak~\cite{KS15}
have shown that for an NBT acceptor $M$
recognizing the $\omega$-tree language $\Trees_d(L)$, there is a DBW acceptor with at most
$|M|^2$ states recognizing $L$,
so nondeterminism gives at most a quadratic savings
for B\"{u}chi tree acceptors that are good for trees.
However, they have also shown that the blowup in the
case of nondeterministic co-B\"{u}chi tree acceptors
(and all higher parity conditions) is necessarily exponential
in the worst case.

\section{Learning tree languages}%
\label{section-Learning-tree-languages}

We address the problem of learning derived $\omega$-tree languages by giving a polynomial time reduction of the problem of learning $\Trees_d(\class{C})$ to the problem of learning $\class{C}$.
The paradigm of learning we consider is exact learning
with membership queries and equivalence queries.
Maler and Pnueli~\cite{Maler1995} have given a polynomial
time algorithm to learn
the class of weak regular $\omega$-languages
using membership and equivalence queries.
Their algorithm and
the reduction we give in Theorem~\ref{theorem-reduction-general}
prove the following theorem.

\begin{thm}%
	\label{theorem-learn-trees}
	For every positive integer $d$,
	there is a polynomial time algorithm to learn $\Trees_d(\dwpw)$
	using membership and equivalence queries.
\end{thm}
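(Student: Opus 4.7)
The plan is to derive Theorem~\ref{theorem-learn-trees} by composing two ingredients: the Maler--Pnueli polynomial-time algorithm $\A$ for learning the class $\dwpw$ using membership and equivalence queries, and the general reduction of Theorem~\ref{theorem-reduction-general}, which lifts any polynomial-time learner for a subclass $\class{C}$ of $\dbw$ to a polynomial-time learner for $\Trees_d(\class{C})$. Since $\dwpw \subseteq \dbw$, instantiating the reduction with $\class{C} = \dwpw$ and with $\A$ as the underlying word learner immediately produces the required tree learner $\ATrees$.

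To spell out the composition, $\ATrees$ runs $\A$ and intercepts its queries about the unknown word language $L \in \dwpw$. A membership query of $\A$ on an $\omega$-word $w$ is answered by asking a tree membership query on the $d$-ary $\omega$-tree $t_w$ whose label at every non-root node $v$ equals the $|v|$-th symbol of $w$: every infinite path of $t_w$ is labeled by $w$, so $t_w \in \Trees_d(L)$ iff $w \in L$. An equivalence query of $\A$ with candidate acceptor $M$ is answered by posing an equivalence query with the derived tree acceptor $\mtd$; since Maler--Pnueli hypotheses are deterministic, Lemma~\ref{lemma-word-to-tree} gives $\lang{\mtd} = \Trees_d(\lang{M})$, so a ``yes'' answer certifies exact identification of the target tree language. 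A positive tree counterexample $t \in \Trees_d(L) \setminus \lang{\mtd}$ can be turned into a word counterexample for $\A$ by locating a path of $t$ whose label is rejected by the (unique, deterministic) run of $\mtd$ on $t$, which is a routine graph-theoretic search on the finite state space of $M$.

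The main obstacle is the negative case, when the teacher returns $t \in \lang{\mtd} \setminus \Trees_d(L)$: all paths of $t$ are in $\lang{M}$, yet $\paths(t)$ is not contained in the unknown $L$, and we must extract from $t$ some path whose label lies in $\lang{M} \setminus L$ without direct access to $L$. This is precisely the content that Theorem~\ref{theorem-reduction-general} must supply. As sketched in the introduction, the strategy is first to show that an unrestricted subset query on $\omega$-word languages ($\USQ$) can be simulated by restricted subset queries ($\RSQ$), where the counterexample witnessing non-containment is constructed rather than supplied; and then to show that the required $\RSQ$s on words can be discharged by $\omega$-tree membership queries on suitably engineered trees derived from the acceptors in play. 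Composing these simulations yields a procedure that, using only $\MQ$ and $\EQ$ on $\Trees_d(L)$, converts a negative tree counterexample into a positive or negative word counterexample for $\A$.

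Finally, polynomial running time would follow from three observations: (i)~$|\mtd| = |M|$, so acceptor sizes presented in equivalence queries are preserved; (ii)~each simulated word query of $\A$ is discharged by a polynomial number of tree-level queries plus polynomial additional work, the bound depending on $|M|$ and the length of the tree counterexamples returned by the teacher; and (iii)~the Maler--Pnueli algorithm issues only polynomially many queries in the size of the target acceptor and in the maximum length of counterexamples it receives. Chaining these bounds yields a running time polynomial in the size of the minimal deterministic weak parity acceptor for $L$ and the longest counterexample returned by the teacher, which is exactly the claim of Theorem~\ref{theorem-learn-trees}.
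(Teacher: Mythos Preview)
Your proposal is correct and follows exactly the paper's approach: the proof of Theorem~\ref{theorem-learn-trees} is simply the composition of Maler--Pnueli's polynomial-time learner for $\dwpw$ (which is contained in $\dbw$) with the general reduction of Theorem~\ref{theorem-reduction-general}. Your additional elaboration of how $\ATrees$ handles membership queries, equivalence queries, and positive/negative counterexamples accurately summarizes the machinery behind Theorem~\ref{theorem-reduction-general}, though strictly speaking the proof of Theorem~\ref{theorem-learn-trees} only needs to invoke that theorem as a black box.
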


\subsection{Representing examples}%
\label{subsection-Representing-examples}

For a learning algorithm, the examples tested by membership queries
and the counterexamples returned by equivalence queries
need to be finitely represented.
For learning regular $\omega$-word languages, it suffices to consider
\concept{ultimately periodic $\omega$-words}, that is,
words of the form ${u(v)}^{\omega}$ for finite words $u \in \Sigma^*$ and
$v \in \Sigma^+$.
If two regular $\omega$-word languages agree on all the ultimately
periodic $\omega$-words, then they are equal.
The pair $(u,v)$ of finite words represents the ultimately
periodic word ${u(v)}^{\omega}$.

The corresponding class of examples in the case of regular $\omega$-tree
languages is the class of \concept{regular $\omega$-trees}.
These are $\omega$-trees that have a finite number of
nonisomorphic complete infinite subtrees.
We represent a regular $\omega$-tree $t$ by a
\concept{regular $\omega$-tree automaton} $A_t = (Q,q_0,\delta,\tau)$,
where $(Q,q_0,\delta)$ is a complete deterministic finite state
word automaton over the input alphabet $D = \{1,\ldots,d\}$
and $\tau$ is an output function that labels each transition
with an element of $\Sigma$.
That is, $\tau: Q \times D \rightarrow \Sigma$.
The regular $\omega$-tree $t$ represented by such an automaton $A_t$
is defined as follows.
For $x \in D^+$, let $i \in D$ be the last symbol of $x$ and let
$x'$ be the rest of $x$, so that $x = x' \cdot i$.
Then define $t(x) = \tau(\delta(q_0,x'),i)$, that is,
$t(x)$ is the label assigned by $\tau$ to the last transition
in the unique run of $A_t$ on $x$.
%
% For example, for $\Sigma = \{a,b\}$, the binary $\omega$-tree that labels every node in $1^*2$ with $a$ and every other non-root node with $b$ can be represented by the regular $\omega$-tree automaton with three states $q_0$, $q_1$, $q_2$, transition function $\delta(q_0,1) = q_0$, $\delta(q_0,2) = q_1$ and all other transitions to $q_2$, with $\tau$ defined to be $a$ on the transition $(q_0,2)$ and $b$ on all other transitions.

Rabin~\cite{Rabin1972} proved that if two regular $\omega$-tree languages agree on all
the regular $\omega$-trees then they are equal.
Thus, ultimately periodic $\omega$-words
and regular $\omega$-trees are proper subsets of
examples that are nonetheless sufficient
to determine the behavior of regular $\omega$-word and $\omega$-tree acceptors
on all $\omega$-words and $\omega$-trees, respectively.

\subsection{Types of queries for learning}%
\label{subsection-Types-of-queries-for-learning}

We consider the situation in which a learning algorithm
$\A$ is attempting to learn an initially unknown target language
$L$ of $\omega$-words from a known class $\class{C} \subseteq \dbw$.
The information that $\A$ gets about $L$ is in the form
of answers to queries of specific types~\cite{Angluin:1988}.
The learning algorithm will use membership and equivalence
queries, whereas, restricted and unrestricted subset
queries will in addition be considered in the proof.

In a \concept{membership query about} $L$, abbreviated $\MQ$,
the algorithm $\A$
specifies an example as a pair of finite words $(u,v)$
and receives the answer ``yes'' if ${u(v)}^{\omega} \in L$
and ``no'' otherwise.
In an \concept{equivalence query about} $L$, abbreviated $\EQ$,
the algorithm $\A$ specifies a hypothesis language $\lang{M}$
as a DBW acceptor $M$, and receives either
the answer ``yes'' if $L = \lang{M}$, or ``no'' and
a counterexample, that is, a pair of finite words $(u,v)$
such that ${u(v)}^{\omega} \in (L \oplus \lang{M})$,
where $B \oplus C$ denotes the symmetric difference of sets
$B$ and $C$.

In a \concept{restricted subset query about} $L$, abbreviated $\RSQ$,
the algorithm $\A$ specifies a hypothesis language $\lang{M}$
as a DBW acceptor $M$, and receives the
answer ``yes'' if $\lang{M} \subseteq L$ and ``no'' otherwise.
An \concept{unrestricted subset query about} $L$, abbreviated $\USQ$,
is like a restricted subset query,
except that in addition to the answer of ``no'',
a counterexample $(u,v)$ is provided such that
${u(v)}^{\omega} \in (\lang{M} \setminus L)$.

A learning algorithm $\A$ using specific types of queries
\concept{exactly learns} a class $\class{C}$ of $\omega$-word languages
iff
for every $L \in \class{C}$, the algorithm makes a finite number
of queries of the specified types about $L$
and eventually
halts and outputs a DBW acceptor $M$ such that $\lang{M} = L$.
The algorithm runs in polynomial time iff
there is a fixed polynomial $p$ such that for every
$L \in \class{C}$, at every point the number of steps used
by $\A$ is bounded by $p(n,m)$, where $n$ is the size of the
smallest DBW acceptor recognizing $L$, and $m$ is
the maximum length of any counterexample $\A$ has received
up to that point.

The case of a learning algorithm for $\omega$-tree languages is analogous,
except that the examples and counterexamples are given by
regular $\omega$-tree automata,
and the hypotheses provided to equivalence or subset queries
are represented by DBT acceptors.
We also consider cases in which the inputs to equivalence
or subset queries may be NBW or NBT acceptors.

\section{Framework of a reduction}%
\label{section-Framework-of-a-reduction}

Suppose $\A$ is a learning algorithm that uses membership and
equivalence queries and exactly learns a class $\class{C} \subseteq \dbw$.
We shall describe an algorithm $\ATrees$ that uses membership and
equivalence queries and
exactly learns the derived class $\Trees_d(\class{C})$ of
$\omega$-tree languages.
Note that $\Trees_d(\class{C}) \subseteq \dbt$.

The algorithm $\ATrees$ with target concept $\Trees_d(L)$
simulates algorithm $\A$ with target concept $L$.
In order to do so, $\ATrees$ must correctly answer
membership and equivalence queries from $\A$ about $L$
by making one or more membership and/or equivalence queries
of its own about $\Trees_d(L)$.
Before describing the algorithm $\ATrees$
we establish some basic results about regular $\omega$-trees.

\subsection{Testing acceptance of a regular \texorpdfstring{$\omega$}{omega}-tree}%
\label{section-Testing-acceptance-of-t}
\noindent
We describe a polynomial time algorithm $\Acc(A_t,M)$
that takes as input a regular $\omega$-tree $t$
represented by a regular $\omega$-tree automaton
$A_t =(Q_1, q_{0,1}, \delta_1, \tau_1)$
and
a DBW acceptor
$M = (Q_2, q_{0,2}, \delta_2, F_2)$
and determines whether or not $\mtd$ accepts $t$.
If not, it also outputs a pair $(u,v)$ of finite
words such that
${u(v)}^{\omega} \in (\paths(t) \setminus \lang{M})$.

    \begin{algorithm}%
    \scriptsize
	\caption{: $\Acc(A_t,M)$}%
	\label{algorithm-Acc}
	\begin{algorithmic}
		\REQUIRE {$A_t = (Q_1,q_{0,1},\delta_1,\tau_1)$ representing $t$;\\
			$M = (Q_2, q_{0,2}, \delta_2, F_2)$, a complete DBW acceptor}

		\ENSURE {Return ``yes'' if $\mtd$ accepts $t$\\
			else return ``no'' and $(u,v)$ with ${u(v)}^{\omega} \in (\paths(t) \setminus \lang{M})$.}

		\vspace{0.2cm}
		\STATE let $Q = Q_1 \times Q_2$
		\STATE let $q_0 = (q_{0,1},q_{0,2})$
		\FORALL {$(q_1, q_2) \in Q$ and $i \in D$}
		\STATE let $\delta((q_1,q_2),i) =
		(\delta_1(q_1,i),\delta_2(q_2,\tau_1(q_1,i)))$
		\ENDFOR
		\STATE let $F = \{(q_1,q_2) \mid q_2 \in F_2\}$
		\STATE let $M' = (Q,q_0,\delta,F)$
		\IF {$\lang{M'} = D^{\omega}$}
		\RETURN ``yes''
		\ELSE
		\STATE find $x(y)^{\omega} \in (D^{\omega} \setminus \lang{M'})$
		\STATE let ${u(v)}^{\omega} = t(x(y)^{\omega})$
		\RETURN ``no'' and $(u,v)$
		\ENDIF
	\end{algorithmic}
\end{algorithm}

We may assume $M$ is complete by adding (if necessary)
a new non-accepting sink state and directing
all undefined transitions to the new state.
We construct a DBW acceptor $M'$
over the alphabet $D = \{1,\ldots,d\}$
by combining
$A_t$ and $M$ as follows.
The states are $Q = Q_1 \times Q_2$,
the initial state is $q_0 = (q_{0,1},q_{0,2})$,
the set of accepting states is $F = \{(q_1,q_2) \mid q_2 \in F_2\}$,
and the transition function $\delta$ is defined by
$\delta((q_1,q_2),i) = (\delta_1(q_1,i), \delta_2(q_2,\tau_1(q_1,i)))$
for all $(q_1,q_2) \in Q$ and $i \in D$.
For each transition, the output of the regular $\omega$-tree automaton
$A_t$ is the input of the DBW acceptor $M$.

An infinite path $\pi$ in $t$ corresponds to an $\omega$-word
$z \in D^{\omega}$, giving the sequence of directions from the root.
The unique run of $M'$ on $z$ traverses a sequence of states;
if we project out the first component, we get the run
of $A_t$ on $z$,
and if we project out the second component,
we get the run of $M$ on $t(\pi)$.
Then $\mtd$ accepts $t$ iff $M$ accepts $t(\pi)$
for every infinite path $\pi$, which is true iff
$\lang{M'} = D^{\omega}$.
This in turn is true
iff every nonempty accessible recurrent set of
states in $M'$ contains at least one element of $F$.

A set $S$ of states is \concept{recurrent} iff
for all $q, q' \in S$, there is a nonempty finite word $v$
such that $\delta(q,v) = q'$ and for every prefix $u$ of $v$ we have $\delta(q,u)\in S$.
A set $S$ of states is \concept{accessible} iff
for every $q \in S$ there exists a finite word $u$
such that $\delta(q_0,u) = q$.

The algorithm to test whether $\mtd$ accepts $t$
first removes from the transition graph of $M'$ all states that are not
accessible.
It then removes all states in $F$ and tests whether there
is any cycle in the remaining graph.
If not, then $\mtd$ accepts $t$.
Otherwise, there is a state $q$ in $Q$ and
finite words $x \in D^*$ and $y \in D^+$ such that
$\delta(q_0,x) = q$ and $\delta(q,y) = q$ and none
of the states traversed from $q$ to $q$ along the path $y$
are in $F$.
Thus, ${x(y)}^{\omega}$ is an ultimately periodic path $\pi$
that does not visit $F$ infinitely often,
and letting ${u(v)}^{\omega}$ be $t({x(y)}^{\omega})$, we
have ${u(v)}^{\omega} \in (\paths(t) \setminus \lang{M})$,
so the pair $(u,v)$ is returned in this case.
The required graph operations are standard and
can be accomplished in time polynomial in
$|M|$ and $|A_t|$.

\subsection{Representing a language as paths of a tree}%
\label{subsection-Representing-a-language-as-paths-of-a-tree}

When the algorithm $\ATrees$ makes a membership query
about $\Trees_d(L)$ with a regular $\omega$-tree $t$,
the answer is ``yes'' if $\paths(t) \subseteq L$ and
``no'' otherwise.
Thus, this query has the effect of a restricted subset
query about $L$ with $\paths(t)$.
However, this does not give us restricted subset queries
for arbitrary $\dbw$ languages.
Next, we examine the close relationship between
languages of the form $\paths(t)$ and safety languages.

An $\omega$-word language $L$ is a \concept{safety language}
iff $L$ is a regular $\omega$-word language and
for every $\omega$-word $w$ not in $L$, there exists
a finite prefix $x$ of $w$ such that no $\omega$-word with
prefix $x$ is in $L$. A language is safety iff it is in the class $\dwcw$.
An alternative characterization is that there is an
NBW acceptor $M = (Q,q_0,\delta,Q)$, all of whose
states are accepting, such that $\lang{M} = L$.
In this case, the acceptor is typically not complete
(otherwise it recognizes $\Sigma^{\omega}$).
An example of a language in $\dwpw$ that is not a
safety language is $a^* b^* {(a)}^{\omega}$.
Although $b^{\omega}$ is not in the language, every
finite prefix $b^k$ is a prefix of some
$\omega$-word in the language.

\begin{lem}%
	\label{lemma-tree-nbw}
	If $A_t$ is a regular $\omega$-tree automaton representing an $\omega$-tree $t$,
	then $\paths(t)$ is a safety language recognizable
	by an NBW acceptor $M$ with $|M| = |A_t|$.
\end{lem}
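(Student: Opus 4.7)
The plan is to build an NBW acceptor $M$ over $\Sigma$ by reinterpreting the regular $\omega$-tree automaton $A_t = (Q, q_0, \delta, \tau)$: take the same state set $Q$, same initial state $q_0$, make every state accepting, and define a transition function $\delta'$ that, on letter $\sigma \in \Sigma$, nondeterministically chooses any direction in $A_t$ whose output label is $\sigma$. Formally, set
\[
\delta'(q, \sigma) = \{\delta(q, i) \mid i \in D,\ \tau(q, i) = \sigma\},
\]
and let $M = (Q, q_0, \delta', Q)$. Then $|M| = |Q| = |A_t|$ by construction.

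Next I would verify $\lang{M} = \paths(t)$ by matching runs of $M$ with paths in $T_d$. For $\paths(t) \subseteq \lang{M}$: given an infinite path $\pi = x_0, x_1, x_2, \ldots$ in $T_d$ with $x_{n+1} = x_n \cdot i_{n+1}$, consider the sequence of states $q_n = \delta(q_0, i_1 \cdots i_n)$ reached by $A_t$. By the definition of $t$, the $n$-th symbol of $t(\pi)$ is $\tau(q_{n-1}, i_n)$, so $q_n \in \delta'(q_{n-1}, t(\pi)_n)$, giving an accepting run of $M$ on $t(\pi)$ (every state is in the accepting set). For $\lang{M} \subseteq \paths(t)$: any run $q_0, q_1, q_2, \ldots$ of $M$ on $w$ witnesses, for each step, a direction $i_{n+1} \in D$ with $\tau(q_n, i_{n+1}) = w_{n+1}$ and $q_{n+1} = \delta(q_n, i_{n+1})$. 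Stringing these directions gives an infinite path $\pi$ in $T_d$ with $t(\pi) = w$, so $w \in \paths(t)$.

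Finally, since every state of $M$ is accepting, the B\"uchi condition reduces to the existence of any infinite run, which is the standard presentation of a safety acceptor cited just before the lemma; hence $\paths(t) = \lang{M}$ is a safety language. I do not expect a substantive obstacle: the only care needed is keeping the roles of directions $D$ (the alphabet of $A_t$ as a word automaton) and labels $\Sigma$ (the alphabet of $M$) straight, and observing that $M$ need not be complete — an absent transition on $\sigma$ at $q$ simply reflects that no child of any node reached in state $q$ in $A_t$ carries label $\sigma$, which is precisely the prefix-based rejection allowed for safety languages.
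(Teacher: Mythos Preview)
Your proposal is correct and matches the paper's proof essentially line for line: the same NBW $M=(Q,q_0,\delta',Q)$ with $\delta'(q,\sigma)=\{\delta(q,i)\mid i\in D,\ \tau(q,i)=\sigma\}$, the same two-inclusion verification by translating between paths in $T_d$ and runs of $M$, and the same observation that an all-accepting B\"uchi acceptor recognizes a safety language. Your write-up is in fact slightly more explicit about the bijection between runs and direction sequences than the paper's.
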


\begin{proof}
	If $A_t = (Q,q_0,\delta,\tau)$, then we define $M = (Q,q_0,\delta',Q)$
	where
	\[\delta'(q,\sigma) = \{r \in Q \mid (\exists i \in D) (\delta(q,i) = r \wedge \tau(q,i) = \sigma)\}\]
	for all $q \in Q$ and $\sigma \in \Sigma$.
	That is, the $M$ transition on $q$ and $\sigma$ is defined to be all states reachable from $q$
	by a transition in $A_t$ labeled with $\sigma$.
	Note that all states of $M$ are accepting.

	If $w \in \paths(t)$, then there is a run $r_0, r_1, \ldots$ of $A_t$
	whose transitions are labeled by $w$, and this is a run of $M$ on $w$, so $w \in \lang{M}$.
	Conversely, if $w \in \lang{M}$, then there is some run $r_0, r_1, \ldots$ of $M$ on $w$,
	and this is a run of $A_t$ whose transitions are labeled with $w$,
	and therefore $w \in \paths(t)$.
\end{proof}

For the converse, representing a safety language as the paths of a regular $\omega$-tree,
we require a lower bound on $d$, the arity of the tree.
If $M = (Q,q_0,\delta,F)$ is an NBW acceptor
and $q \in Q$, we define the set of transitions out of $q$ to be
$\transitions(q) = \{(\sigma,r) \mid \sigma \in \Sigma \wedge r \in \delta(q,\sigma)\}$.
We define the \concept{out-degree} of $M$
to be the maximum over $q \in Q$ of the cardinality of $\transitions(q)$.

\begin{lem}%
	\label{lemma-nbw-tree}
	Let $L$ be a safety language recognized by NBW acceptor
	$M = (Q, q_0, \delta, Q)$.
	Suppose the out-degree of $M$ is at most $d$.
	Then there is a $d$-ary regular $\omega$-tree $t$
	such that $\paths(t) = L$, and $t$ is representable by $A_t$ with
	$|A_t| = |M|$.
\end{lem}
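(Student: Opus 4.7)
The plan is to reinterpret the NBW $M = (Q, q_0, \delta, Q)$ directly as a regular $\omega$-tree automaton $A_t = (Q, q_0, \delta', \tau)$ on the same state set, exploiting the fact that each $\transitions(q)$ has at most $d$ elements to pack them into the $d$ directions of $D$. Concretely, for each $q \in Q$ I enumerate $\transitions(q) = \{(\sigma_1, r_1), \ldots, (\sigma_{k_q}, r_{k_q})\}$ with $k_q \le d$, and set $\delta'(q, i) = r_i$ together with $\tau(q, i) = \sigma_i$ for $i \le k_q$. To make $\delta'$ total (as required of a regular $\omega$-tree automaton), I duplicate $(\sigma_1, r_1)$ into all remaining directions $i > k_q$; this does not add any new $(\sigma, r)$ pair to those represented out of $q$.

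The correctness claim $\paths(t) = L$ then reduces to matching paths in $T_d$ against runs of $M$. Given any infinite path $\pi = i_1 i_2 \cdots$ in $T_d$, let $r_n = \delta'(q_0, i_1 \cdots i_n)$ and observe that $t(\pi)_n = \tau(r_{n-1}, i_n)$; by construction each pair $(t(\pi)_n, r_n)$ lies in $\transitions(r_{n-1})$, so $r_0, r_1, r_2, \ldots$ is a run of $M$ on $t(\pi)$, and since every state of $M$ is accepting we get $t(\pi) \in \lang{M} = L$. Conversely, for $w = w_1 w_2 \cdots \in L$ take an accepting run $r_0, r_1, \ldots$ of $M$. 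Each step $(w_n, r_n) \in \transitions(r_{n-1})$ is assigned to some direction $i_n$ by the construction, so the path $i_1 i_2 \cdots$ in $T_d$ satisfies $t(i_1 i_2 \cdots) = w$, giving $w \in \paths(t)$.

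The main obstacle is the completeness of $\delta'$: if some state $q$ has $k_q = 0$, no duplication trick fills the $d$ children, and we cannot even define a regular $\omega$-tree automaton on $Q$. The fix is to preprocess $M$ by iteratively deleting dead-end states along with their incoming transitions; because every state of $M$ is accepting, only finite runs ever visit a dead-end, so $\lang{M}$ is unchanged and $|Q|$ can only shrink. In the non-degenerate case $L \neq \emptyset$ the initial state $q_0$ survives this pruning and every remaining state has $k_q \ge 1$, so the construction above goes through and yields $|A_t| = |Q| = |M|$, as required.
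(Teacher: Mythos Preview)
Your proof is correct and follows essentially the same route as the paper: the paper also builds $A_t$ on the state set $Q$ by choosing, for each $q$, a surjective map $f_q\colon D\to\transitions(q)$ and reading off $\delta_t$ and $\tau$ from it, which is exactly your ``enumerate and duplicate'' construction. The paper handles the dead-end issue with the same preprocessing assumption (``we may assume every state is accessible and has at least one transition defined''); note that, as you yourself observe, pruning can only shrink $Q$, so strictly speaking the conclusion in both proofs is $|A_t|\le|M|$ rather than equality unless $M$ is already pruned.
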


\begin{proof}
	We may assume that every state of $M$ is accessible and has at least one
	transition defined.
	We define $A_t = (Q,q_0,\delta_t,\tau)$ over the alphabet $D = \{1,\ldots,d\}$
	as follows.
	For $q \in Q$, choose a surjective mapping $f_q$ from $D$ to $\transitions(q)$.
	Then for $q \in Q$ and $i \in D$, let $(\sigma,r) = f_q(i)$ and define
	$\delta_t(q,i) = r$ and $\tau(q,i) = \sigma$.

	If $w \in L$, then there is a run $r_0, r_1, \ldots$ of $M$ on $w$, and
	there is an infinite path in $A_t$ traversing the same states in which
	the labels are precisely $w$, so $w \in \paths(t)$.
	Conversely, if $w \in \paths(t)$, then there is an infinite path
	$\pi$ such that $t(\pi) = w$, and the sequence of states of $A_t$
	traversed by $w$ yields a run of $M$ on $w$, so $w \in L$.
\end{proof}

The NBW acceptor in the proof of Lemma~\ref{lemma-tree-nbw} can be determinized
via the subset construction to give a DBW acceptor of size at most
$2^{|A_t|}$ recognizing the same language.
In the worst case this exponential blow up in converting a regular $\omega$-tree
automaton to a DBW acceptor is necessary, as shown by the following lemma.

% this is just an exponential blowup NBW -> DBW for a safety language
% which is presumably somewhere in the literature

\begin{lem}\label{lemma-exp-blowup-nbw-dbw}
	There exists a family of regular $\omega$-trees $t_1, t_2, \ldots$ such that
	$t_n$ can be represented by a regular $\omega$-tree automaton of size $n+2$,
	but the smallest DBW acceptor recognizing $\paths(t_n)$ has size at
	least $2^n$.
\end{lem}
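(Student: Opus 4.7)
The plan is to exhibit an explicit family of trees $t_n$ whose paths form a safety language with DBW complexity at least $2^n$. Since $\paths(t)$ is always a safety language (Lemma~\ref{lemma-tree-nbw}) and Lemma~\ref{lemma-nbw-tree} lets us realise any safety NBW as $\paths(t)$ of a tree of the same size (at the cost of branching equal to the NBW's out-degree), it suffices to produce safety NBWs $M_n$ of size $n+2$ whose minimal DBW equivalents have at least $2^n$ states. I define the safety language
\[
L_n \;=\; \{\, w \in \{0,1\}^\omega \;:\; \exists\, i \in \{0,1,\ldots,n\},\ \forall j \ge 0,\ w_{i + j(n+1)} = 0 \,\},
\]
i.e.\ $w \in L_n$ iff some residue class modulo $n+1$ consists entirely of zeros. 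Build an NBW $M_n$ with states $\{q_0, r_0, r_1, \ldots, r_n\}$, all accepting, where $r_k$ encodes ``committed to an offset; next mandatory $0$ is $k$ symbols away''. From $q_0$, reading any symbol forces a nondeterministic commit to an offset $i \in \{0,\ldots,n\}$: on $0$, transition to any of $r_n, r_0, \ldots, r_{n-1}$ (committing to $i = 0, 1, \ldots, n$); on $1$, transition to any of $r_0, \ldots, r_{n-1}$ (committing to $i \ge 1$ only). Crucially, $q_0$ has no self-loops. From $r_0$: on $0$ go to $r_n$, on $1$ get stuck. From $r_k$ with $k \ge 1$: on any input go to $r_{k-1}$. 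Unwinding runs confirms $\lang{M_n} = L_n$. Applying Lemma~\ref{lemma-nbw-tree} (with branching at least the out-degree $2n+1$ of $M_n$) yields the required tree automaton $A_n$ of size $n+2$ with $\paths(t_n) = L_n$.

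\textbf{Exponential lower bound via Myhill--Nerode.} For each $S \subseteq \{0,1,\ldots,n\}$ let $u_S \in \{0,1\}^{n+1}$ be the length-$(n+1)$ word with $u_S[k] = 0 \iff k \in S$. For $S \neq S'$ pick $r \in S \mathbin{\triangle} S'$ and let $v$ be the $\omega$-word with $v_j = 0 \iff j \equiv r \pmod{n+1}$. A direct computation shows that the only potential witnessing offset for $u_S v \in L_n$ is $i = r$ (because $v$'s zeros lie exclusively at residue $r$), and this witness succeeds iff $r \in S$. Hence $u_S v \in L_n \iff r \in S$, while $u_{S'} v \in L_n \iff r \in S'$, so the left quotients of $L_n$ by $u_S$ and $u_{S'}$ differ. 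This produces $2^{n+1}$ pairwise distinct Myhill--Nerode classes, forcing any DBW for $L_n$ to have at least $2^{n+1} \ge 2^n$ states.

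\textbf{Main obstacle.} The delicate point is the safety constraint. Since $M_n$ has all states accepting, allowing $q_0$ to loop on every input symbol would force $\lang{M_n} = \Sigma^\omega$: one could simply stay at $q_0$ forever, trivialising the language. The construction sidesteps this by forbidding self-loops at $q_0$, so that every infinite run commits to an offset on its very first transition and the nondeterministic commit genuinely constrains the language. Verifying that this does not accidentally exclude words in $L_n$ reduces to checking that for every $w \in L_n$ some valid commit choice yields a surviving run, which matches the definition of $L_n$ directly.
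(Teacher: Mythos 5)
Your proof is correct, but it takes a genuinely different route from the paper's. The paper uses the fixed three-letter alphabet $\{a,b,c\}$ and the safety language ${(a+b+(a{(a+b)}^nc))}^{\omega}$ (``every $c$ is preceded by $a{(a+b)}^n$''), recognized by an $(n+2)$-state NBW of out-degree $3$, so the resulting trees $t_n$ are \emph{ternary for every $n$}; the $2^n$ lower bound is then asserted in one line (a DBW must distinguish the $2^n$ words in ${(a+b)}^n$). You instead use the ``some residue class mod $n+1$ is all zeros'' language, whose $(n+2)$-state all-accepting NBW has out-degree $2n+1$ at the initial state, so your $t_n$ has arity growing linearly in $n$. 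Since the paper measures the size of a regular $\omega$-tree automaton purely by its number of states, your family does satisfy the lemma as literally stated, and your construction and verification (the commit-at-the-first-step trick avoiding a trivializing self-loop at $q_0$, and the check that $\lang{M_n}=L_n$) are sound. Your lower bound is also more carefully argued than the paper's: the words $u_S$, $S\subseteq\{0,\ldots,n\}$, have pairwise distinct residual languages (witnessed by the single-residue word $v$), and for a \emph{deterministic} $\omega$-acceptor distinct residuals force distinct states, giving $2^{n+1}\ge 2^n$; this is a clean, fully explicit Myhill--Nerode-style argument where the paper only gestures at ``enough states to distinguish ${(a+b)}^n$''. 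The one caveat worth noting is that with arity $d=2n+1$ your tree automata have $\Theta(n^2)$ transitions even though they have only $n+2$ states, so the paper's fixed-arity example is a somewhat sharper illustration of the point the lemma is meant to make (that the subset-construction blowup from $A_t$ to a DBW is unavoidable even for bounded branching).
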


\begin{proof}
	Let $\Sigma = \{a,b,c\}$ and let $L_n$ be ${(a + b + (a{(a+b)}^n c))}^{\omega}$.
	This is a safety language: $w \in L_n$ iff every occurrence of $c$ in $w$
	is preceded by a word of the form $a{(a+b)}^n$.
	There is a NBW acceptor $M_n$ of $n+2$ states recognizing $L_n$.
	The states are nonnegative integers in $[0,n+1]$, with $0$ the initial
	state,
	$\delta(0,a) = \{0,1\}$, $\delta(0,b) = 0$,
	$\delta(i,a) = \delta(i,b) = i+1$ for $1 \le i \le n$,
	and $\delta(n+1,c) = 0$.

	By Lemma~\ref{lemma-nbw-tree}, there is a ternary regular $\omega$-tree
	$t_n$ such that $\paths(t_n) = L_n$ and $t_n$ is represented by a
	regular $\omega$-tree automaton with $n+2$ states.
	However, any
	DBW acceptor recognizing $L_n$ must have enough states to distinguish
	all $2^n$ strings in ${(a+b)}^n$ in order to check the safety condition.
\end{proof}

If $t$ is a $d$-ary regular $\omega$-tree represented by
the regular $\omega$-tree automaton $A_t$, then
$\acceptor(A_t)$ denotes the NBW acceptor $M$ recognizing $\paths(t)$
constructed from $A_t$ in the proof of Lemma~\ref{lemma-tree-nbw}.
Note that the out-degree of $\acceptor(A_t)$ is at most $d$.

If $M$ is an NBW acceptor such that $\lang{M}$ is a safety language
and the out-degree of $M$ is at most $d$, then
$\tree_d(M)$ denotes the regular $\omega$-tree automaton $A_t$
constructed from $M$ in the proof of Lemma~\ref{lemma-nbw-tree}.
We also use the notation $\tree_d(L)$ if $L$ is a safety language
and the implied acceptor for $L$ is clear.

For example, given finite words $u \in \Sigma^*$ and $v \in \Sigma^+$,
the singleton set containing ${u(v)}^{\omega}$ is a safety language
recognized by a DBW of out-degree $1$ and size linear in $|u|+|v|$.
Then $\tree_d({u(v)}^{\omega})$ represents the $d$-ary tree all of whose
infinite paths are labeled with ${u(v)}^{\omega}$.

\section{The algorithm \texorpdfstring{$\ATrees$}{A-trees}}%
\label{section-The-algorithm-ATrees}

We now describe the algorithm $\ATrees$,
which learns $\Trees_d(L)$
by simulating the algorithm $\A$
and answering the membership and equivalence queries of $\A$ about $L$.
It is summarized in Algorithm~\ref{algorithm-ATrees}, and some of the cases are illustrated in an example presented in Appendix~\ref{app:A-tree-example}.
\begin{algorithm}[t]%
    \footnotesize
	%%%% algorithm A_{\Trees} simulating A
	\caption{: $\ATrees$}%
	\label{algorithm-ATrees}
	\begin{algorithmic}[t]
		\REQUIRE {
			Learning algorithm $\A$ for $\class{C}$;\\
			$\MQ$ and $\EQ$ access to $\Trees_d(L)$ for $L \in \class{C}$}

		\ENSURE {Acceptor $\mtd$ such that $\lang{\mtd} = \Trees_d(L)$}

		\vspace{0.2cm}
		\WHILE{$\A$ has not halted}
		\IF {next step of $\A$ is not a query}
		\STATE {simulate next step of $\A$}
		\ELSIF {$\A$ asks $\MQ(u,v)$ about $L$}
		\STATE {answer $\A$ with $\MQ(\tree_d({u(v)}^{\omega}))$ about $\Trees_d(L)$}
		\ELSIF {$\A$ asks $\EQ(M)$ about $L$}
		\STATE {ask $\EQ(\mtd)$ about $\Trees_d(L)$}
		\IF {$\EQ(\mtd)$ answer is ``yes''}
		\RETURN {$\mtd$ and halt}
		\ELSE [$\EQ(\mtd)$ answer is counterexample tree $t$ given by $A_t$]
		\IF {$\Acc(A_t,M)$ returns ``no'' with value $(u,v)$}
		\STATE {answer $\A$ with $(u,v)$}
		\ELSE [$\Acc(A_t,M)$ returns ``yes'']
		\STATE {let $M' = \acceptor(A_t)$}
		\FORALL{accepting states $q$ of $M'$}
		\STATE{simulate in parallel $\Findctrex(M',q)$}
		\STATE{terminate all computations and answer $\A$ with the first $(u,v)$ returned}
		\ENDFOR
		\ENDIF
		\ENDIF
		\ENDIF
		\ENDWHILE [$\A$ halts with output $M$]
		\RETURN {$\mtd$ and halt}
	\end{algorithmic}
\end{algorithm}

If $\A$ asks a membership query with $(u,v)$ then
$\ATrees$ constructs the regular $\omega$-tree automaton
$\tree_d({u(v)}^{\omega})$ representing the $d$-ary regular $\omega$-tree
all of whose infinite paths are labeled ${u(v)}^{\omega}$,
and makes a membership query with $\tree_d({u(v)}^{\omega})$.
Because ${u(v)}^{\omega} \in L$ iff the tree represented by
$\tree_d({u(v)}^{\omega})$ is in $\Trees_d(L)$,
the answer to the query about $\tree_d({u(v)}^{\omega})$ is simply given
to $\A$ as the answer to its membership query about $(u,v)$.

For an equivalence query from $\A$ specified by a DBW acceptor $M$,
the algorithm $\ATrees$ constructs the corresponding DBT
acceptor $\mtd$, which recognizes $\Trees_d(\lang{M})$,
and makes an equivalence query with $\mtd$.
If the answer is ``yes'', the algorithm $\ATrees$ has succeeded
in learning the target $\omega$-tree
language $\Trees_d(L)$
and outputs $\mtd$ and halts.
Otherwise, the counterexample returned is a regular $\omega$-tree $t$
in $\lang{\mtd} \oplus \Trees_d(L)$, represented by a regular
$\omega$-tree automaton $A_t$.
A call to the algorithm $\Acc(A_t,M)$ determines whether $\mtd$ accepts $t$.
If $\mtd$ rejects $t$, then $t \in \Trees_d(L)$ and $t$ is
a \concept{positive counterexample}.
If $\mtd$ accepts $t$, then $t \not\in \Trees_d(L)$ and $t$ is
a \concept{negative counterexample}.
We next consider these two cases.

If $t$ is a positive counterexample then we know that
$t \in \Trees_d(L)$ and therefore
$\paths(t) \subseteq L$.
Because $t \notin \lang{\mtd}$, the acceptor $M$ must reject at least one
infinite path in $t$.
In this case, the algorithm $\Acc(A_t,M)$ returns a pair of finite words
$(u,v)$ such that ${u(v)}^{\omega} \in (\paths(t) \setminus \lang{M})$,
and therefore ${u(v)}^{\omega} \in (L \setminus \lang{M})$.
The algorithm $\ATrees$ returns
the positive counterexample
$(u,v)$ to $\A$ in response to its equivalence query with $M$.

If $t$ is a negative counterexample, that is,
$t \in (\lang{\mtd} \setminus \Trees_d(L))$, then we know that
$\paths(t) \subseteq \lang{M}$, but at least one
element of $\paths(t)$ is not in $L$, so
$(\lang{M} \setminus L) \neq \emptyset$.
Ideally, we would like to extract an ultimately
periodic $\omega$-word ${u(v)}^{\omega} \in (\paths(t) \setminus L)$
and provide $(u,v)$ to $\A$ as a negative counterexample
in response to its equivalence query with $M$.

If we could make an \emph{unrestricted subset query}
with $\paths(t)$ about $L$, then the counterexample
returned would be precisely what we need.

As noted previously, if $t$ is any regular $\omega$-tree
then we can simulate a restricted subset query
with $\paths(t)$ about $L$ by making a membership query
with $t$ about $\Trees_d(L)$, because $\paths(t) \subseteq L$
iff $t \in \Trees_d(L)$.
In order to make use of this, we next show how
to use restricted subset queries about $L$ to implement
an unrestricted subset query about $L$.

\subsection{Restricted subset queries}%
\label{subsection-Restricted-subset-queries}

To establish basic techniques,
we show how to reduce unrestricted subset
queries to restricted subset queries for nondeterministic
or deterministic finite acceptors over finite words.
Suppose $L \subseteq \Sigma^*$ and
we may ask restricted subset queries about $L$.
In such a query,
the input is a nondeterministic (resp., deterministic)
finite acceptor $M$,
and the answer is ``yes'' if $\lang{M}$ is a subset of $L$,
and ``no'' otherwise.
If the answer is ``no'', we show how to find
a shortest counterexample $u \in (\lang{M} \setminus L)$
in time polynomial in $|M|$ and $|u|$.

\begin{thm}%
	\label{thm-nfa-dfa-subset-reduction}
	There is an algorithm $\R^*$ which takes as input an NFW (resp., DFW) $M$,
	and has restricted subset query access to a language $L$
	with NFW (resp., DFW) acceptors
	as inputs,
	that correctly answers the unrestricted subset query with
	$M$ about $L$.
	Additionally, if $L$ is recognized by a DFW $T_L$, then
	$\R^*(M)$ runs in time bounded by a polynomial in $|M|$ and $|T_L|$.\footnote{The cardinality of $\Sigma$ is treated as a constant.}
\end{thm}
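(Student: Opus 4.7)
My plan is to implement $\R^*$ in two phases driven entirely by restricted subset queries. First I would issue $\RSQ(M)$; if it returns ``yes'', output ``yes'' and halt. Otherwise a counterexample exists, and I proceed to uncover one of shortest length. Throughout, for any finite word $u$ and nonnegative integer $j$, the language $\lang{M} \cap u\Sigma^j$ (and similarly $\lang{M} \cap \Sigma^{\le j}$) is recognized by the product of $M$ with a small DFW of $O(|u|+j)$ states that pins down the prefix and length, so the resulting acceptor is an NFW (resp.\ DFW) of size $O(|M|\cdot(|u|+j))$, which is a legal input to $\RSQ$.

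In the first phase (length discovery) I would perform exponential search on the length parameter $j$: query $\RSQ$ with acceptors for $\lang{M} \cap \Sigma^{\le 1}, \lang{M} \cap \Sigma^{\le 2}, \lang{M} \cap \Sigma^{\le 4}, \ldots$ until a query returns ``no'', then binary-search between the last two values of $j$ to pinpoint the smallest $k$ with $\lang{M} \cap \Sigma^{\le k} \not\subseteq L$. In the second phase (greedy reconstruction) I maintain the invariant that the current prefix $u$ of length $i < k$ satisfies $\lang{M} \cap u\Sigma^{k-i} \not\subseteq L$ and, trying each $\sigma \in \Sigma$ in turn, extend $u$ by the first $\sigma$ for which $\RSQ$ on an acceptor for $\lang{M} \cap u\sigma\Sigma^{k-i-1}$ returns ``no''. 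The invariant guarantees that at least one such $\sigma$ exists at each step, and after $k$ extensions the word $u$ lies in $\lang{M} \setminus L$ and has length $k$, which is the shortest possible.

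For the running time, the key observation is that $\lang{M} \cap \overline{L}$ is recognized by the product of $M$ with a DFW complement of $T_L$, giving an NFW (resp.\ DFW) with at most $|M|\cdot|T_L|$ states; if this language is nonempty, its shortest word has length less than $|M|\cdot|T_L|$, so $k = O(|M|\cdot|T_L|)$. The length-discovery phase therefore uses $O(\log k)$ queries and the reconstruction phase uses $O(k)$ queries (treating $|\Sigma|$ as a constant), and each query carries an acceptor of size polynomial in $|M|$ and $k$, yielding the claimed polynomial bound. The main obstacle to address is that $\R^*$ does not know $|T_L|$ a priori and restricted subset queries return no information beyond a single bit; the exponential-then-binary search for $k$ is what sidesteps this ignorance while keeping the total number and sizes of queries polynomial in $|M|$ and $|T_L|$ in retrospect.
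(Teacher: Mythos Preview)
Your proposal is correct and follows essentially the same approach as the paper: first determine the length of a shortest counterexample, then recover the word one symbol at a time using length-and-prefix-restricted versions of $\lang{M}$, with the length bound $k < |M|\cdot|T_L|$ coming from the product automaton. The only differences are cosmetic: the paper uses a linear search $\ell = 0,1,2,\ldots$ with exact-length restrictions $\lang{M}[\ell]$ rather than your exponential-then-binary search on $\lang{M}\cap\Sigma^{\le j}$, but since the reconstruction phase already costs $\Theta(k)$ queries, your optimization does not change the overall bound.
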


The idea of the proof is to first establish the minimal length $\ell$ of a counterexample,
and then try to extend the prefix $\epsilon$ letter by letter until obtaining a full length minimal counterexample. Note that trying to establish a prefix of a counterexample letter by letter, without obtaining a bound first, may not terminate. For instance, if $L=\Sigma^* \setminus a^*b$, one can establish the sequence of prefixes $\epsilon, a, aa, aaa, \ldots$ and never reach a counterexample.

To prove Theorem~\ref{thm-nfa-dfa-subset-reduction} we first construct an acceptor $M_{\ell,v}$ for
$\lang{M}[\ell,v]$, the length and prefix restricted
version of $\lang{M}$, given $M$, $\ell$ and $v$
as inputs.

\begin{lem}%
	\label{lemma-length-and-prefix-restriction}
	There is a polynomial time algorithm to construct
	an acceptor $M_{\ell,v}$ for $\lang{M}[\ell,v]$ given a NFW acceptor $M$, a nonnegative integer $\ell$ and
	a finite word $v$, such that
	\begin{enumerate}
		\item $M_{\ell,v}$ has at most one accepting state, which has
		no out-transitions,
		\item the out-degree of $M_{\ell,v}$ is at most the out-degree of $M$,
		\item $M_{\ell,v}$ is deterministic if $M$ is deterministic.
	\end{enumerate}
\end{lem}

\commentout{
\begin{lem}{\ref{lemma-length-and-prefix-restriction}}[restated]
	There is a polynomial time algorithm to construct
	an acceptor $M_{\ell,v}$ for $\lang{M}[\ell,v]$ given a nondeterministic
	finite acceptor $M$, a nonnegative integer $\ell$ and
	a finite word $v$, such that
	\begin{enumerate}
		\item $M_{\ell,v}$ has at most one accepting state, which has
		no out-transitions,
		\item the out-degree of $M_{\ell,v}$ is at most the out-degree of $M$,
		\item $M_{\ell,v}$ is deterministic if $M$ is deterministic.
	\end{enumerate}
\end{lem}}

\begin{proof}
	If $\ell < |v|$, then $\lang{M}[\ell,v] = \emptyset$, and the output
	$M_{\ell,v}$ is a one-state acceptor with no accepting states.
	Otherwise, assume $v = \sigma_1 \sigma_2 \cdots \sigma_k$
	and construct $M'$ to be the deterministic finite
	acceptor for $v \cdot \Sigma^{\ell - |v|}$ with states $0, 1, \ldots, \ell$
	where $0$ is the inital state, $\ell$ is the final state, and
	the transitions are $\delta(i,\sigma_{i+1}) = i+1$ for $0 \le i < k$
	and $\delta(i,\sigma) = i+1$ for $k \le i < \ell$ and $\sigma \in \Sigma$.

	Then $M_{\ell,v}$ is obtained by a standard product construction of $M$
	and $M'$ for the intersection $\lang{M} \cap \lang{M'}$, with the
	observation that no accepting state in the product has any
	out-transitions defined, so they may all be identified.
	It is straightforward to verify the required properties of $M_{\ell,v}$.
\end{proof}

\begin{proof}[Proof of Theorem~\ref{thm-nfa-dfa-subset-reduction}]
	For input $M$, define $M_{[\ell,v]}$ to be the finite
	acceptor constructed by the algorithm of
	Lemma~\ref{lemma-length-and-prefix-restriction} to recognize
	the length and prefix restricted language $\lang{M}[\ell,v]$.

	For $\ell = 0, 1, 2, \ldots$,
	ask a restricted subset query with $M_{[\ell,\varepsilon]}$,
	until the first query answered ``no''.
	At this point, $\ell$ is the shortest length
	of a counterexample in $(\lang{M} \setminus L)$.
	Then a counterexample $u$ of length $\ell$ is constructed
	symbol by symbol.

	Assume we have found a prefix $u'$ of a
	counterexample of length $\ell$ in $(\lang{M} \setminus L)$,
	with $|u'| < \ell$.
	For each symbol $\sigma \in \Sigma$
	we ask a restricted subset query with $M_{[\ell,u' \sigma]}$,
	until the first query answered ``no''.
	At this point, $u'$ is extended to $u' \sigma$.
	If the length of $u' \sigma$ is now $\ell$, then $u = u' \sigma$
	is the desired counterexample; otherwise, we
	continue extending $u'$.

	Note that if the input $M$ is deterministic, then all
	of the restricted subset queries are made with deterministic
	finite acceptors.
	If $L$ is recognized by a deterministic finite acceptor $T_L$, then
	the value of $\ell$ is bounded by $|M| \cdot |T_L|$,
	and the algorithm runs in time bounded by a polynomial in
	$|M|$ and $|T_L|$.
\end{proof}

We now turn to the $\omega$-word case.

\begin{thm}%
	\label{theorem-restricted-subset-nbw-dbw}
	There is an algorithm $\R^{\omega}$ with input $M$
	and restricted subset query access about $L$,
	(a language recognized by a DBW acceptor $T_L$)
	that correctly answers the unrestricted subset query with $M$ about $L$.
	The algorithm
	$\R^{\omega}(M)$ runs in time bounded by a polynomial in $|M|$ and $|T_L|$.
	If $M$ is a DBW acceptor, then all the restricted subset
	queries will also be with DBW acceptors.
\end{thm}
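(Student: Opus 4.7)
The plan is to mirror $\R^*$ from Theorem~\ref{thm-nfa-dfa-subset-reduction} in the $\omega$-word setting. First I will bound canonical counterexamples: a counterexample ${u(v)}^\omega \in \lang{M}\setminus L$ corresponds to an accepting lasso in the (implicit) product of $M$ with the DCW obtained by viewing $T_L$'s acceptance dually, which has at most $|M|\cdot|T_L|$ states, giving $|u|,|v|\leq |M|\cdot|T_L|$.

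The key technical step will be an $\omega$-word analogue of Lemma~\ref{lemma-length-and-prefix-restriction}. For each candidate shape $(q_*, \ell_1, \ell_2)$ with $q_*$ an $M$-state and $\ell_1,\ell_2$ within the polynomial bound above, and any partial-prefix constraint on the first $\ell_1+\ell_2$ symbols, I will construct a restricted acceptor whose states are pairs $(i,q)$, where $i$ is the position along a lasso of stem length $\ell_1$ and loop length $\ell_2$ and $q$ is the current $M$-state. Its transitions enforce that the $M$-run reaches $q_*$ at position $\ell_1$ and returns there every $\ell_2$ steps, and its B\"uchi accepting set comprises those states inside the loop whose $M$-component lies in $F_M$. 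This acceptor has $O(|M|(\ell_1+\ell_2))$ states and is a DBW when $M$ is a DBW (an NBW when $M$ is an NBW); imposing the prefix constraint adds only $O(\ell_1+\ell_2)$ further states.

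With this ingredient in hand, $\R^\omega$ proceeds exactly like $\R^*$: iterate over shapes $(q_*,\ell_1,\ell_2)$ in increasing $\ell_1+\ell_2$, asking restricted subset queries; on the first ``no'' answer, greedily fix the $\ell_1+\ell_2$ symbols of $uv$ one at a time by choosing, at each position, any $\sigma\in\Sigma$ that keeps the refined query ``no''; and finally verify the assembled ultimately periodic word ${u(v)}^\omega$ by a singleton restricted subset query with a DBW of $O(|u|+|v|)$ states, returning $(u,v)$ on success.

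The hard part will be that the restricted acceptor constrains only the $M$-run's cyclic structure, not the literal word, so its language strictly contains the target ultimately periodic words and a ``no'' answer may be witnessed by a non-periodic $\omega$-word in $\lang{M}\setminus L$; the specific ultimately periodic word produced by greedy extraction could therefore still lie in $L$. I plan to address this by iterating through all polynomially many $(q_*,\ell_1,\ell_2)$ shapes, and if needed branching across greedy symbol choices within a shape, leveraging the existence of a canonical ultimately periodic counterexample matching some shape exactly to ensure termination in time polynomial in $|M|$ and $|T_L|$.
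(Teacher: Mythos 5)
Your skeleton matches the paper's at the top level: reduce to a single accepting state, bound the prefix and period lengths by $|M|\cdot|T_L|$ via a product/pigeonhole argument, and extract the prefix symbol by symbol with length-and-prefix-restricted queries. The prefix half of the extraction is sound and is essentially the paper's $\Findprefix$. The gap is exactly the one you flag in your final paragraph, and the fix you propose does not close it. Once you fix a shape $(q_*,\ell_1,\ell_2)$ and start filling in the loop symbols greedily, each ``no'' answer only certifies that \emph{some} $\omega$-word consistent with the partial constraint and with the imposed run structure lies in $\lang{M}\setminus L$; that witness need not be periodic with period $\ell_2$, so the greedy choices can be sustained all the way to a full word $uv$ for which ${u(v)}^{\omega}\in L$, and the final verification fails. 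You cannot strengthen the query acceptor to force genuine periodicity of the not-yet-fixed positions without remembering the loop word, which costs $|\Sigma|^{\ell_2}$ states; and ``branching across greedy symbol choices'' is a search over up to $|\Sigma|^{\ell_1+\ell_2}$ completions, i.e.\ exponential in $|M|\cdot|T_L|$. The paper's example $L=\Sigma^{\omega}\setminus{(abbaccadd)}^{\omega}$ is there precisely to show that knowing the period of the run structure does not let you read off the period of the word.

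The paper escapes this with a genuinely different mechanism ($\Findperiod$, $\Nextword$, $\Nextsymbol$): rather than guessing the period length up front and filling in its symbols, it grows a concatenation $v_1v_2\cdots v_n$ of loop words while maintaining the invariant $L_{q_0,q}\cdot{(v_1\cdots v_n\cdot L_{q,q})}^{\omega}\setminus L\neq\emptyset$. The open-ended $L_{q,q}$ factor left inside the $\omega$-power is what keeps every query acceptor polynomial-size and the invariant checkable by a restricted subset query, even though no period length has been committed to; each new $v_{n+1}$ is itself built symbol by symbol under an analogous loose invariant (Lemmas~\ref{lemma-length-bound-ell} and~\ref{lemma-next-symbol}). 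A pigeonhole on the states of $T_L$ (Lemma~\ref{lemma-loop-on-vs}) then guarantees that after at most $|T_L|$ rounds some contiguous block $v_i\cdots v_j$ is by itself a valid period. You would need this idea, or an equivalent one, to make the extraction of the periodic part both sound and polynomial.
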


%%% procedure \R^{\omega} to answer USQ(M)

\begin{algorithm}[h]%
    \footnotesize
	\caption{: $\R^{\omega}(M)$, implementing $\USQ(M)$}%
	\label{algorithm-Romega}
	\begin{algorithmic}
		\REQUIRE {$\RSQ$ access to $L$;\\
			$M = (Q,q_0,\delta,F)$, an NBW acceptor}

		\ENSURE {``yes'' if $\lang{M} \subseteq L$, else ``no'' and $(u,v)$ s.t.
			${u(v)}^{\omega} \in (\lang{M} \setminus L)$}

		\vspace{0.2cm}
		\IF {$\RSQ(M) =$ ``yes''}
		\RETURN ``yes''
		\ELSE
		\STATE find $q \in F$ such that $\RSQ(M_q) =$ ``no''
		\RETURN {``no'' and $\Findctrex(M,q)$}
		\ENDIF

	\end{algorithmic}
\end{algorithm}

For the sake of generality, the proof considers subset queries with NBW acceptors.
The procedure $\R^{\omega}(M)$ takes as input an
NBW acceptor $M$, and has restricted subset query access
(with NBW acceptors as inputs)
to $L$; it is summarized in Algorithm~\ref{algorithm-Romega}.
It first asks a restricted subset query with $M$ about $L$,
returning the answer ``yes'' if its query is answered ``yes''.
Otherwise, for each $q \in F$, it constructs the acceptor
$M_q = (Q,q_0,\delta,\{q\})$ with the single accepting
state $q$ and asks a restricted subset query with $M_q$
about $L$, until the first query answered ``no''.
There will be at least one such query answered ``no''
because any element of
$(\lang{M} \setminus L)$ must visit at least one
accepting state $q$ of $M$ infinitely many times,
and will therefore be in $\lang{M_q}$.
The procedure $\R^{\omega}(M)$ then calls the procedure
$\Findctrex(M,q)$ to find a counterexample to return  --- i.e., a pair $(u,v)$ such
that ${u(v)}^{\omega} \in (\lang{M_q} \setminus L)$, and thus also ${u(v)}^{\omega} \in (\lang{M} \setminus L)$.

\subsection{Producing a counterexample}%
\label{subsection-producing-a-counterexample}

The first challenge encountered in producing a counterexample, in comparison to the finite word case, is that one needs to work out both the period and the prefix of the counterexample to be found, and the two are correlated.
%Next is the description of the procedure $\Findctrex(M,q)$.
Define $L_{q_0,q}$ to be the set of finite words
that lead from the initial state $q_0$ to the state
$q$ in $M$, and define $L_{q,q}$ to be the set
of nonempty finite words that lead from $q$ back
to $q$ in $M$.
Because the language $L_{q_0,q} \cdot {(L_{q,q})}^{\omega}$ is exactly
the set of strings recognized by $M_q$, we know that
$L_{q_0,q} \cdot {(L_{q,q})}^{\omega} \setminus L \neq \emptyset$.

The procedure $\Findctrex(M,q)$ first finds a suitable period, corresponding to a bounded size of a prefix yet to be found, and then finds a prefix of that size in a similar manner to the finite word case.
An example is shown in Appendix~\ref{app:findctrex-example}.
%%% procedure $\Findctrex(M,q)$ to return $(u,v)$

\begin{algorithm}%
    \footnotesize
	\caption{: $\Findctrex(M,q)$}%
	\label{algorithm-Findctrex}
	\begin{algorithmic}
		\REQUIRE {$\RSQ$ access to $L$;\\
			$M = (Q,q_0,\delta,F)$, an NBW acceptor;\\
			$q \in F$;\\
			$L_{q_0,q} \cdot (L_{q,q})^{\omega} \setminus L \neq \emptyset$}

		\ENSURE {$(u,v)$ such that
			${u(v)}^{\omega} \in (L_{q_0,q} \cdot (L_{q,q})^{\omega} \setminus L)$}

		\vspace{0.2cm}
		\STATE let $v = \Findperiod(M,q)$
		\STATE let $u = \Findprefix(M,q,v)$
		\RETURN $(u,v)$
	\end{algorithmic}
\end{algorithm}

Since finding the period is more challenging than the prefix, we explain the procedure $\Findprefix(M,q,v)$ first.
The procedure $\Findprefix(M,q,v)$, summarized in Algorithm~\ref{algorithm-Findprefix},
finds a prefix word $u$ given a
period word $v$ which loops on state $q$ and is guaranteed to be a period of a valid counterexample.
It first finds a length $k$ such that there exists $u\in L_{q_0,q}$ of length $k$ such that $uv^\omega \notin L$. Then it finds such a
word $u$ symbol by symbol.
Note that it uses length and prefix restricted versions
of $L_{q_0,q}$.

%%% procedure \Findprefix u, given v

\begin{algorithm}%
    \footnotesize
	\caption{: $\Findprefix(M,q,v)$}%
	\label{algorithm-Findprefix}
	\begin{algorithmic}
		\REQUIRE {$\RSQ$ access to $L$;\\
			$M = (Q,q_0,\delta,F)$, an NBW acceptor;\\
			$q \in F$;\\
			$v \in L_{q,q}$;\\
			$L_{q_0,q} \cdot (v)^{\omega} \setminus L \neq \emptyset$}

		\ENSURE {$u \in L_{q_0,q}$ such that
			${u(v)}^{\omega} \in (L_{q_0,q} \cdot (v)^{\omega} \setminus L)$}

		\vspace{0.2cm}
		\STATE search for nonnegative integer $k$ such that
		$\RSQ(L_{q_0,q}[k] \cdot (v)^{\omega}) =$ ``no''
		\STATE let $u = \varepsilon$
		\WHILE {$|u| < k$}
		\STATE find $\sigma \in \Sigma$ such that
		$\RSQ(L_{q_0,q}[k,u\sigma] \cdot (v)^{\omega}) =$ ``no''
		\STATE set $u = u \cdot \sigma$
		\ENDWHILE
		\RETURN $u$
	\end{algorithmic}
\end{algorithm}

 Finding the periodic part is much more challenging. Indeed, even if one knows that there is a period of the form ${(a\Sigma^\ell)}^\omega$ for some $\ell$ then the size of the smallest period may be bigger than $\ell+1$. For instance, if $L = \Sigma^\omega \setminus {(abbaccadd)}^\omega$ then there is a period of the form ${(a\Sigma^2)}^\omega$ but the shortest period of a counterexample is of size $9$.

 %%% procedure \Findperiod to find $v_i ... v_j$

 \begin{algorithm}[ht]%
    \footnotesize
 	\caption{: $\Findperiod(M,q)$}%
 	\label{algorithm-Findperiod}
 	\begin{algorithmic}
 		\REQUIRE {$\RSQ$ access to $L$;\\
 			$M = (Q, q_0, \delta, F)$, an NBW acceptor;\\
 			$q \in F$;\\
 			$L_{q_0,q} \cdot (L_{q,q})^{\omega} \setminus L \neq \emptyset$}

 		\ENSURE {$v \in L_{q,q}$ such that
 			$L_{q_0,q} \cdot (v)^{\omega} \setminus L \neq \emptyset$}

 		\vspace{0.2cm}
 		\STATE let $y = \varepsilon$
 		\FORALL {integers $n = 1,2,3,\ldots$}
 		\STATE let $v_n = \Nextword(M,q,y)$
 		\STATE set $y = y \cdot v_n$
 		\FOR {integers $i$, $j$ with $1 \le i \le j \le n$}
 		\FOR {$k = 0$ to $n|M|$}
 		\IF {$\RSQ(L_{q_0,q}[k] \cdot (v_i \cdots v_j)^{\omega}) = $ ``no''}
 		\RETURN {$v = v_i \cdots v_j$}
 		\ENDIF
 		\ENDFOR
 		\ENDFOR
 		\ENDFOR
 	\end{algorithmic}
 \end{algorithm}

Procedure $\Findperiod(M,q)$, summarized in Algorithm~\ref{algorithm-Findperiod},  starts from the condition
\[L_{q_0,q} \cdot {(L_{q,q})}^{\omega} \setminus L \neq \emptyset\]
and finds a sequence of words $v_1, v_2, \ldots \in L_{q,q}$ such that for each $n \ge 1$,
\[L_{q_0,q} \cdot {(v_1 v_2 \cdots v_n \cdot L_{q,q})}^{\omega} \setminus L
\neq \emptyset.\]
For a sufficiently long such sequence, there exists a subsequence
$v = (v_i \cdots v_j)$
that is a suitable period word, as we prove in Section~\ref{subsection-length-restrictions-time-bounds}.

The procedure $\Nextword(M,q,y)$, summarized in Algorithm~\ref{algorithm-Nextword},
is called with $y = v_1 v_2 \cdots v_n$
and finds a suitable next word $v_{n+1}$.
After determining a length $\ell$, it repeatedly
calls the procedure $\Nextsymbol(M,q,y,\ell,v')$ to determine
the next symbol of a suitable word of length $\ell$.

%%% procedure \Nextword to find $v'$ given $y = v_1 \cdots v_n$

\hspace{-0.83cm}
\begin{minipage}[b]{0.54\textwidth}
\begin{algorithm}[H]%
    \footnotesize
	\caption{: $\Nextword(M,q,y)$}%
	\label{algorithm-Nextword}
	\begin{algorithmic}
		\REQUIRE {$\RSQ$ access to $L$;\\
			$M = (Q, q_0, \delta, F)$, an NBW acceptor;\\
			$q \in F$;\\
			$y \in L_{q,q}$ or $y = \varepsilon$;\\
			$L_{q_0,q} \cdot (y \cdot L_{q,q})^{\omega} \setminus L \neq \emptyset$}

		\ENSURE {$v' \in L_{q,q}$ such that
			$L_{q_0,q} \cdot (y \cdot v' \cdot L_{q,q})^{\omega} \setminus L \neq \emptyset$}

		\vspace{0.2cm}
		\STATE search for  integers $k,\ell \geq 0$ s.t.\\
		$\RSQ(L_{q_0,q}[k] \cdot (y \cdot L_{q,q}[\ell])^{\omega}) =$ ``no''
		\STATE let $v' = \varepsilon$

		\WHILE {$|v'| < \ell$}
		\STATE let $\sigma = \Nextsymbol(M,q,y,\ell,v')$
		\STATE set $v' = v' \cdot \sigma$
		\ENDWHILE
		\RETURN $v'$
	\end{algorithmic}
\end{algorithm}
\end{minipage}
\hspace{2ex}
\begin{minipage}[b]{0.43\textwidth}
		\begin{figure}[H]
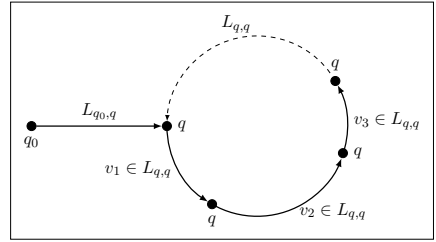

				\caption{An illustration \\ of algorithm $\Nextword$}\label{fig:nextword-illustration}
		\end{figure}
\scalebox{0.6}{
    \begin{tikzpicture}[framed]
			\node [label=below:{$q_0$},circle,fill=black,draw=black,inner sep=1pt, minimum size=0.2cm] (q0) at (0,7) {};
			\node [label=right:{$q$},circle,fill=black,draw=black,inner sep=1pt, minimum size=0.2cm] (q1) at (3,7) {};
			\node [label=below:{$q$},circle,fill=black,draw=black,inner sep=1pt, minimum size=0.2cm] (q2) at (6-2,7-1.73205080757) {};
			\node [label=right:{$q$},circle,fill=black,draw=black,inner sep=1pt, minimum size=0.2cm] (q3) at (6.9,6.4) {};
			\node [label=above:{$q$},circle,fill=black,draw=black,inner sep=1pt, minimum size=0.2cm] (q4) at (6.73205080757,7+1) {};
			\node [circle, fill=none, draw=none] (e) at (8,7) {};

			\draw[-latex, thick] (q0) -- (q1) node [midway, above, fill=none] {$L_{q_0,q}$};
			\draw[-latex,thick,black] ([shift=(180:2cm)]5,7) arc (-180:-123:2cm) node [midway, left, fill=none] {$v_1\in L_{q,q}$};
			\draw[-latex,thick,black] ([shift=(240:2cm)]5,7) arc (-120:-21:2cm) node [midway, right, fill=none] {$\,\,v_2\in L_{q,q}$};
			\draw[-latex,thick,black] ([shift=(-20:2cm)]5,7) arc (-20:27:2cm) node [midway, right, fill=none] {$v_3\in L_{q,q}$};
			\draw[-latex,dashed,black] ([shift=(30:2cm)]5,7) arc (30:177:2cm) node [midway, above, fill=none] {$L_{q,q}$};
		\end{tikzpicture}
}
\end{minipage}

\vspace{10pt}

The procedure $\Nextsymbol(M,q,y,\ell,v')$, summarized in Algorithm~\ref{algorithm-Nextsymbol},
is called to find
a feasible next symbol with which to extend $v'$ in
the procedure $\Nextword$.

%%% procedure \Nextsymbol to find $\sigma$

\hspace{-0.83cm}
\begin{minipage}[b]{0.54\textwidth}
\begin{algorithm}[H]%
    \footnotesize
	\caption{: $\Nextsymbol(M,q,y,\ell,v')$}%
	\label{algorithm-Nextsymbol}
	\begin{algorithmic}
		\REQUIRE {$\RSQ$ access to $L$;\\
			$M = (Q,q_0,\delta,F)$, an NBW acceptor;\\
			$q \in F$;\\
			$y \in L_{q,q}$;\\
			$v' \in \Sigma^*$, $|v'| < \ell$;\\
			$L_{q_0,q} \cdot (y \cdot L_{q,q}[\ell,v'] \cdot L_{q,q})^{\omega}
			\setminus L \neq \emptyset$}

		\ENSURE {$\sigma \in \Sigma$ such that
			$L_{q_0,q} \cdot (y \cdot L_{q,q}[\ell,v'\sigma] \cdot L_{q,q})^{\omega}
			\setminus L \neq \emptyset$}

		\vspace{0.2cm}
		\STATE find integers $k \ge 0$, $m \ge 1$, and $\sigma \in \Sigma$ such that\\
		{$\RSQ(L_{q_0,q}[k] \cdot (y \cdot L_{q,q}[\ell,v'\sigma] \cdot L_{q,q}[m])^{\omega}) = $ ``no''}
		\RETURN {$\sigma$}
	\end{algorithmic}
\end{algorithm}
\end{minipage}
\hspace{2ex}
\begin{minipage}[b]{0.43\textwidth}
	\begin{figure}[H]
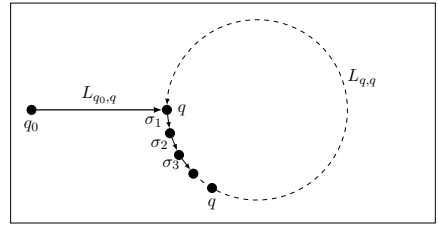

	\caption{An illustration \\ of algorithm $\Nextsymbol$}\label{fig:nextsymbol-illustration}
	\end{figure}
	    \scalebox{0.6}{
		\begin{tikzpicture}[framed]
			\node [label=below:{$q_0$},circle,fill=black,draw=black,inner sep=1pt, minimum size=0.2cm] (q0) at (0,7) {};
			\node [label=right:{$q$},circle,fill=black,draw=black,inner sep=1pt, minimum size=0.2cm] (q1) at (3,7) {};
			\node [label=right:{},circle,fill=black,draw=black,inner sep=1pt, minimum size=0.2cm] (s1) at (6-2.93185165258,7-0.5176380902) {};
			\node [label=right:{},circle,fill=black,draw=black,inner sep=1pt, minimum size=0.2cm] (s2) at (6-2.73205080757,7-1) {};
			\node [label=right:{},circle,fill=black,draw=black,inner sep=1pt, minimum size=0.2cm] (s3) at (6-2.41421356237,7-1.41421356237) {};
			\node [label=below:{$q$},circle,fill=black,draw=black,inner sep=1pt, minimum size=0.2cm] (q2) at (6-2,7-1.73205080757) {};
			\node [circle, fill=none, draw=none] (e) at (8.6,7) {};

			\draw[-latex, thick] (q0) -- (q1) node [midway, above, fill=none] {$L_{q_0,q}$};
			\draw[-latex] (q1) -- (s1) node [midway, left, fill=none] {$\sigma_1$};
			\draw[-latex] (s1) -- (s2) node [midway, left, fill=none] {$\sigma_2$};
			\draw[-latex] (s2) -- (s3) node [midway, left, fill=none] {$\sigma_3$};
			% \draw[dashed] (s3) -- (q2) node [midway, left, fill=none] {};

			\draw[-latex,dashed,black] ([shift=(225:2cm)]5,7) arc (-135:177:2cm) node [midway, right, fill=none] {$L_{q,q}$};
		\end{tikzpicture}
		}
\end{minipage}

\section{Correctness}%
\label{section-correctness}

The main hurdle in proving the correctness of algorithm $\ATrees$ is to prove Theorem~\ref{theorem-restricted-subset-nbw-dbw}.
The polynomial bound in the proof of Theorem~\ref{theorem-restricted-subset-nbw-dbw} is obtained through a sequence of lemmas
% (Lemmas~\ref{lemma-concatenation} to~\ref{lemma-nbw-dbw-inputs})
bounding the size of the acceptors used in $\ATrees$ subprocedures and the length restrictions and running time in calls to $\RSQ$ made by these procedures. Section~\ref{subsection-bounding-inouts-to-rsq} deals with  bounding the acceptors, and Section~\ref{subsection-length-restrictions-time-bounds} deals with the more challenging part, providing the length restrictions.
%Subsection~\ref{subsection-correctness-of-Atrees}
Finally, Section~\ref{subsection-correctness-of-Atrees} concludes with the theorem stating the correctness of  algorithm $\ATrees$.

\subsection{Bounding the Acceptors}%
\label{subsection-bounding-inouts-to-rsq}

We turn to the representation (as NBW or DBW acceptors) of the languages used in restricted subset queries by $\R^{\omega}(M)$ and its subprocedures. We consider the size, out-degree, and time to construct the acceptors.

In $\R^{\omega}(M)$, there is a restricted subset query with
$M$ itself, and if that query is answered ``no'', a sequence of
restricted subset queries with $M_q$ for accepting states $q$ until an
answer of ``no''.
Clearly, if $M$ is an NBW acceptor, each $M_q$ is
an NBW acceptor of the same size and out-degree and
is easily constructed from $M$, and similarly
if $M$ is a DBW acceptor.

The restricted subset queries made in $\Findctrex$
and its subprocedures are of the form $P \cdot {(S)}^{\omega}$,
where $P$ is a length and prefix restricted version of $L_{q_0,q}$
and $S$ is a concatenation of (at most) a finite word and two length
and prefix restricted versions of $L_{q,q}$.
Therefore in what follows we consider the operations of concatenation and
$\omega$-repetition of regular languages of finite words.

These operations are particularly simple for DFW or NFW
acceptors in \concept{special form}, that is,
containing at most one accepting state, which
has no out-transitions defined.
In general, any NFW acceptor can be converted to special form,
possibly at the cost of increasing its out-degree.
A regular language of finite words is recognized
by a DFW acceptor in special form iff it is
prefix-free.

However, if $M$ is an NBW (resp., DBW) acceptor, then the
finite word languages $L_{q_0,q}$ and $L_{q,q}$ are
recognized by easily constructed
NFW (resp., DFW) acceptors of size at most $|M|$
and out-degree at most the out-degree of $M$.
Lemma~\ref{lemma-length-and-prefix-restriction}
shows that the length and prefix restricted versions
of $L_{q_0,q}$ and $L_{q,q}$ are recognized by
NFW (resp., DFW) acceptors in special form
which may be constructed in time polynomial in
$|M|$, $\ell$, and $|v|$ and have out-degree
at most the out-degree of $M$.

\begin{lem}%
	\label{lemma-concatenation}
	Suppose $M_1$ is an NFW acceptor in special form
	and $M_2$ is an NFW or NBW acceptor.
	Then an acceptor $M$ for $\lang{M_1} \cdot \lang{M_2}$
	can be constructed such that
	\begin{enumerate}
		\item $|M| \le |M_1| + |M_2|$,
		\item the out-degree of $M$ is at most the maximum of out-degrees
		of $M_1$ and $M_2$,
		\item $M$ can be constructed in polynomial time,
		\item $M$ is deterministic if $M_1$ and $M_2$ are deterministic,
		\item $M$ is an NFW in special form if $M_2$ is an NFW in special form.
	\end{enumerate}
\end{lem}

\commentout{

\begin{lem}{\ref{lemma-concatenation}}[restated]
	Suppose $M_1$ is an NFW acceptor in special form
	and $M_2$ is an NFW or NBW acceptor.
	Then an acceptor $M$ for $\lang{M_1} \cdot \lang{M_2}$
	can be constructed such that
	\begin{enumerate}
		\item $|M| \le |M_1| + |M_2|$,
		\item the out-degree of $M$ is at most the maximum of out-degrees
		of $M_1$ and $M_2$,
		\item $M$ can be constructed in polynomial time,
		\item $M$ is deterministic if $M_1$ and $M_2$ are deterministic,
		\item $M$ is an NFW in special form if $M_2$ is an NFW in special form.
	\end{enumerate}
\end{lem}}

\begin{proof}
	Assume the states of $M_1$ and $M_2$ are disjoint.
	If $M_1$ has no accepting state then $\lang{M_1} = \emptyset$
	and we take $M$ to be a one-state acceptor of the same
	kind as $M_2$ that recognizes $\emptyset$.
	Otherwise, $M_1$ has one accepting state $q_1$ with
	no out transitions.
	If $q_1$ is also the initial state of $M_1$, then
	$\lang{M_1} = \{\varepsilon\}$ and we take $M = M_2$.

	Otherwise, $M$ is constructed by taking the union
	of the two machines, removing the state $q_1$ and
	redirecting all the transitions to $q_1$ in $M_1$
	to the initial state of $M_2$.
	The initial state of $M$ is set to be the initial state of $M_1$,
	and the accepting states of $M$ are set to be the accepting
	states of $M_2$.

	Then $M$ is an NFW acceptor if $M_2$ is an NFW
	acceptor, and an NBW acceptor if $M_2$ is an NBW
	acceptor.
	It is straightforward to verify the required properties of $M$.
\end{proof}

\begin{lem}%
	\label{lemma-omega-repetition}
	Suppose $M_1$ is an NFW acceptor in special form.
	Then an NBW acceptor $M$ for $\lang{M_1}^{\omega}$
	can be constructed such that
	\begin{enumerate}
		\item $|M| \le |M_1|$,
		\item the out-degree of $M$ is at most the out-degree of $M_1$,
		\item $M$ can be constructed in polynomial time,
		\item $M$ is deterministic if $M_1$ is deterministic.
	\end{enumerate}
\end{lem}

\commentout{
\begin{lem}{\ref{lemma-omega-repetition}}[restated]
	Suppose $M_1$ is an NFW acceptor in special form.
	Then an NBW acceptor $M$ for $\lang{M_1}^{\omega}$
	can be constructed such that
	\begin{enumerate}
		\item $|M| \le |M_1|$,
		\item the out-degree of $M$ is at most the out-degree of $M_1$,
		\item $M$ can be constructed in polynomial time,
		\item $M$ is deterministic if $M_1$ is deterministic.
	\end{enumerate}
\end{lem}}

\begin{proof}
	If $M_1$ has no accepting states then $\lang{M_1} = \emptyset$.
	Otherwise, $M_1$ has one accepting state with no out transitions.
	If the accepting state of $M_1$ is also its initial state,
	then $\lang{M_1} = \{\varepsilon\}$.
	In these two cases, $\lang{M_1}^{\omega} = \emptyset$ and we take
	$M$ to be an NBW acceptor with one state and no accepting states.

	Otherwise,
	we construct $M$ by removing from $M_1$ its unique accepting
	state $q_1$ and redirecting all the transitions into $q_1$ to the
	initial state of $M_1$.
	The initial state of $M_1$ becomes the unique accepting state
	of $M$.
	It is straightforward to verify the required properties of $M$.
\end{proof}

% The above, together with Lemmas \ref{lemma-concatenation} and \ref{lemma-omega-repetition}, stated in the Appendix, give us the following corollary for the procedure $\R^{\omega}$.
The above give us the following corollary for the procedure $\R^{\omega}$.

\begin{cor}%
	\label{lemma-nbw-dbw-inputs}
	When the input to $\R^{\omega}(M)$ is
	an NBW (resp., DBW) acceptor $M$,
	each $\RSQ$ can be made with
	an NBW (resp., DBW) acceptor
	whose out-degree is at most the out-degree of $M$ and can
	be constructed in time polynomial in $|M|$ and parameters
	giving the length restrictions and the lengths of any words
	that appear.
\end{cor}

\subsection{Length restrictions and time bounds}%
\label{subsection-length-restrictions-time-bounds}

%To prove Theorem~\ref{theorem-restricted-subset-nbw-dbw} we state and prove lemmas
We now turn to establish the correctness and running time
of the subprocedures.
The first two lemmas allow us to bound the parameters giving
the length restrictions in inputs to $\RSQ$.

\begin{lem}%
	\label{lemma-length-bound-k}
	Let $S \subseteq L_{q,q}$ and suppose
	$L_{q_0,q} \cdot {(S)}^{\omega} \setminus L \neq \emptyset$.
	Then
	for some $k < |M|\cdot|T_L|$ we have
	$L_{q_0,q}[k] \cdot {(S)}^{\omega} \setminus L \neq \emptyset$.
\end{lem}

\begin{proof}
	Let $u = \sigma_1 \cdots \sigma_k$ be chosen to be a
	shortest word in $L_{q_0,q}$ such that
	$u \cdot {(S)}^{\omega} \setminus L \neq \emptyset$.
	Then for some $s_1, s_2, \ldots$ from $S$, the $\omega$-word
	\[w = u \cdot s_1 \cdot s_2 \cdots\]
	is in $(L_{q_0,q} \cdot {(S)}^{\omega} \setminus L)$.

	There is an accepting run $r = r_0, r_1, \ldots$ of $M$
	on $w$.
	Let $t = t_0, t_1, \ldots$ be the unique run
	of the DBW acceptor $T_L$ on $w$, which is rejecting.
	Consider the sequence of pairs $(r_n,t_n)$ for $0 \le n \le |u|$.
	If $|u| \ge |M|\cdot|T_L|$, there will be a repeated pair,
	say $(r_i,t_i) = (r_j,t_j)$ for $i < j$.
	If we excise symbols $i+1$ to $j$ of $u$ to get $u'$ and
	the corresponding states from the runs $r$ and $t$ to
	get $r'$ and $t'$, we have
	\[w' = u' \cdot s_1 \cdot s_2 \cdots\]
	is accepted by $M$ (witnessed by $r'$) and rejected by $T_L$
	(witnessed by $t'$), so $u'$ is a shorter word such
	that $u' \cdot {(S)}^{\omega} \setminus L \neq \emptyset$,
	a contradiction.
\end{proof}

% \vspace{-5mm}

\begin{lem}%
	\label{lemma-length-bound-ell}
	Let $S \subseteq L_{q,q}$ and suppose
	$L_{q_0,q} \cdot {(S \cdot L_{q,q})}^{\omega} \setminus L \neq \emptyset$.
	Then for some $k, \ell < |M|\cdot|T_L|$, we have that
	$L_{q_0,q}[k] \cdot {(S \cdot L_{q,q}[\ell])}^{\omega} \setminus L \neq \emptyset$.
\end{lem}

\begin{proof}
	Let $w \in (L_{q_0,q} \cdot {(S \cdot L_{q,q})}^{\omega} \setminus L)$.
	The unique run of the DBW acceptor $T_L$ on $w$
	is rejecting, and does not visit an accepting state of $T_L$
	after some finite prefix.
	Because $S \subseteq L_{q,q}$, we may choose a sufficiently
	long prefix $u$ of $w$ such that $u \in L_{q_0,q}$ and when
	processing $w$, $T_L$ never visits an accepting state after
	reading the prefix $u$.

	Then $w$ may be factored as
	\[w = u (s_1 x_1) (s_2 x_2) \cdots,\]
	where each $s_n \in S$ and each $x_n \in L_{q,q}$.
	There is an accepting run $r = r_0, r_1, \ldots$ of $M$ on $w$,
	which we may assume visits the state $q$ after $u$, and
	also after every $s_n$ and every $x_n$.

	Consider the states $t_1, t_2, \ldots$ visited by $T_L$
	at the start of every group $(s_n x_n)$ when processing $w$.
	After at most $|T_L|$ groups, there must be a repeat,
	say $t_i = t_{i+p}$ for some $p > 0$.
	Let $j = i+p-1$ and consider the $\omega$-word
	\[w' = u \cdot (s_1 x_1) \cdots (s_{i-1} x_{i-1}) \cdot {((s_i x_i) \cdots (s_j x_j))}^{\omega}.\]
	There is an accepting run of $M$ on $w'$, and the unique
	run of $T_L$ on $w'$ is rejecting.
	Let
	\[u' = u \cdot (s_1 x_1) \cdots (s_{i-1} x_{i-1})
	\ \ \textrm{and} \ \
	z = x_i \cdot (s_{i+1} x_{i+1}) \cdots (s_j x_j).\]
	Then $w' = u' \cdot {(s_i z)}^{\omega}$ and $u' \in L_{q_0,q}$ and $z \in L_{q,q}$.

	Consider an accepting run $r' = r_0', r_1', \ldots$ of $M$ on $w'$ that visits
	state $q$ after processing $u'$ and each occurrence of $s_i$ and $z$.
	Consider the unique run $t = t_0', t_1', \ldots$ of $T_L$ on $w'$, which
	is rejecting.
	As in the proof of Lemma~\ref{lemma-length-bound-k},
	if $|z| \ge |M|\cdot|T_L|$ then we may remove a segment
	of $z$ that produces a cycle in the pairs $(r_n',t_n')$.
	Thus, for some $\ell < |M|\cdot|T_L|$, we have
	\[L_{q_0,q} \cdot {(S \cdot L_{q,q}[\ell])}^{\omega} \setminus L \neq \emptyset.\]
	Applying Lemma~\ref{lemma-length-bound-k},
	there also exists $k < |M|\cdot|T_L|$ such
	that
	\[L_{q_0,q}[k] \cdot {(S \cdot L_{q,q}[\ell])}^{\omega} \setminus L \neq \emptyset. \qedhere\]
\end{proof}

We now prove the correctness and polynomial running time of
$\Findprefix$ and $\Findperiod$, which establishes the correctness
and polynomial running time of $\Findctrex$.

\begin{lem}%
	\label{lemma-find-prefix}
	Assume $v \in L_{q,q}$ is such that
	\[L_{q_0,q} \cdot {(v)}^{\omega} \setminus L \neq \emptyset.\]
	Then in time polynomial in $|M|$, $|T_L|$ and $|v|$,
	the procedure $\Findprefix(M,q,v)$ returns a word
	$u \in L_{q_0,q}$ such that
	\[{u(v)}^{\omega} \in (L_{q_0,q} \cdot {(v)}^{\omega} \setminus L).\]
\end{lem}

\begin{proof}
	The algorithm asks restricted subset queries about $L$ for
	$\ell = 0,1,2,\ldots$ to find the least $\ell$ such that
	\[L_{q_0,q}[\ell] \cdot {(v)}^{\omega} \setminus L \neq \emptyset.\]
	The value of $\ell$ is bounded by $|M| \cdot |T_L|$,
	by Lemma~\ref{lemma-length-bound-k}.
	It then searches symbol by symbol for a string $u$
	of length $\ell$ satisfying the required condition.
\end{proof}

% \vspace{-5mm}
The procedure $\Findperiod$ depends on the procedures
$\Nextword$ and $\Nextsymbol$.
The next lemma establishes the correctness and running
time of the procedure $\Nextsymbol$.

\begin{lem}%
	\label{lemma-next-symbol}
	Suppose $\ell$ is a positive integer,
	$y \in L_{q,q}$ or $y = \varepsilon$
	and $v' \in \Sigma^*$ is such that $|v'| < \ell$ and we have
	\[L_{q_0,q} \cdot {(y \cdot L_{q,q}[\ell,v'] \cdot L_{q,q})}^{\omega}
	\setminus L \neq \emptyset.\]
	Then in time polynomial in $|M|$, $|T_L|$, $|y|$ and $\ell$,
	$\Nextsymbol(M,q,y,\ell,v')$ finds
	a symbol $\sigma \in \Sigma$ such that
	\[L_{q_0,q} \cdot {(y \cdot L_{q,q}[\ell, v'\sigma] \cdot L_{q,q})}^{\omega}
	\setminus L \neq \emptyset.\]
\end{lem}
% \vspace{-9mm}
\begin{proof}
	Consider an $\omega$-word
	\[w = u (y v' x_1 y_1) (y v' x_2 y_2) (y v' x_3 y_3) \cdots,\]
	in the language
	\[L_{q_0,q} \cdot {(y \cdot L_{q,q}[\ell,v'] \cdot L_{q,q})}^{\omega}
	\setminus L,\]
	where $u \in L_{q_0,q}$, and for all $i$,
	$v' x_i \in L_{q,q}[\ell,v']$ and $y_i \in L_{q,q}$.
	Fix a particular accepting run of $M_q$ on $w$ that visits
	$q$ after $u$ and after every occurrence of $y$, $v'x_i$ and $y_i$
	in the factorization of $w$ above.

	Because in this run $q$ is visited infinitely many times,
	we may assume that the prefix $u$ is chosen so that $w$ visits
	no accepting state of $T_L$ after the prefix $u$ has been processed.
	Now consider the sequence $t_1, t_2, t_3, \ldots$ of states of $T_L$
	visited by $w$ at the start of every group $(y v' x_i y_i)$.
	This sequence must repeat states of $T_L$, say $t_i = t_{i+p}$ for some
	$p > 0$.
	Let $j = i + p - 1$ and consider the word
	\[w' = u (y v' x_1 y_1) \cdots (y v' x_{i-1} y_{i-1}) {((y v' x_i y_i) \cdots
	(y v' x_j y_j))}^{\omega}.\]
	Clearly, $w' \not\in L$ because after the prefix $u$, $w'$ visits only
	rejecting states of $T_L$.

	Consider the cycle
	\[((y v' x_i y_i) \cdots (y v' x_j y_j)).\]
	If it is of length $1$ (that is $i = j$), then we may duplicate the
	one group $(y v' x_i y_i)$ to make a cycle of length $2$ without
	changing $w'$.
	Then we may factor the cycle as
	\[((y v' x_i y_i) z)
	\ \ \ \ \ \ \textrm{where}\ \ \ \ \ \
	z = (y v' x_{i+1} y_{i+1}) \cdots (y v' x_j y_j)\]
	and $z \in L_{q,q}$.
	Choosing $\sigma$ to be the first symbol of $x_i$ and $x_i'$ to be
	the rest of $x_i$, we have
	\[w' = u' {(y v'\sigma x_i' z)}^{\omega},\]
	where $u' = u (y v' x_1 y_1) \cdots (y v' x_{i-1} y_{i-1})$ and
	therefore
	\[w' \in L_{q_0,q} \cdot
	{(y \cdot L_{q,q}[\ell,v' \sigma] \cdot L_{q,q})}^{\omega}.\]

	Thus we are guaranteed that some symbol $\sigma$ with the required
	property exists.
	Lemma~\ref{lemma-length-bound-ell} (with $S = \{y\} \cdot L_{q,q}[\ell,v'\sigma]$)
	shows that there exist
	$k, m < |M|\cdot|T_L|$ such that
	\[L_{q_0,q}[k] \cdot {(y \cdot L_{q,q}[\ell,v' \sigma] \cdot L_{q,q}[m])}^{\omega}
	\setminus L \neq \emptyset.\]
	Thus, the search for $k$ and $m$ in the procedure $\Nextsymbol$
	can enumerate such pairs $(k,m)$ in increasing order of their
	maximum and try all $\sigma \in \Sigma$ for each pair until
	a suitable symbol $\sigma$ is found to return.
	This process runs in time polynomial in $|M|$, $|T_L|$, $|y|$
	and $\ell$.
\end{proof}

\begin{lem}%
	\label{lemma-Nextword}
	Suppose $y \in L_{q,q}$ or $y = \varepsilon$ is such that
	\[L_{q_0,q} \cdot {(y \cdot L_{q,q})}^{\omega} \setminus L \neq \emptyset.\]
	Then in time bounded by a polynomial in $|M|$, $|T_L|$ and $|y|$,
	$\Nextword(M,q,y)$ returns a word $v' \in L_{q,q}$ of length bounded by
	$|M| \cdot |T_L|$ such that
	\[L_{q_0,q} \cdot {(y v' \cdot L_{q,q})}^{\omega} \setminus L \neq \emptyset.\]
\end{lem}

\begin{proof}
	By Lemma~\ref{lemma-length-bound-ell} (with $S = \{y\}$),
	the search for $k$ and $\ell$ will succeed with both less than
	$|M| \cdot |T_L|$.
	Then $\ell$ calls to the procedure $\Nextsymbol$ will
	produce the required word $v'$ of length $\ell$.
\end{proof}

% \vspace{-3mm}
The next lemma shows that $\Findperiod$ calls $\Nextword$
at most $|T_L|$ times.

\begin{lem}%
	\label{lemma-loop-on-vs}
	Suppose $v_1, v_2, \ldots, v_n \in L_{q,q}$ are such that
	\[L_{q_0,q} \cdot {(v_1 v_2 \cdots v_n \cdot L_{q,q})}^{\omega} \setminus L \neq \emptyset.\]
	Also suppose that the number of states of $T_L$ is less than $n$.
	Then there exist integers $i$ and $j$ with $1 \le i \le j \le n$ such that
	\[L_{q_0,q} \cdot {(v_i v_{i+1} \cdots v_j)}^{\omega} \setminus L \neq \emptyset.\]
\end{lem}
\vspace{-5mm}
\begin{proof}
	Consider an $\omega$-word
	\[w = u (v_1 v_2 \cdots v_n \cdot  y_1) (v_1 v_2 \cdots v_n \cdot y_2)
	(v_1 v_2 \cdots v_n \cdot y_3) \cdots,\]
	in the language
	\[L_{q_0,q} \cdot {(v_1 v_2 \cdots v_n \cdot L_{q,q})}^{\omega} \setminus L,\]
	where $u \in L_{q_0,q}$ and each $y_i \in L_{q,q}$.
	Fix a particular accepting run of $M$ on $w$ in which state $q$ is
	visited after each of the individual segments of $w$.

	Considering the sequence of states of $T_L$
	that are visited in processing $w$,
	there must be some finite prefix after which only
	rejecting states of $T_L$ are visited.
	Because the run of $M$ on $w$ visits $q$ infinitely often,
	we may assume
	that the prefix $u$ of $w$ extends past the last visit
	of $T_L$ to an accepting state.
	Now consider the states $t_1, t_2, \ldots, t_n$ visited by $T_L$
	at the start of each of the first occurrences of $v_1, v_2, \ldots, v_n$,
	respectively.
	Because $n$ is greater than the number of states of $T_L$, some
	state of $T_L$ must repeat in this sequence, say $t_i = t_{i+p}$
	for some $p > 0$.
	Let $j = i+p-1$ and consider the $\omega$-word
	\[w' = u v_1 v_2 \cdots v_{i-1} {(v_i v_{i+1} \cdots v_j)}^{\omega}.\]
	Then $w' \in L_{q_0,q} \cdot {(v_i v_{i+1} \cdots v_j)}^{\omega}$ because
	$u' = u v_1 v_2 \cdots v_{i-1}$ is in $L_{q_0,q}$.
	However, because only rejecting
	states of $T_L$ are visited in the repeating portion of the word,
	$w' \not\in L$.
\end{proof}

The final lemma, presented below, establishes the correctness and polynomial running time
of the procedure $\Findperiod$.

\begin{lem}%
	\label{lemma-Findperiod}
	Suppose $L_{q_0,q} \cdot {(L_{q,q})}^{\omega} \setminus L \neq \emptyset$.
	Then, in polynomial time in $|M|$ and $|T_L|$, the procedure
	$\Findperiod(M,q)$ with restricted query access to $L$
	returns a period word $v$ satisfying the condition
	\[L_{q_0,q} \cdot {(v)}^{\omega} \setminus L \neq \emptyset.\]
\end{lem}

\begin{proof}
	The preconditions of $\Findperiod$ are satisfied, and it
	calls $\Nextword(M,q,y)$ repeatedly, with
	$y = \varepsilon$, then $y = v_1$, then $y = v_1 v_2$, and so
	on, where $v_{n+1}$ is the value returned by the call with
	$y = v_1 v_2 \cdots v_n$.
	Each of these calls satisfies the preconditions of $\Nextword$,
	so after at most $|T_L|$ such calls, $\Findperiod$ returns
	a correct period word $v$, by Lemma~\ref{lemma-loop-on-vs}.
\end{proof}

% \vspace{-5mm}
These lemmas can be used in combination to prove
Theorem~\ref{theorem-restricted-subset-nbw-dbw},
giving a polynomial time reduction of unrestricted subset queries to
restricted subset queries for NBW acceptors
(resp., DBW acceptors.)

\subsection{Correctness of \texorpdfstring{$\ATrees$}{A-trees}}%
\label{subsection-correctness-of-Atrees}

The lemmas established in the previous subsection also show the correctness and running time of $\Findctrex(M',q)$
when called by $\ATrees$,
provided that each $\RSQ$ about $L$ is correctly answered
and $q$ satisfies the precondition of $\Findctrex$.

To complete the consideration of representation issues,
we must prove that $\ATrees$ can successfully simulate
$\Findctrex$ as stated in Lemma~\ref{lemma-regular-tree-inputs}.

\begin{lem}\label{lemma-regular-tree-inputs}
	When $\ATrees$ simulates $\Findctrex(M',q)$ in response
	to a negative counterexample $t$,
	every $\RSQ$ can be simulated with a
	$\MQ$ about $\Trees_d(L)$.
\end{lem}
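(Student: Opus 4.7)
The plan is to represent each $\RSQ$ language appearing in $\Findctrex$ as $\paths(t_N)$ for some $d$-ary regular $\omega$-tree $t_N$, and then to answer the $\RSQ$ by issuing $\MQ(A_{t_N})$ about $\Trees_d(L)$: indeed $\lang{N}\subseteq L$ iff $\paths(t_N)\subseteq L$ iff $t_N\in\Trees_d(L)$. By Lemma~\ref{lemma-nbw-tree}, such a $t_N$ exists whenever $\lang{N}$ is a safety language recognized by an NBW of out-degree at most $d$ with every state accepting, so the entire work reduces to exhibiting such an NBW for each query.

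Inspecting $\Findctrex$, $\Findperiod$, $\Findprefix$, $\Nextword$ and $\Nextsymbol$, every $\RSQ$ they issue is of the form $\RSQ(P\cdot S^\omega)$, where $P$ is a length- and prefix-restricted version of $L_{q_0,q}$ and $S$ is one of the sets $\{v\}$, $\{v_i\cdots v_j\}$, $\{y\}\cdot L_{q,q}[\ell]$, or $\{y\}\cdot L_{q,q}[\ell,v'\sigma]\cdot L_{q,q}[m]$. Crucially, in each case $P$ consists of words of one common length and $S$ consists of words of one common length $\ell_S$. Since $M'=\acceptor(A_t)$ has all states accepting and out-degree at most $d$, the NFW acceptors for $L_{q_0,q}$ and $L_{q,q}$ extracted from $M'$ also have out-degree at most $d$; Lemma~\ref{lemma-length-and-prefix-restriction} puts the length- and prefix-restricted pieces into special form without increasing their out-degree, and Lemmas~\ref{lemma-concatenation} and~\ref{lemma-omega-repetition} combine them into a single NBW for $P\cdot S^\omega$ of out-degree at most $d$.

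The main obstacle is that Lemma~\ref{lemma-omega-repetition} outputs an NBW with a single accepting state, whereas Lemma~\ref{lemma-nbw-tree} demands that every state be accepting. I would discharge this using the fixed-length structure of $S$: the NBW produced by Lemma~\ref{lemma-omega-repetition} forces every infinite run to return to the unique accepting state exactly every $\ell_S$ steps, because once the special-form sink of $S$ is removed, the only transitions that pointed to it have been redirected to the initial state. Hence declaring every state of this NBW to be accepting does not alter the recognized language $S^\omega$. After applying Lemma~\ref{lemma-concatenation} with the prefix acceptor for $P$, any infinite run similarly must leave the prefix component after exactly $|P|$ steps and thereafter live inside the $S^\omega$ component, so the prefix states are visited only finitely often and may also be declared accepting without changing the language. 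The resulting NBW has out-degree at most $d$ with every state accepting, so Lemma~\ref{lemma-nbw-tree} produces the required $d$-ary regular $\omega$-tree $t_N$, represented by a regular $\omega$-tree automaton $A_{t_N}$ of the same size; $\ATrees$ then implements the $\RSQ$ by asking $\MQ(A_{t_N})$ about $\Trees_d(L)$.
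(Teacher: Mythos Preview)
Your proposal is correct and follows essentially the same route as the paper: both argue that every $\RSQ$ issued by $\Findctrex$ and its subprocedures is for a safety language of the form $P\cdot S^\omega$ with $P$ and $S$ fixed-length, recognized by an NBW of out-degree at most $d$, so that Lemma~\ref{lemma-nbw-tree} yields a $d$-ary regular $\omega$-tree whose membership in $\Trees_d(L)$ answers the query. You are in fact more explicit than the paper in bridging the single-accepting-state output of Lemma~\ref{lemma-omega-repetition} to the all-states-accepting hypothesis of Lemma~\ref{lemma-nbw-tree}, using the leveled structure forced by the fixed lengths of $P$ and $S$; the paper leaves this step implicit in the phrase ``recognizes a safety language.''
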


\begin{proof}
	In the learning algorithm $\ATrees$, when a negative
	counterexample $t$ represented by $A_t$ is received,
	the algorithm simulates the procedure
	$\Findctrex(M',q)$ where $M' = \acceptor(A_t)$ is a
	NBW acceptor recognizing $\paths(t)$ and
	$q$ is an accepting state of $M'$.
	Note that by Lemma~\ref{lemma-tree-nbw},
	$|M'| \le |A_t|$ and
	the out-degree of $M'$ is at most $d$, the arity of $t$.

	Then Corollary~\ref{lemma-nbw-dbw-inputs} shows
	that each $\RSQ$ is with a NBW acceptor that
	has out-degree at most the out-degree of $M'$,
	which is at most $d$.
	Also, each such NBW acceptor
	can be constructed in time polynomial in $|M'|$ and
	parameters giving the length restrictions and the
	lengths of any words that appear.

	The final observation is that
	each such $\RSQ$ is made with an NBW acceptor
	that recognizes a safety language of
	the form $P \cdot {(S)}^{\omega}$, where
	$P$ and $S$ are each languages of fixed-length
	finite words.
	Then, by Lemma~\ref{lemma-nbw-tree}
	each such $\RSQ(N)$
	can be simulated by $\ATrees$ using
	$\MQ(\tree_d(N))$ about $\Trees_d(L)$.
\end{proof}

If $q$ does not satisfy the precondition of $\Findctrex$, then
the procedure may run forever.
However, at least one accepting state $q$ satisfies the
precondition, so at least one simulation will halt and return
$(u,v)$, at which point $\ATrees$ terminates all the other simulations.
This concludes the proof of the reduction given by $\ATrees$,
whose general statement is given in Theorem~\ref{theorem-reduction-general} below.

\begin{thm}%
	\label{theorem-reduction-general}
	Suppose $\class{C} \subseteq \dbw$
	and $\A$ is a polynomial time algorithm that learns class $\class{C}$
	using membership and equivalence queries.
	Then for every positive integer $d$
	there is a polynomial time algorithm $\ATrees$ that learns the class
	$\Trees_d(\class{C})$ using membership and equivalence queries.
\end{thm}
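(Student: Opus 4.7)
The plan is to verify that the algorithm $\ATrees$ in Algorithm~\ref{algorithm-ATrees} is correct and runs in polynomial time. The structure of the reduction is already in place: $\ATrees$ runs $\A$ as a black box with target $L$ and translates each of $\A$'s queries about $L$ into interactions with the $\MQ$ and $\EQ$ oracles for $\Trees_d(L)$. I would organise the argument into three pieces: correctness of the simulation of $\MQ$ and $\EQ$, correctness and polynomial running time of the routines used to produce a counterexample in the ``negative counterexample'' case, and aggregation into a polynomial overall bound.

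Membership queries are immediate: ${u(v)}^\omega \in L$ iff $\tree_d({u(v)}^\omega) \in \Trees_d(L)$, so the oracle answer transfers directly. For an equivalence query with a DBW $M$, Lemma~\ref{lemma-word-to-tree} applied to the deterministic $M$ gives $\lang{\mtd} = \Trees_d(\lang{M})$, so a ``yes'' to $\EQ(\mtd)$ implies $\lang{M} = L$. A counterexample tree $t$ represented by $A_t$ is classified by $\Acc(A_t,M)$: in the positive case the procedure directly produces $(u,v)$ with ${u(v)}^\omega \in \paths(t) \setminus \lang{M} \subseteq L \setminus \lang{M}$, a valid answer back to $\A$.

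The real work is the negative case, where $\paths(t) \subseteq \lang{M}$ but $\paths(t) \setminus L \neq \emptyset$. Letting $M' = \acceptor(A_t)$ be the NBW supplied by Lemma~\ref{lemma-tree-nbw}, at least one accepting state $q$ of $M'$ must satisfy the $\Findctrex$ precondition $L_{q_0,q} \cdot (L_{q,q})^\omega \setminus L \neq \emptyset$, because any offending path yields an accepting run visiting some accepting state infinitely often. Running $\Findctrex(M',q)$ in parallel over all accepting states and taking the first output yields a suitable $(u,v)$. Each $\RSQ$ invoked inside $\Findctrex$ is on a language of the form $P \cdot (S)^\omega$ with $P$ and $S$ languages of fixed-length finite words; by Lemmas~\ref{lemma-concatenation}, \ref{lemma-omega-repetition} and \ref{lemma-nbw-tree} this is recognised by a safety NBW of out-degree at most $d$, and Lemma~\ref{lemma-regular-tree-inputs} shows it can be realised by a single $\MQ$ on $\Trees_d(L)$ against $\tree_d(N)$.

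For the polynomial bound I would chain the length-restriction lemmas. Lemmas~\ref{lemma-length-bound-k} and~\ref{lemma-length-bound-ell} cap all search parameters $k,\ell,m$ by $|M'|\cdot|T_L|$, where $T_L$ is a minimal DBW for $L$; Lemma~\ref{lemma-loop-on-vs} caps the number of calls to $\Nextword$ inside $\Findperiod$ by $|T_L|$; Lemmas~\ref{lemma-next-symbol}, \ref{lemma-Nextword}, \ref{lemma-find-prefix} and \ref{lemma-Findperiod} bound the work inside each call. Since $|M'| \le |A_t|$ is within the counterexample size the equivalence oracle has already returned, and $|T_L|$ is polynomial in the target size, everything stays polynomial, and combining with $\A$'s own polynomial bound gives the required overall bound. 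I expect the chief obstacle to be the negative counterexample case: unlike the finite-word reduction in Theorem~\ref{thm-nfa-dfa-subset-reduction}, the period and prefix of an ultimately periodic witness are coupled, and a naive symbol-by-symbol search for the period can fail to terminate. The pigeonhole argument over states of $T_L$ in Lemma~\ref{lemma-loop-on-vs}, forcing termination of $\Findperiod$ after $|T_L|$ rounds, is the conceptual heart of the reduction and the step that makes a polynomial overall bound possible.
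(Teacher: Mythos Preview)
Your proposal is correct and follows essentially the same route as the paper: you simulate $\A$, handle $\MQ$s and positive counterexamples directly, and in the negative case run $\Findctrex$ in parallel over accepting states of $\acceptor(A_t)$, realising each $\RSQ$ as a tree $\MQ$ via Lemma~\ref{lemma-regular-tree-inputs} and bounding all search parameters through Lemmas~\ref{lemma-length-bound-k}--\ref{lemma-Findperiod}. Your remark that $|T_L|$ is polynomial in the target size for $\ATrees$ is the one place where you might add a sentence invoking the Kuperberg--Skrzypczak bound from Section~\ref{subsection-Good-for-trees}, but otherwise the decomposition and the identification of Lemma~\ref{lemma-loop-on-vs} as the key termination argument match the paper exactly.
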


This theorem, together with Maler and Pnueli's~\cite{Maler1995}
polynomial time algorithm to
learn the class of weak regular $\omega$-word languages
using membership and equivalence queries
proves our main result --- Theorem~\ref{theorem-learn-trees}.

\section{Discussion}

We have shown that if
$\class{C} \subseteq \dbw$ can be learned in polynomial time
with membership and equivalence queries, then $\Trees_d(\class{C})$ can
be learned in polynomial time with membership and equivalence
queries for all $d \ge 1$.
Consequently, there is a polynomial time algorithm
to learn $\Trees_d(\dwpw)$ with membership and equivalence
queries.
We have also shown that there are polynomial time algorithms
that implement unrestricted subset queries using restricted subset
queries for $\dfw$, $\nfw$, $\dbw$ and $\nbw$.

One open question is whether there is an interesting
subclass of $\dbw$ that is larger than $\dwpw$ but
still learnable in polynomial time using membership
and equivalence queries,
to which Theorem~\ref{theorem-reduction-general} would also apply.

\section*{Acknowledgment}
%The authors would like to thank...
The authors would like to thank the anonymous reviewers for their valuable feedback and helpful suggestions.
This research was supported by the United States - Israel Binational Science Foundation, Jerusalem, Israel (BSF) under grant number \#8758451 and by the Office of Naval Research (ONR) under grant number \#N00014-17-1-2787. % chktex 8

\bibliographystyle{alpha}
\bibliography{learn-trees}

%%%%%%%%%%%%%%%%%%%%%%%%%%%%
%%%%% A P P E N D I X %%%%%%
%%%%%%%%%%%%%%%%%%%%%%%%%%%%

\appendix

\commentout{
\section{Figures}

\begin{figure}[h!]
	\centering
	\begin{center}
		\begin{tikzpicture}
			\node [label=below:{$q_0$},circle,fill=black,draw=black,inner sep=1pt, minimum size=0.2cm] (q0) at (0,7) {};
			\node [label=right:{$q$},circle,fill=black,draw=black,inner sep=1pt, minimum size=0.2cm] (q1) at (4,7) {};
			\node [label=below:{$q$},circle,fill=black,draw=black,inner sep=1pt, minimum size=0.2cm] (q2) at (6-1,7-1.73205080757) {};
			\node [label=right:{$q$},circle,fill=black,draw=black,inner sep=1pt, minimum size=0.2cm] (q3) at (7.9,6.4) {};
			\node [label=above:{$q$},circle,fill=black,draw=black,inner sep=1pt, minimum size=0.2cm] (q4) at (6+1.73205080757,7+1) {};
			\node [circle, fill=none, draw=none] (e) at (10,7) {};

			\draw[-latex, thick] (q0) -- (q1) node [midway, above, fill=none] {$L_{q_0,q}$};
			\draw[-latex,thick,black] ([shift=(180:2cm)]6,7) arc (-180:-123:2cm) node [midway, left, fill=none] {$v_1\in L_{q,q}$};
			\draw[-latex,thick,black] ([shift=(240:2cm)]6,7) arc (-120:-21:2cm) node [midway, right, fill=none] {$\,\,v_2\in L_{q,q}$};
			\draw[-latex,thick,black] ([shift=(-20:2cm)]6,7) arc (-20:27:2cm) node [midway, right, fill=none] {$v_3\in L_{q,q}$};
			\draw[-latex,dashed,black] ([shift=(30:2cm)]6,7) arc (30:177:2cm) node [midway, above, fill=none] {$L_{q,q}$};
		\end{tikzpicture}
	\end{center}
\caption{Nextword}
\end{figure}

\begin{figure}[h!]
	\centering
	\begin{center}
		\begin{tikzpicture}
			\node [label=below:{$q_0$},circle,fill=black,draw=black,inner sep=1pt, minimum size=0.2cm] (q0) at (0,7) {};
			\node [label=right:{$q$},circle,fill=black,draw=black,inner sep=1pt, minimum size=0.2cm] (q1) at (4,7) {};
			\node [label=right:{},circle,fill=black,draw=black,inner sep=1pt, minimum size=0.2cm] (s1) at (6-1.93185165258,7-0.5176380902) {};
			\node [label=right:{},circle,fill=black,draw=black,inner sep=1pt, minimum size=0.2cm] (s2) at (6-1.73205080757,7-1) {};
			\node [label=right:{},circle,fill=black,draw=black,inner sep=1pt, minimum size=0.2cm] (s3) at (6-1.41421356237,7-1.41421356237) {};
			\node [label=below:{$q$},circle,fill=black,draw=black,inner sep=1pt, minimum size=0.2cm] (q2) at (6-1,7-1.73205080757) {};
			\node [circle, fill=none, draw=none] (e) at (10,7) {};

			\draw[-latex, thick] (q0) -- (q1) node [midway, above, fill=none] {$L_{q_0,q}$};
			\draw[-latex] (q1) -- (s1) node [midway, left, fill=none] {$\sigma_1$};
			\draw[-latex] (s1) -- (s2) node [midway, left, fill=none] {$\sigma_2$};
			\draw[-latex] (s2) -- (s3) node [midway, left, fill=none] {$\sigma_3$};
			% \draw[dashed] (s3) -- (q2) node [midway, left, fill=none] {};

			\draw[-latex,dashed,black] ([shift=(225:2cm)]6,7) arc (-135:177:2cm) node [midway, right, fill=none] {$L_{q,q}$};
		\end{tikzpicture}
	\end{center}
\caption{Nextsymbol}
\end{figure}
}

\section{Example of \texorpdfstring{$\ATrees$}{A-trees}}%
\label{app:A-tree-example}

We illustrate the algorithm $\ATrees$ learning the language
$\Trees_2(L)$ where $L$ is the language recognized by the DBW
pictured in Fig.~\ref{fig:DBW-for-L}.
Note that $L$ has the rejecting SCC $\{2\}$, which is a subset
of the accepting SCC $\{1,2\}$, so $L$ is not accepted by any
DCW, and is therefore not in $\dbwdcw$.
We assume that the learning algorithm $\ATrees$ has $\MQ$ and $\EQ$
access to the language $\Trees_2(L)$.
$\ATrees$ also has access to an oracle $\A$ that makes $\MQ$s and
$\EQ$s about $L$ and ultimately outputs a DBW recognizing $L$.
The treatment of $\MQ$s is straightforward, so we focus
on $\EQ$s.
To help illustrate the behavior of $\ATrees$, we
choose two hypothetical $\EQ$s that $\A$ could make to $L$,
as well as the possible counterexample trees
to the resulting $\EQ$s made by $\ATrees$.

Suppose the first $\EQ$ that $\A$ makes to $L$
is with the DBW $H_1$ pictured in Fig.~\ref{fig:DBW-H1}.
The language recognized by $H_1$ is $L_1 = (a+b){(a+b+c)}^\omega$, which is
incomparable with $L$.
The algorithm $\ATrees$ constructs the deterministic tree acceptor $H_1^{T,2}$, which
recognizes all binary trees all of whose infinite paths are in $L_1$,
and makes an $\EQ$ to $\Trees_2(L)$ with $H_1^{T,2}$.
Suppose that the counterexample returned is the regular $\omega$-tree
$T_1$ pictured in Fig.~\ref{fig:tree-T1}, with the top three levels of
the extensive form of $T_1$ also shown.

\begin{figure}
  \begin{center}
    \noindent\makebox[\textwidth]{
      \scalebox{0.8}{

        \begin{tikzpicture}[->,>=stealth',shorten >=1pt,auto,node distance=2.2cm,semithick,initial text=,initial where=left]

          \node[label]          (L)                 {$M:$};

          \node[initial,state,accepting](L1) [right  of=L]{$1$};
          \node[state]  (L2)   [right of=L1]{$2$};
          \node[state]  (L3)   [below of=L1]{$3$};

          \path (L1) edge [loop above]  node {$a$} (L1);
          \path (L1) edge [bend left] node {$b$} (L2);
          \path (L1) edge node {$c$} (L3);
          \path (L2) edge [loop above] node {$b$} (L2);
          \path (L2) edge [bend left]  node {$a,c$} (L1);
          \path (L3) edge [loop right]  node {$a,b,c$} (L3);

          \node[label]  (R)   [right of=L, node distance=8cm]  {$H_1:$};

          \node[initial,state]  (H1-1) [right of = R] {$1$};
          \node[state,accepting] (H1-2) [right of = H1-1] {$2$};

          \path (H1-1) edge node {$a,b$} (H1-2);
          \path (H1-2) edge [loop above] node {$a,b,c$} (H1-2);

          \end{tikzpicture}
          }}

    \end{center}
  \caption{Left, the DBW $M$ recognizing $L$. Right, $H_1$, an $\EQ$ made by $\A$ to $L$.}%
  \label{fig:DBW-for-L}%
  \label{fig:DBW-H1}
\end{figure}
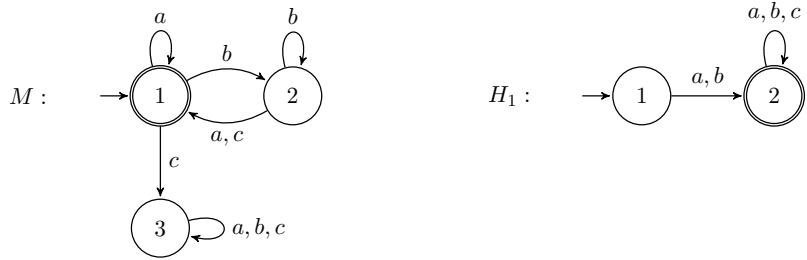

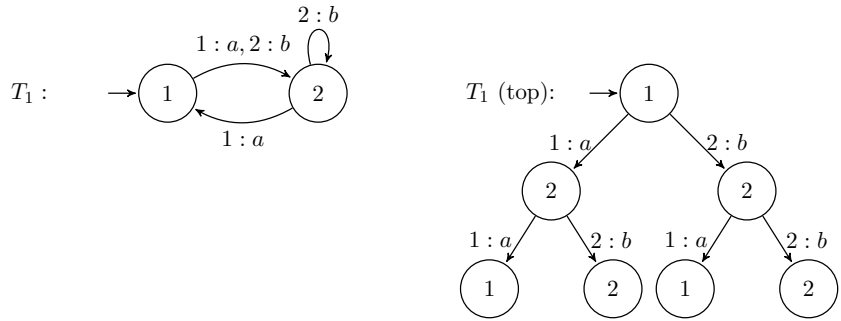
\begin{figure}
  \begin{center}
    \noindent\makebox[\textwidth]{
      \scalebox{0.8}{

        \begin{tikzpicture}[->,>=stealth',shorten >=1pt,auto,node distance=2.3cm,semithick,initial text=,initial where=left]

          \node[label]          (L)                 {$T_1:$};

          \node[initial,state]   (T1-1) [right of = L] {$1$};
          \node[state] (T1-2) [right of = T1-1, node distance=2.5cm] {$2$};

          \path (T1-1) edge [bend left] node {$1:a,2:b$} (T1-2);
          \path (T1-2) edge [loop above] node {$2:b$} (T1-2);
          \path (T1-2) edge [bend left] node {$1:a$} (T1-1);

          \node[label]  (R)   [right of=L, node distance=8cm]  {$T_1$ (top):};

          \node[initial,state]   (X1-1) [right of = R] {$1$};
          \node[state] (X1-2) [below left of = X1-1] {$2$};
          \node[state] (X1-3) [below right of = X1-1] {$2$};
          \node[state] (X1-4) [below left of = X1-2, xshift = .6cm] {$1$};
          \node[state] (X1-5) [below right of = X1-2, xshift = -.6 cm] {$2$};
          \node[state] (X1-6) [below left of = X1-3, xshift = .6cm] {$1$};
          \node[state] (X1-7) [below right of = X1-3, xshift = -.6cm] {$2$};

          \path (X1-1) edge node [left] {$1:a$} (X1-2);
          \path (X1-1) edge node [right] {$2:b$} (X1-3);
          \path (X1-2) edge node [left] {$1:a$} (X1-4);
          \path (X1-2) edge node [right] {$2:b$} (X1-5);
          \path (X1-3) edge node[left]  {$1:a$} (X1-6);
          \path (X1-3) edge node [right] {$2:b$} (X1-7);

    \end{tikzpicture}
    }}

    \end{center}
  \caption{Left, the regular $\omega$-tree counterexample $T_1$ to $H_1^{T,2}$. Right, the top three levels of the extensive form of $T_1$.}%
  \label{fig:tree-T1}
\end{figure}

At this point, the $\ATrees$ algorithm must call on the
$\Acc$ procedure with inputs $T_1$ and $H_1$ to decide whether the
counterexample $T_1$ is accepted or rejected by the hypothesis
$H_1^{T,2}$.
This procedure constructs the product automaton $\Pi_1$ shown in
Fig.~\ref{fig:DBW-product}.
Because $\Pi_1$ accepts ${\{1,2\}}^\omega$, the $\Acc$ procedure
reports that $H_1^{T,2}$ accepts the $\omega$-tree $T_1$.

Because $T_1$ is incorrectly accepted by $H_1^{T,2}$, the learning
algorithm $\ATrees$ constructs a DBW $M' = \acceptor(T_1)$
accepting precisely all the paths of $T_1$.  The DBW $M'$ is shown
in Fig.~\ref{fig:DBW-MP}.
Because at least one $\omega$-word accepted by $M'$ must not
be in the target language $L$, the procedure $\Findctrex$ is
called with the DBW $M'$ and restricted subset query access to the
target language $L$.  The restricted subset queries are simulated
using $\MQ$s to $\Trees_2(L)$ and the representation of a safety language
as a regular $\omega$-tree (Lemma~\ref{lemma-nbw-tree}).

Assume that $\Findctrex$ returns the pair $(a,b)$, representing the
$\omega$-word $ab^\omega$, which is accepted by $H_1$ and is not in $L$.
At this point, the $\EQ$ made by algorithm $\A$ with the DBW $H_1$ can
be answered with the pair $(a,b)$.

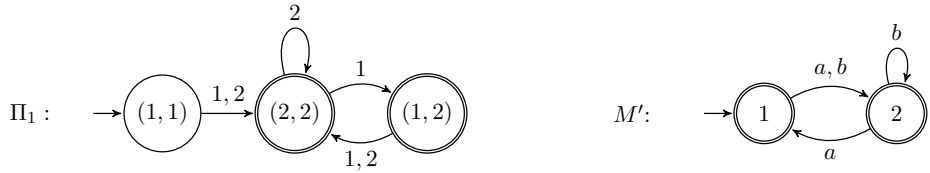
\begin{figure}
  \begin{center}
    \noindent\makebox[\textwidth]{
      \scalebox{0.8}{

        \begin{tikzpicture}[->,>=stealth',shorten >=1pt,auto,node distance=2.2cm,semithick,initial text=,initial where=left]

          \node[label]          (L)                 {$\Pi_1:$};

          \node[initial,state]   (P-1) [right of = L] {$(1,1)$};
          \node[state,accepting] (P-2) [right of = P-1] {$(2,2)$};
          \node[state,accepting] (P-3) [right of = P-2] {$(1,2)$};

          \path (P-1) edge node {$1,2$} (P-2);
          \path (P-2) edge [bend left] node {$1$} (P-3);
          \path (P-2) edge [loop above] node {$2$} (P-2);
          \path (P-3) edge [bend left] node {$1,2$} (P-2);

          \node[label]  (R)   [right of=L, node distance=10cm]  {$M'$:};

          \node[initial,state,accepting] (MP-1) [right of = R] {$1$};
          \node[state,accepting] (MP-2) [right of = MP-1] {$2$};

          \path (MP-1) edge [bend left] node {$a,b$} (MP-2);
          \path (MP-2) edge [bend left] node {$a$} (MP-1);
          \path (MP-2) edge [loop above] node {$b$} (MP-2);

    \end{tikzpicture}
    }}

    \end{center}
  \caption{Left, the product automaton $\Pi_1$ defined using $T_1$ and $H_1$. Right, the DBW $M'= \acceptor(T_1)$ recognizing all the paths in $T_1$.}%
  \label{fig:DBW-product}%
  \label{fig:DBW-MP}
\end{figure}

Assume that at some later point,
$\A$ makes an $\EQ$ to $L$ with the DBW $H_2$ shown in
Fig.~\ref{fig:DBW-H2}.
(Note that the language recognized by $H_2$ is ${(a+ba)}^\omega$, which is a proper subset of $L$.)
$\ATrees$ then makes an $\EQ$ to $\Trees_2(L)$ with the $\omega$-tree automaton
$H_2^{T,2}$.
Assume that the counterexample returned is the regular $\omega$-tree $T_2$,
shown in Fig.~\ref{fig:tree-T2}.
(It can be verified that the tree $T_2$ is in $\Trees_2(L)$.)

Then $\ATrees$ calls $\Acc$ with the tree $T_2$ and the DBW $H_2$.
The $\Acc$ procedure constructs the product DBW $\Pi_2$ shown in
Fig.~\ref{fig:DBW-Pi2}.
The DBW $\Pi_2$ does not accept all $\omega$-words in ${\{1,2\}}^\omega$,
for example, the $\omega$-word $22{(21)}^\omega$ is not accepted.
The corresponding input $\omega$-word is $bc{(ba)}^\omega$, which is
in $L$ but is not accepted by $H_2$.
Thus, the procedure $\Acc$ could return the pair $(bc,ba)$,
which would then be supplied to $\A$ as a counterexample to the $\EQ$
with $H_2$.

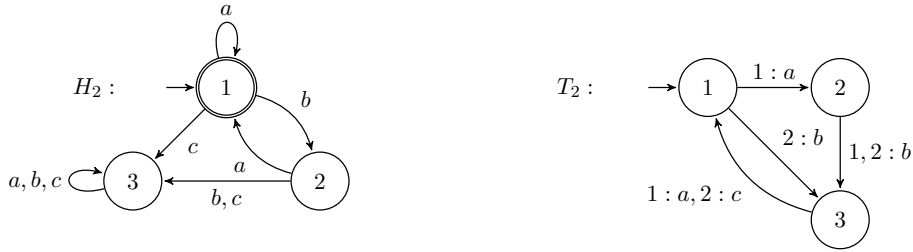
\begin{figure}
  \begin{center}
    \noindent\makebox[\textwidth]{
      \scalebox{0.8}{

        \begin{tikzpicture}[->,>=stealth',shorten >=1pt,auto,node distance=2.2cm,semithick,initial text=,initial where=left]

          \node[label]          (L)                 {$H_2:$};

          \node[initial,state,accepting]  (H2-1) [right of = L] {$1$};
          \node[state] (H2-2) [below right of = H2-1] {$2$};
          \node[state] (H2-3) [below left of = H2-1] {$3$};

          \path (H2-1) edge [loop above] node {$a$} (H2-1);
          \path (H2-1) edge [bend left] node {$b$} (H2-2);
          \path (H2-1) edge  node {$c$} (H2-3);
          \path (H2-2) edge [bend left] node {$a$} (H2-1);
          \path (H2-2) edge node {$b,c$} (H2-3);
          \path (H2-3) edge [loop left] node {$a,b,c$} (H2-3);

          \node[label]  (R)   [right of=L, node distance=8cm]  {$T_2:$};

          \node[initial,state] (T2-1) [right of = R] {$1$};
          \node[state] (T2-2) [right of = T2-1] {$2$};
          \node[state] (T2-3) [below of = T2-2] {$3$};

          \path (T2-1) edge node {$1:a$} (T2-2);
          \path (T2-1) edge node {$2:b$} (T2-3);
          \path (T2-2) edge node {$1,2:b$} (T2-3);
          \path (T2-3) edge [bend left] node {$1:a,2:c$} (T2-1);

          \end{tikzpicture}
          }}

    \end{center}
  \caption{Left, the DBW $H_2$, an $\EQ$ made by $\A$ to $L$.  Right, the counterexample regular $\omega$-tree $T_2$.}%
  \label{fig:DBW-H2}%
  \label{fig:tree-T2}
\end{figure}

\begin{figure}
  \begin{center}
    \noindent\makebox[\textwidth]{
      \scalebox{0.8}{

        \begin{tikzpicture}[->,>=stealth',shorten >=1pt,auto,node distance=2.2cm,semithick,initial text=,initial where=left]

          \node[label]          (L)                 {$\Pi_2:$};

          \node[initial,state,accepting] (Pi2-11) [right of = L] {$(1,1)$};
          \node[state,accepting] (Pi2-21) [above right of = Pi2-11] {$(2,1)$};
          \node[state] (Pi2-32) [below right of = Pi2-11] {$(3,2)$};
          \node[state] (Pi2-13) [right of = Pi2-32] {$(1,3)$};
          \node[state] (Pi2-23) [right of = Pi2-21] {$(2,3)$};
          \node[state] (Pi2-33) [above right of = Pi2-13] {$(3,3)$};

          \path (Pi2-11) edge node  {$1$} (Pi2-21);
          \path (Pi2-11) edge node  {$2$} (Pi2-32);
          \path (Pi2-21) edge node [right] {$1,2$} (Pi2-32);
          \path (Pi2-32) edge [bend left] node {$1$} (Pi2-11);
          \path (Pi2-32) edge node {$2$} (Pi2-13);
          \path (Pi2-13) edge node {$1$} (Pi2-23);
          \path (Pi2-13) edge node {$2$} (Pi2-33);
          \path (Pi2-23) edge node {$1,2$} (Pi2-33);
          \path (Pi2-33) edge [bend left] node {$1,2$} (Pi2-13);

          \end{tikzpicture}
          }}

    \end{center}
  \caption{The product DBW $\Pi_2$ defined using $T_2$ and $H_2$.}%
  \label{fig:DBW-Pi2}
\end{figure}
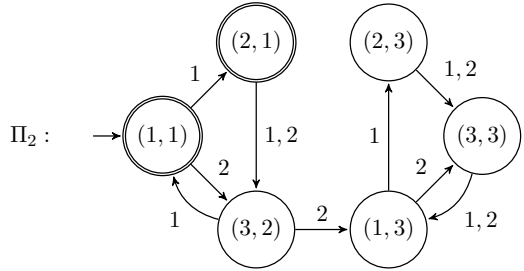

\section{Example of \texorpdfstring{$\Findctrex$}{Findctrex}}%
\label{app:findctrex-example}

Consider the two $\omega$-languages
\[\lang{M_1} = a^*b{((a+c)(b+c))}^\omega,\]
recognized by the DBW $M_1$ pictured in Fig.~\ref{fig:DBW-M1}, and
\[\lang{M_2} = {(a+c)}^*b {((a+c)a^*b)}^\omega,\]
recognized by the DBW $M_2$ pictured in Fig.~\ref{fig:DBW-M2}.
Note that $\lang{M_2}$ is not a subset of $\lang{M_1}$.
For example,
the $\omega$-words
$cb{(ab)}^\omega$
and
$b{(aab)}^\omega$
are both in
$(\lang{M_2} \setminus \lang{M_1})$.

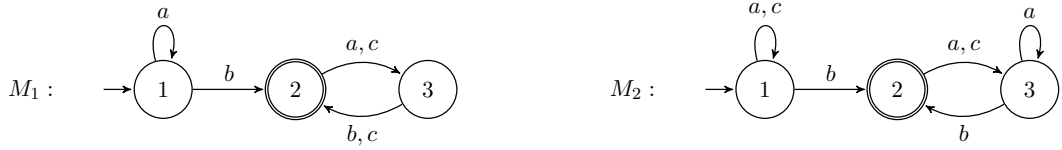
\begin{figure}
  \begin{center}
    \noindent\makebox[\textwidth]{
      \scalebox{0.8}{

        \begin{tikzpicture}[->,>=stealth',shorten >=1pt,auto,node distance=2.2cm,semithick,initial text=,initial where=left]

          \node[label]          (L)                 {$M_1:$};

          \node[initial,state](L1) [right  of=L]{$1$};
          \node[state,accepting]  (L2)   [right of=L1]{$2$};
          \node[state]  (L3)   [right of=L2]{$3$};

          \path (L1) edge [loop above]  node {$a$} (L1);
          \path (L1) edge  node {$b$} (L2);
          \path (L2) edge [bend left]  node {$a,c$} (L3);
          \path (L3) edge [bend left]  node {$b,c$} (L2);

          \node[label]  (R)   [right of=L, node distance=10cm]  {$M_2:$};

          \node[initial,state](R1) [right  of=R]{$1$};
          \node[state,accepting]  (R2)   [right of=R1]{$2$};
          \node[state]  (R3)   [right of=R2]{$3$};

          \path (R1) edge [loop above]  node {$a,c$} (R1);
          \path (R1) edge  node {$b$} (R2);
          \path (R2) edge [bend left]  node {$a,c$} (R3);
          \path (R3) edge [bend left]  node {$b$} (R2);
          \path (R3) edge [loop above] node {$a$} (R3);

          \end{tikzpicture}
          }}

    \end{center}
  \caption{Left, the DBW $M_1$.  Right, the DBW $M_2$.}%
  \label{fig:DBW-M1}%
  \label{fig:DBW-M2}
\end{figure}

Assume that the procedure $\Findctrex$ is called with a restricted
subset query oracle for $\lang{M_1}$ and inputs consisting of
the DBW $M_2$ and final state $2$.
It calls the procedure Findperiod with
inputs $M_2$ and state $2$ to get a
period $v$, for example $v = aab$, with the property that some prefix
followed by ${(aab)}^\omega$ is accepted by $M_2$ and is not in $\lang{M_1}$.
It then calls the procedure Findprefix with inputs
$M_2$, state $2$ and the period $aab$
to find a prefix $u$ for which $u{(aab)}^\omega$ is accepted by $M_2$ and
is not in $\lang{M_1}$, for example $u = b$,
and returns the pair $(b,aab)$ representing the
$\omega$-word $b{(aab)}^\omega$ accepted by $M_2$ but not in $\lang{M_1}$.

In the Findperiod computation on inputs $M_2$ and state $2$, the
Nextword procedure is repeatedly called with inputs $M_2$, state $2$ and
the word $v_1 v_2 \cdots v_{n-1}$ to find the next word $v_n$, until
a restricted subset query yields a period $v = v_i \cdots v_j$ with
the property that some prefix followed by ${(v)}^\omega$ is accepted
by $M_2$ and is not in the language $\lang{M_1}$.
Findperiod returns the period $v$.

For $M_2$ we have the following.
\[\lang{M_2}_{1,2} = {(a+c)}^*b{((a+c)a^*b)}^*
\ \ \ \ \ \ \textrm{and}\ \ \ \ \ \
\lang{M_2}_{2,2} = {((a+c)a^*b)}^+.\]
We consider also the following length-restricted versions of these languages.
\begin{align*}
\lang{M_2}_{1,2}[1] &= b \\
\lang{M_2}_{1,2}[2] &= (a+c)b\\
\lang{M_2}_{1,2}[3] &= (a+c)(a+c)b + b(a+c)b\\
\lang{M_2}_{2,2}[1] &= \emptyset\\
\lang{M_2}_{2,2}[2] &= (a+c)b\\
\lang{M_2}_{2,2}[3] &= (a+c)ab.
\end{align*}

The computation of Nextword on inputs $M_2$, state $2$ and the initial
value $y = \varepsilon$ first searches using restricted subset queries
to find nonnegative integers
$k$ and $\ell$ such that there exist a prefix of length $k$ in $\lang{M_2}_{1,2}$ and
a period of length $\ell$ in $\lang{M_2}_{2,2}$ that yield an $\omega$-word accepted
by $M_2$ that is not in $\lang{M_1}$.  The value of $\ell$ is fixed, and a
word $v'$ of length $\ell$ is built up symbol by symbol calling Nextsymbol
to yield the result of Nextword.
In our example, $k = 1$ and $\ell = 2$ do not suffice
(because $b{((a+c)b)}^\omega$ is a subset of $\lang{M_1}$)
but $k = 2$ and $\ell = 2$ do
(because $(a+c)b{((a+c)b)}^\omega$ is not a subset of $\lang{M_1}$)
and $k = 1$ and $\ell =3$ also do
(because $b{((a+c)ab)}^\omega$ is not a subset of $\lang{M_1}$).
In this example, when $ab$, $cb$, $aab$ or $cab$ is returned by Nextword to Findperiod, the value $v_1$ suffices as the value of $v$ to be returned by Findperiod.
In more complex cases, repeated calls to Nextword may be necessary.

\commentout{

\appendix

\section{Representing a language as paths of a tree: proofs}

\begin{lem}{\ref{lemma-tree-nbw}}[restated]
	If $A_t$ is a regular $\omega$-tree automaton representing an $\omega$-tree $t$,
	then $\paths(t)$ is a safety language recognizable
	by an NBW acceptor $M$ with $|M| = |A_t|$.
\end{lem}

\begin{proof}%[Proof of Lemma~\ref{lemma-tree-nbw}]
	If $A_t = (Q,q_0,\delta,\tau)$, then we define $M = (Q,q_0,\delta',Q)$
	where
	\[\delta'(q,\sigma) = \{r \in Q \mid (\exists i \in D) (\delta(q,i) = r \wedge \tau(q,i) = \sigma)\}\]
	for all $q \in Q$ and $\sigma \in \Sigma$.
	That is, the $M$ transition on $q$ and $\sigma$ is defined to be all states reachable from $q$
	by a transition in $A_t$ labeled with $\sigma$.
	Note that all states of $M$ are accepting.

	If $w \in \paths(t)$, then there is a run $r_0, r_1, \ldots$ of $A_t$
	whose transitions are labeled by $w$, and this is a run of $M$ on $w$, so $w \in \lang{M}$.
	Conversely, if $w \in \lang{M}$, then there is some run $r_0, r_1, \ldots$ of $M$ on $w$,
	and this is a run of $A_t$ whose transitions are labeled with $w$,
	so $w \in \paths(t)$.
\end{proof}

\begin{lem}{\ref{lemma-nbw-tree}}[restated]
	Let $L$ be a safety language recognized by NBW acceptor
	$M = (Q, q_0, \delta, Q)$.
	Suppose the out-degree of $M$ is at most $d$.
	Then there is a $d$-ary regular $\omega$-tree $t$
	such that $\paths(t) = L$, and $t$ is representable by $A_t$ with
	$|A_t| = |M|$.
\end{lem}

\begin{proof}
	We may assume that every state of $M$ is accessible and has at least one
	transition defined.
	We define $A_t = (Q,q_0,\delta_t,\tau)$ over the alphabet $D = \{1,\ldots,d\}$
	as follows.
	For $q \in Q$, choose a surjective mapping $f_q$ from $D$ to $\transitions(q)$.
	Then for $q \in Q$ and $i \in D$, let $(\sigma,r) = f_q(i)$ and define
	$\delta_t(q,i) = r$ and $\tau(q,i) = \sigma$.

	If $w \in L$, then there is a run $r_0, r_1, \ldots$ of $M$ on $w$, and
	there is an infinite path in $A_t$ traversing the same states in which
	the labels are precisely $w$, so $w \in \paths(t)$.
	Conversely, if $w \in \paths(t)$, then there is an infinite path
	$\pi$ such that $t(\pi) = w$, and the sequence of states of $A_t$
	traversed by $w$ yields a run of $M$ on $w$, so $w \in L$.
\end{proof}

\begin{lem}{\ref{lemma-exp-blowup-nbw-dbw}}[restated]
	There exists a family of regular $\omega$-trees $t_1, t_2, \ldots$ such that
	$t_n$ can be represented by a regular $\omega$-tree automaton of size $n+2$,
	but the smallest DBW acceptor recognizing $\paths(t_n)$ has size at
	least $2^n$.
\end{lem}

\begin{proof}
	Let $\Sigma = \{a,b,c\}$ and let $L_n$ be ${(a + b + (a{(a+b)}^n c))}^{\omega}$.
	This is a safety language: $w \in L_n$ iff every occurrence of $c$ in $w$
	is preceded by a word of the form $a{(a+b)}^n$.
	There is a NBW acceptor $M_n$ of $n+2$ states recognizing $L_n$.
	The states are nonnegative integers in $[0,n+1]$, with $0$ the initial
	state,
	$\delta(0,a) = \{0,1\}$, $\delta(0,b) = 0$,
	$\delta(i,a) = \delta(i,b) = i+1$ for $1 \le i \le n$,
	and $\delta(n+1,c) = 0$.

	By Lemma~\ref{lemma-nbw-tree}, there is a ternary regular $\omega$-tree
	$t_n$ such that $\paths(t_n) = L_n$ and $t_n$ is represented by a
	regular $\omega$-tree automaton with $n+2$ states.
	However, any
	DBW acceptor recognizing $L_n$ must have enough states to distinguish
	all $2^n$ strings in ${(a+b)}^n$ in order to check the safety condition.
\end{proof}
}

\commentout{
\section{Restricted subset queries: proofs}\label{section-rsq-proofs}

To prove Theorem~\ref{thm-nfa-dfa-subset-reduction} we first construct an acceptor $M_{\ell,v}$ for
$\lang{M}[\ell,v]$, the length and prefix restricted
version of $\lang{M}$, given $M$, $\ell$ and $v$
as inputs.

\begin{lem}%
	\label{lemma-length-and-prefix-restriction}
	There is a polynomial time algorithm to construct
	an acceptor $M_{\ell,v}$ for $\lang{M}[\ell,v]$ given a NFW acceptor $M$, a nonnegative integer $\ell$ and
	a finite word $v$, such that
	\begin{enumerate}
		\item $M_{\ell,v}$ has at most one accepting state, which has
		no out-transitions,
		\item the out-degree of $M_{\ell,v}$ is at most the out-degree of $M$,
		\item $M_{\ell,v}$ is deterministic if $M$ is deterministic.
	\end{enumerate}
\end{lem}

\commentout{
\begin{lem}{\ref{lemma-length-and-prefix-restriction}}[restated]
	There is a polynomial time algorithm to construct
	an acceptor $M_{\ell,v}$ for $\lang{M}[\ell,v]$ given a nondeterministic
	finite acceptor $M$, a nonnegative integer $\ell$ and
	a finite word $v$, such that
	\begin{enumerate}
		\item $M_{\ell,v}$ has at most one accepting state, which has
		no out-transitions,
		\item the out-degree of $M_{\ell,v}$ is at most the out-degree of $M$,
		\item $M_{\ell,v}$ is deterministic if $M$ is deterministic.
	\end{enumerate}
\end{lem}}

\begin{proof}
	If $\ell < |v|$, then $\lang{M}[\ell,v] = \emptyset$, and the output
	$M_{\ell,v}$ is a one-state acceptor with no accepting states.
	Otherwise, assume $v = \sigma_1 \sigma_2 \cdots \sigma_k$
	and construct $M'$ to be the deterministic finite
	acceptor for $v \cdot \Sigma^{\ell - |v|}$ with states $0, 1, \ldots, \ell$
	where $0$ is the inital state, $\ell$ is the final state, and
	the transitions are $\delta(i,\sigma_{i+1}) = i+1$ for $0 \le i < k$
	and $\delta(i,\sigma) = i+1$ for $k \le i < \ell$ and $\sigma \in \Sigma$.

	Then $M_{\ell,v}$ is obtained by a standard product construction of $M$
	and $M'$ for the intersection $\lang{M} \cap \lang{M'}$, with the
	observation that no accepting state in the product has any
	out-transitions defined, so they may all be identified.
	It is straightforward to verify the required properties of $M_{\ell,v}$.
\end{proof}

\begin{proof}[Proof of Theorem~\ref{thm-nfa-dfa-subset-reduction}]
	For input $M$, define $M_{[\ell,v]}$ to be the finite
	acceptor constructed by the algorithm of
	Lemma~\ref{lemma-length-and-prefix-restriction} to recognize
	the length and prefix restricted language $\lang{M}[\ell,v]$.

	For $\ell = 0, 1, 2, \ldots$,
	ask a restricted subset query with $M_{[\ell,\varepsilon]}$,
	until the first query answered ``no''.
	At this point, $\ell$ is the shortest length
	of a counterexample in $(\lang{M} \setminus L)$.
	Then a counterexample $u$ of length $\ell$ is constructed
	symbol by symbol.

	Assume we have found a prefix $u'$ of a
	counterexample of length $\ell$ in $(\lang{M} \setminus L)$,
	with $|u'| < \ell$.
	For each symbol $\sigma \in \Sigma$
	we ask a restricted subset query with $M_{[\ell,u' \sigma]}$,
	until the first query answered ``no''.
	At this point, $u'$ is extended to $u' \sigma$.
	If the length of $u' \sigma$ is now $\ell$, then $u = u' \sigma$
	is the desired counterexample; otherwise, we
	continue extending $u'$.

	Note that if the input $M$ is deterministic, then all
	of the restricted subset queries are made with deterministic
	finite acceptors.
	If $L$ is recognized by a deterministic finite acceptor $T_L$, then
	the value of $\ell$ is bounded by $|M| \cdot |T_L|$,
	and the algorithm runs in time bounded by a polynomial in
	$|M|$ and $|T_L|$.
\end{proof}
}

\commentout{

\section{Correctness: proofs}

\subsection{Bounding the Acceptors}%
\label{subsection-bounding-inouts-to-rsq}

We turn to the representation (as NBW or DBW acceptors) of the languages used in restricted subset queries by $\R^{\omega}(M)$ and its subprocedures. We consider the size, out-degree, and time to construct the acceptors.

In $\R^{\omega}(M)$, there is a restricted subset query with
$M$ itself, and if that query is answered ``no'', a sequence of
restricted subset queries with $M_q$ for accepting states $q$ until an
answer of ``no''.
Clearly, if $M$ is an NBW acceptor, each $M_q$ is
an NBW acceptor of the same size and out-degree and
is easily constructed from $M$, and similarly
if $M$ is a DBW acceptor.

The restricted subset queries made in $\Findctrex$
and its subprocedures are of the form $P \cdot {(S)}^{\omega}$,
where $P$ is a length and prefix restricted version of $L_{q_0,q}$
and $S$ is a concatenation of (at most) a finite word and two length
and prefix restricted versions of $L_{q,q}$.
Thus we consider the operations of concatenation and
$\omega$-repetition of regular languages of finite words.

These operations are particularly simple for DFW or NFW
acceptors in \concept{special form}, that is,
containing at most one accepting state, which
has no out-transitions defined.
In general, any NFW acceptor can be converted to special form,
possibly at the cost of increasing its out-degree.
A regular language of finite words is recognized
by a DFW acceptor in special form iff it is
prefix-free.

However, if $M$ is an NBW (resp., DBW) acceptor, then the
finite word languages $L_{q_0,q}$ and $L_{q,q}$ are
recognized by easily constructed
NFW (resp., DFW) acceptors of size at most $|M|$
and out-degree at most the out-degree of $M$.
Lemma~\ref{lemma-length-and-prefix-restriction}
shows that the length and prefix restricted versions
of $L_{q_0,q}$ and $L_{q,q}$ are recognized by
NFW (resp., DFW) acceptors in special form
which may be constructed in time polynomial in
$|M|$, $\ell$, and $|v|$ and have out-degree
at most the out-degree of $M$.

\begin{lem}%
	\label{lemma-concatenation}
	Suppose $M_1$ is an NFW acceptor in special form
	and $M_2$ is an NFW or NBW acceptor.
	Then an acceptor $M$ for $\lang{M_1} \cdot \lang{M_2}$
	can be constructed such that
	\begin{enumerate}
		\item $|M| \le |M_1| + |M_2|$,
		\item the out-degree of $M$ is at most the maximum of out-degrees
		of $M_1$ and $M_2$,
		\item $M$ can be constructed in polynomial time,
		\item $M$ is deterministic if $M_1$ and $M_2$ are deterministic,
		\item $M$ is an NFW in special form if $M_2$ is an NFW in special form.
	\end{enumerate}
\end{lem}

\commentout{

\begin{lem}{\ref{lemma-concatenation}}[restated]
	Suppose $M_1$ is an NFW acceptor in special form
	and $M_2$ is an NFW or NBW acceptor.
	Then an acceptor $M$ for $\lang{M_1} \cdot \lang{M_2}$
	can be constructed such that
	\begin{enumerate}
		\item $|M| \le |M_1| + |M_2|$,
		\item the out-degree of $M$ is at most the maximum of out-degrees
		of $M_1$ and $M_2$,
		\item $M$ can be constructed in polynomial time,
		\item $M$ is deterministic if $M_1$ and $M_2$ are deterministic,
		\item $M$ is an NFW in special form if $M_2$ is an NFW in special form.
	\end{enumerate}
\end{lem}}

\begin{proof}
	Assume the states of $M_1$ and $M_2$ are disjoint.
	If $M_1$ has no accepting state then $\lang{M_1} = \emptyset$
	and we take $M$ to be a one-state acceptor of the same
	kind as $M_2$ that recognizes $\emptyset$.
	Otherwise, $M_1$ has one accepting state $q_1$ with
	no out transitions.
	If $q_1$ is also the initial state of $M_1$, then
	$\lang{M_1} = \{\varepsilon\}$ and we take $M = M_2$.

	Otherwise, $M$ is constructed by taking the union
	of the two machines, removing the state $q_1$ and
	redirecting all the transitions to $q_1$ in $M_1$
	to the initial state of $M_2$.
	The initial state of $M$ is the initial state of $M_1$,
	and the accepting states of $M$ are the accepting
	states of $M_2$.

	Then $M$ is an NFW acceptor if $M_2$ is an NFW
	acceptor, and an NBW acceptor if $M_2$ is an NBW
	acceptor.
	It is straightforward to verify the required properties of $M$.
\end{proof}

\begin{lem}%
	\label{lemma-omega-repetition}
	Suppose $M_1$ is an NFW acceptor in special form.
	Then an NBW acceptor $M$ for $\lang{M_1}^{\omega}$
	can be constructed such that
	\begin{enumerate}
		\item $|M| \le |M_1|$,
		\item the out-degree of $M$ is at most the out-degree of $M_1$,
		\item $M$ can be constructed in polynomial time,
		\item $M$ is deterministic if $M_1$ is deterministic.
	\end{enumerate}
\end{lem}

\commentout{
\begin{lem}{\ref{lemma-omega-repetition}}[restated]
	Suppose $M_1$ is an NFW acceptor in special form.
	Then an NBW acceptor $M$ for $\lang{M_1}^{\omega}$
	can be constructed such that
	\begin{enumerate}
		\item $|M| \le |M_1|$,
		\item the out-degree of $M$ is at most the out-degree of $M_1$,
		\item $M$ can be constructed in polynomial time,
		\item $M$ is deterministic if $M_1$ is deterministic.
	\end{enumerate}
\end{lem}}

\begin{proof}
	If $M_1$ has no accepting states then $\lang{M_1} = \emptyset$.
	Otherwise, $M_1$ has one accepting state with no out transitions.
	If the accepting state of $M_1$ is also its initial state,
	then $\lang{M_1} = \{\varepsilon\}$.
	In these two cases, $\lang{M_1}^{\omega} = \emptyset$ and we take
	$M$ to be an NBW acceptor with one state and no accepting states.

	Otherwise,
	we construct $M$ by removing from $M_1$ its unique accepting
	state $q_1$ and redirecting all the transitions into $q_1$ to the
	initial state of $M_1$.
	The initial state of $M_1$ becomes the unique accepting state
	of $M$.
	It is straightforward to verify the required properties of $M$.
\end{proof}

% The above, together with Lemmas \ref{lemma-concatenation} and \ref{lemma-omega-repetition}, stated in the Appendix, give us the following corollary for the procedure $\R^{\omega}$.
The above give us the following corollary for the procedure $\R^{\omega}$.

\begin{cor}%
	\label{lemma-nbw-dbw-inputs}
	When the input to $\R^{\omega}(M)$ is
	an NBW (resp., DBW) acceptor $M$,
	each $\RSQ$ can be made with
	an NBW (resp., DBW) acceptor
	whose out-degree is at most the out-degree of $M$ and can
	be constructed in time polynomial in $|M|$ and parameters
	giving the length restrictions and the lengths of any words
	that appear.
\end{cor}

\subsection{Length restrictions and time bounds}%
\label{subsection-length-restrictions-time-bounds}

%To prove Theorem~\ref{theorem-restricted-subset-nbw-dbw} we state and prove lemmas
We now turn to establish the correctness and running time
of the subprocedures.
The first two lemmas allow us to bound the parameters giving
the length restrictions in inputs to $\RSQ$.

\begin{lem}%
	\label{lemma-length-bound-k}
	Let $S \subseteq L_{q,q}$ and suppose
	$L_{q_0,q} \cdot {(S)}^{\omega} \setminus L \neq \emptyset$.
	Then
	for some $k < |M|\cdot|T_L|$ we have
	$L_{q_0,q}[k] \cdot {(S)}^{\omega} \setminus L \neq \emptyset$.
\end{lem}

\begin{proof}
	Let $u = \sigma_1 \cdots \sigma_k$ be chosen to be a
	shortest word in $L_{q_0,q}$ such that
	$u \cdot {(S)}^{\omega} \setminus L \neq \emptyset$.
	Then for some $s_1, s_2, \ldots$ from $S$, the $\omega$-word
	\[w = u \cdot s_1 \cdot s_2 \cdots\]
	is in $(L_{q_0,q} \cdot {(S)}^{\omega} \setminus L)$.

	There is an accepting run $r = r_0, r_1, \ldots$ of $M$
	on $w$.
	Let $t = t_0, t_1, \ldots$ be the unique run
	of the DBW acceptor $T_L$ on $w$, which is rejecting.
	Consider the sequence of pairs $(r_n,t_n)$ for $0 \le n \le |u|$.
	If $|u| \ge |M|\cdot|T_L|$, there will be a repeated pair,
	say $(r_i,t_i) = (r_j,t_j)$ for $i < j$.
	If we excise symbols $i+1$ to $j$ of $u$ to get $u'$ and
	the corresponding states from the runs $r$ and $t$ to
	get $r'$ and $t'$, we have
	\[w' = u' \cdot s_1 \cdot s_2 \cdots\]
	is accepted by $M$ (witnessed by $r'$) and rejected by $T_L$
	(witnessed by $t'$), so $u'$ is a shorter word such
	that $u' \cdot {(S)}^{\omega} \setminus L \neq \emptyset$,
	a contradiction.
\end{proof}

% \vspace{-5mm}

\begin{lem}%
	\label{lemma-length-bound-ell}
	Let $S \subseteq L_{q,q}$ and suppose
	$L_{q_0,q} \cdot {(S \cdot L_{q,q})}^{\omega} \setminus L \neq \emptyset$.
	Then for some $k, \ell < |M|\cdot|T_L|$, we have that
	$L_{q_0,q}[k] \cdot {(S \cdot L_{q,q}[\ell])}^{\omega} \setminus L \neq \emptyset$.
\end{lem}

\begin{proof}
	Let $w \in (L_{q_0,q} \cdot {(S \cdot L_{q,q})}^{\omega} \setminus L)$.
	The unique run of the DBW acceptor $T_L$ on $w$
	is rejecting, and does not visit an accepting state of $T_L$
	after some finite prefix.
	Because $S \subseteq L_{q,q}$, we may choose a sufficiently
	long prefix $u$ of $w$ such that $u \in L_{q_0,q}$ and when
	processing $w$, $T_L$ never visits an accepting state after
	reading the prefix $u$.

	Then $w$ may be factored as
	\[w = u (s_1 x_1) (s_2 x_2) \cdots,\]
	where each $s_n \in S$ and each $x_n \in L_{q,q}$.
	There is an accepting run $r = r_0, r_1, \ldots$ of $M$ on $w$,
	which we may assume visits the state $q$ after $u$, and
	also after every $s_n$ and every $x_n$.

	Consider the states $t_1, t_2, \ldots$ visited by $T_L$
	at the start of every group $(s_n x_n)$ when processing $w$.
	After at most $|T_L|$ groups, there must be a repeat,
	say $t_i = t_{i+p}$ for some $p > 0$.
	Let $j = i+p-1$ and consider the $\omega$-word
	\[w' = u \cdot (s_1 x_1) \cdots (s_{i-1} x_{i-1}) \cdot {((s_i x_i) \cdots (s_j x_j))}^{\omega}.\]
	There is an accepting run of $M$ on $w'$, and the unique
	run of $T_L$ on $w'$ is rejecting.
	Let
	\[u' = u \cdot (s_1 x_1) \cdots (s_{i-1} x_{i-1})\]
	and
	\[z = x_i \cdot (s_{i+1} x_{i+1}) \cdots (s_j x_j).\]
	Then $w' = u' \cdot {(s_i z)}^{\omega}$ and $u' \in L_{q_0,q}$ and $z \in L_{q,q}$.

	Consider an accepting run $r' = r_0', r_1', \ldots$ of $M$ on $w'$ that visits
	state $q$ after processing $u'$ and each occurrence of $s_i$ and $z$.
	Consider the unique run $t = t_0', t_1', \ldots$ of $T_L$ on $w'$, which
	is rejecting.
	As in the proof of Lemma~\ref{lemma-length-bound-k},
	if $|z| \ge |M|\cdot|T_L|$ then we may remove a segment
	of $z$ that produces a cycle in the pairs $(r_n',t_n')$.
	Thus, for some $\ell < |M|\cdot|T_L|$, we have
	\[L_{q_0,q} \cdot {(S \cdot L_{q,q}[\ell])}^{\omega} \setminus L \neq \emptyset.\]
	Applying Lemma~\ref{lemma-length-bound-k},
	there also exists $k < |M|\cdot|T_L|$ such
	that
	\[L_{q_0,q}[k] \cdot {(S \cdot L_{q,q}[\ell])}^{\omega} \setminus L \neq \emptyset.\]
	%\vspace{-12mm}
\end{proof}

We now prove the correctness and polynomial running time of
$\Findprefix$ and $\Findperiod$, which establishes the correctness
and polynomial running time of $\Findctrex$.

\begin{lem}%
	\label{lemma-find-prefix}
	Assume $v \in L_{q,q}$ is such that
	\[L_{q_0,q} \cdot {(v)}^{\omega} \setminus L \neq \emptyset.\]
	Then in time polynomial in $|M|$, $|T_L|$ and $|v|$,
	the procedure $\Findprefix(M,q,v)$ returns a word
	$u \in L_{q_0,q}$ such that
	\[{u(v)}^{\omega} \in (L_{q_0,q} \cdot {(v)}^{\omega} \setminus L).\]
\end{lem}

\begin{proof}
	The algorithm asks restricted subset queries about $L$ for
	$\ell = 0,1,2,\ldots$ to find the least $\ell$ such that
	\[L_{q_0,q}[\ell] \cdot {(v)}^{\omega} \setminus L \neq \emptyset.\]
	The value of $\ell$ is bounded by $|M| \cdot |T_L|$,
	by Lemma~\ref{lemma-length-bound-k}.
	It then searches symbol by symbol for a string $u$
	of length $\ell$ satisfying the required condition.
\end{proof}

% \vspace{-5mm}
The procedure $\Findperiod$ depends on the procedures
$\Nextword$ and $\Nextsymbol$.
The next lemma establishes the correctness and running
time of the procedure $\Nextsymbol$.

\begin{lem}%
	\label{lemma-next-symbol}
	Suppose $\ell$ is a positive integer,
	$y \in L_{q,q}$ or $y = \varepsilon$
	and $v' \in \Sigma^*$ is such that $|v'| < \ell$ and we have
	\[L_{q_0,q} \cdot {(y \cdot L_{q,q}[\ell,v'] \cdot L_{q,q})}^{\omega}
	\setminus L \neq \emptyset.\]
	Then in time polynomial in $|M|$, $|T_L|$, $|y|$ and $\ell$,
	$\Nextsymbol(M,q,y,\ell,v')$ finds
	a symbol $\sigma \in \Sigma$ such that
	\[L_{q_0,q} \cdot {(y \cdot L_{q,q}[\ell, v'\sigma] \cdot L_{q,q})}^{\omega}
	\setminus L \neq \emptyset.\]
\end{lem}
% \vspace{-9mm}
\begin{proof}
	Consider an $\omega$-word
	\[w = u (y v' x_1 y_1) (y v' x_2 y_2) (y v' x_3 y_3) \cdots,\]
	in
	\[L_{q_0,q} \cdot {(y \cdot L_{q,q}[\ell,v'] \cdot L_{q,q})}^{\omega}
	\setminus L,\]
	where $u \in L_{q_0,q}$, and for all $i$,
	$v' x_i \in L_{q,q}[\ell,v']$ and $y_i \in L_{q,q}$.
	Fix a particular accepting run of $M_q$ on $w$ that visits
	$q$ after $u$ and after every occurrence of $y$, $v'x_i$ and $y_i$
	in the factorization of $w$ above.

	Because in this run $q$ is visited infinitely many times,
	we may assume that the prefix $u$ is chosen so that $w$ visits
	no accepting state of $T_L$ after the prefix $u$ has been processed.
	Now consider the sequence $t_1, t_2, t_3, \ldots$ of states of $T_L$
	visited by $w$ at the start of every group $(y v' x_i y_i)$.
	This sequence must repeat states of $T_L$, say $t_i = t_{i+p}$ for some
	$p > 0$.
	Let $j = i + p - 1$ and consider the word
	\[w' = u (y v' x_1 y_1) \cdots (y v' x_{i-1} y_{i-1}) {((y v' x_i y_i) \cdots
	(y v' x_j y_j))}^{\omega}.\]
	Clearly, $w' \not\in L$ because after the prefix $u$, $w'$ visits only
	rejecting states of $T_L$.

	Consider the cycle
	\[((y v' x_i y_i) \cdots (y v' x_j y_j)).\]
	If it is of length $1$ (that is $i = j$), then we may duplicate the
	one group $(y v' x_i y_i)$ to make a cycle of length $2$ without
	changing $w'$.
	Then we may factor the cycle as
	\[((y v' x_i y_i) z),\]
	where
	\[z = (y v' x_{i+1} y_{i+1}) \cdots (y v' x_j y_j),\]
	and $z \in L_{q,q}$.
	Choosing $\sigma$ to be the first symbol of $x_i$ and $x_i'$ to be
	the rest of $x_i$, we have
	\[w' = u' {(y v'\sigma x_i' z)}^{\omega},\]
	where $u' = u (y v' x_1 y_1) \cdots (y v' x_{i-1} y_{i-1})$ and
	therefore
	\[w' \in L_{q_0,q} \cdot
	{(y \cdot L_{q,q}[\ell,v' \sigma] \cdot L_{q,q})}^{\omega}.\]

	Thus we are guaranteed that some symbol $\sigma$ with the required
	property exists.
	Lemma~\ref{lemma-length-bound-ell} (with $S = \{y\} \cdot L_{q,q}[\ell,v'\sigma]$)
	shows that there exist
	$k, m < |M|\cdot|T_L|$ such that
	\[L_{q_0,q}[k] \cdot {(y \cdot L_{q,q}[\ell,v' \sigma] \cdot L_{q,q}[m])}^{\omega}
	\setminus L \neq \emptyset.\]
	Thus, the search for $k$ and $m$ in the procedure $\Nextsymbol$
	can enumerate such pairs $(k,m)$ in increasing order of their
	maximum and try all $\sigma \in \Sigma$ for each pair until
	a suitable symbol $\sigma$ is found to return.
	This process runs in time polynomial in $|M|$, $|T_L|$, $|y|$
	and $\ell$.
\end{proof}

\begin{lem}%
	\label{lemma-Nextword}
	Suppose $y \in L_{q,q}$ or $y = \varepsilon$ is such that
	\[L_{q_0,q} \cdot {(y \cdot L_{q,q})}^{\omega} \setminus L \neq \emptyset.\]
	Then in time bounded by a polynomial in $|M|$, $|T_L|$ and $|y|$,
	$\Nextword(M,q,y)$ returns a word $v' \in L_{q,q}$ of length bounded by
	$|M| \cdot |T_L|$ such that
	\[L_{q_0,q} \cdot {(y v' \cdot L_{q,q})}^{\omega} \setminus L \neq \emptyset.\]
\end{lem}

\begin{proof}
	By Lemma~\ref{lemma-length-bound-ell} (with $S = \{y\}$),
	the search for $k$ and $\ell$ will succeed with both less than
	$|M| \cdot |T_L|$.
	Then $\ell$ calls to the procedure $\Nextsymbol$ will
	produce the required word $v'$ of length $\ell$.
\end{proof}

% \vspace{-3mm}
The next lemma shows that $\Findperiod$ calls $\Nextword$
at most $|T_L|$ times.

\begin{lem}%
	\label{lemma-loop-on-vs}
	Suppose $v_1, v_2, \ldots, v_n \in L_{q,q}$ are such that
	\[L_{q_0,q} \cdot {(v_1 v_2 \cdots v_n \cdot L_{q,q})}^{\omega} \setminus L \neq \emptyset.\]
	Also suppose that the number of states of $T_L$ is less than $n$.
	Then there exist integers $i$ and $j$ with $1 \le i \le j \le n$ such that
	\[L_{q_0,q} \cdot {(v_i v_{i+1} \cdots v_j)}^{\omega} \setminus L \neq \emptyset.\]
\end{lem}
\vspace{-5mm}
\begin{proof}
	Consider an $\omega$-word
	\[w = u (v_1 v_2 \cdots v_n \cdot  y_1) (v_1 v_2 \cdots v_n \cdot y_2)
	(v_1 v_2 \cdots v_n \cdot y_3) \cdots,\]
	in
	\[L_{q_0,q} \cdot {(v_1 v_2 \cdots v_n \cdot L_{q,q})}^{\omega} \setminus L,\]
	where $u \in L_{q_0,q}$ and each $y_i \in L_{q,q}$.
	Fix a particular accepting run of $M$ on $w$ in which state $q$ is
	visited after each of the individual segments of $w$.

	Considering the sequence of states of $T_L$
	that are visited in processing $w$,
	there must be some finite prefix after which only
	rejecting states of $T_L$ are visited.
	Because the run of $M$ on $w$ visits $q$ infinitely often,
	we may assume
	that the prefix $u$ of $w$ extends past the last visit
	of $T_L$ to an accepting state.
	Now consider the states $t_1, t_2, \ldots, t_n$ visited by $T_L$
	at the start of each of the first occurrences of $v_1, v_2, \ldots, v_n$,
	respectively.
	Because $n$ is greater than the number of states of $T_L$, some
	state of $T_L$ must repeat in this sequence, say $t_i = t_{i+p}$
	for some $p > 0$.
	Let $j = i+p-1$ and consider the $\omega$-word
	\[w' = u v_1 v_2 \cdots v_{i-1} {(v_i v_{i+1} \cdots v_j)}^{\omega}.\]
	Then $w' \in L_{q_0,q} \cdot {(v_i v_{i+1} \cdots v_j)}^{\omega}$ because
	$u' = u v_1 v_2 \cdots v_{i-1}$ is in $L_{q_0,q}$.
	However, because only rejecting
	states of $T_L$ are visited in the repeating portion of the word,
	$w' \not\in L$.
\end{proof}

The final lemma establishes the correctness and polynomial running time
of the procedure $\Findperiod$.

\begin{lem}%
	\label{lemma-Findperiod}
	Suppose $L_{q_0,q} \cdot {(L_{q,q})}^{\omega} \setminus L \neq \emptyset$.
	Then, in polynomial time in $|M|$ and $|T_L|$, the procedure
	$\Findperiod(M,q)$ with restricted query access to $L$
	returns a period word $v$ satisfying the condition
	\[L_{q_0,q} \cdot {(v)}^{\omega} \setminus L \neq \emptyset.\]
\end{lem}

\begin{proof}
	The preconditions of $\Findperiod$ are satisfied, and it
	repeatedly calls $\Nextword(M,q,y)$, with
	$y = \varepsilon$, then $y = v_1$, then $y = v_1 v_2$, and so
	on, where $v_{n+1}$ is the value returned by the call with
	$y = v_1 v_2 \cdots v_n$.
	Each of these calls satisfies the preconditions of $\Nextword$,
	so after at most $|T_L|$ such calls, $\Findperiod$ returns
	a correct period word $v$, by Lemma~\ref{lemma-loop-on-vs}.
\end{proof}

% \vspace{-5mm}
These lemmas combine to prove
Theorem~\ref{theorem-restricted-subset-nbw-dbw},
giving a polynomial time reduction of unrestricted subset queries to
restricted subset queries for NBW acceptors
(resp., DBW acceptors.)

}

% \subsection{Correctness of $\ATrees$}

\commentout{
\begin{lem}{\ref{lemma-regular-tree-inputs}}[restated]
	When $\ATrees$ simulates $\Findctrex(M',q)$ in response
	to a negative counterexample $t$,
	every $\RSQ$ can be simulated with a
	$\MQ$ about $\Trees_d(L)$.
\end{lem}
\vspace{-3mm}
\begin{proof}
	In the learning algorithm $\ATrees$, when a negative
	counterexample $t$ represented by $A_t$ is received,
	the algorithm simulates the procedure
	$\Findctrex(M',q)$ where $M' = \acceptor(A_t)$ is a
	NBW acceptor recognizing $\paths(t)$ and
	$q$ is an accepting state of $M'$.
	Note that by Lemma~\ref{lemma-tree-nbw},
	$|M'| \le |A_t|$ and
	the out-degree of $M'$ is at most $d$, the arity of $t$.

	Then Corollary~\ref{lemma-nbw-dbw-inputs} shows
	that each $\RSQ$ is with a NBW acceptor that
	has out-degree at most the out-degree of $M'$,
	which is at most $d$.
	Also, each such NBW acceptor
	can be constructed in time polynomial in $|M'|$ and
	parameters giving the length restrictions and the
	lengths of any words that appear.

	The final observation is that
	each such $\RSQ$ is made with an NBW acceptor
	that recognizes a safety language of
	the form $P \cdot {(S)}^{\omega}$, where
	$P$ and $S$ are each languages of fixed-length
	finite words.
	Then, by Lemma~\ref{lemma-nbw-tree}
	each such $\RSQ(N)$
	can be simulated by $\ATrees$ using
	$\MQ(\tree_d(N))$ about $\Trees_d(L)$.
\end{proof}
}

\commentout{
\begin{lem}{\ref{lemma-length-bound-k}}[restated]
	Let $S \subseteq L_{q,q}$ and suppose
	$L_{q_0,q} \cdot {(S)}^{\omega} \setminus L \neq \emptyset$.
	Then
	for some $k < |M|\cdot|T_L|$ we have
	$L_{q_0,q}[k] \cdot {(S)}^{\omega} \setminus L \neq \emptyset$.
\end{lem}

\begin{lem}{\ref{lemma-find-prefix}}[restated]
	Assume $v \in L_{q,q}$ is such that
	\[L_{q_0,q} \cdot {(v)}^{\omega} \setminus L \neq \emptyset.\]
	Then in time polynomial in $|M|$, $|T_L|$ and $|v|$,
	the procedure $\Findprefix(M,q,v)$ returns a word
	$u \in L_{q_0,q}$ such that
	\[{u(v)}^{\omega} \in (L_{q_0,q} \cdot {(v)}^{\omega} \setminus L).\]
\end{lem}

\begin{lem}{\ref{lemma-Nextword}}[restated]
	Suppose $y \in L_{q,q}$ or $y = \varepsilon$ is such that
	\[L_{q_0,q} \cdot {(y \cdot L_{q,q})}^{\omega} \setminus L \neq \emptyset.\]
	Then in time bounded by a polynomial in $|M|$, $|T_L|$ and $|y|$,
	$\Nextword(M,q,y)$ returns a word $v' \in L_{q,q}$ of length bounded by
	$|M| \cdot |T_L|$ such that
	\[L_{q_0,q} \cdot {(y \cdot v' L_{q,q})}^{\omega} \setminus L \neq \emptyset.\]
\end{lem}

\begin{lem}{\ref{lemma-Findperiod}}[restated]
	Suppose $L_{q_0,q} \cdot {(L_{q,q})}^{\omega} \neq \emptyset$.
	Then, in polynomial time in $|M|$ and $|T_L|$, the procedure
	$\Findperiod(M,q)$ with restricted query access to $L$
	returns a period word $v$ satisfying the condition
	\[L_{q_0,q} \cdot {(v)}^{\omega} \setminus L \neq \emptyset.\]
\end{lem}}

\end{document}